\newtheorem{de}{Definition}[section]
\newtheorem{theo}{Theorem}[section]
\newtheorem{prop}[theo]{Proposition}
\newtheorem{cor}[theo]{Corollary}
\newtheorem{lem}[theo]{Lemma}
\newtheorem{con}{Conjecture}
\newtheorem{quest}[theo]{Question}
\title{On $S$-packing edge-colorings of cubic graphs}
\author[1,2]{Nicolas Gastineau}
\author[1]{Olivier Togni}
\affil[1]{LE2I FRE2005, CNRS, Arts et Métiers, Université Bourgogne Franche-Comté, F-21000 Dijon, France}
\affil[2]{PSL, Université Paris-Dauphine, LAMSADE UMR CNRS 7243, France}
\begin{document}
\maketitle
\begin{abstract}
Given a non-decreasing sequence $S=(s_1,s_2, \ldots, s_k)$ of positive integers, an {\em $S$-packing edge-coloring} of a graph $G$ is a partition of the edge set of $G$ into $k$ subsets $\{X_1,X_2, \ldots, X_k\}$ such that for each  $1\le i\le k$, the distance between two distinct edges $e,e'\in X_i$ is at least $s_i+1$. This paper studies $S$-packing edge-colorings of cubic graphs. Among other results, we prove that cubic graphs having a $2$-factor are $(1,1,1,3,3)$-packing edge-colorable, $(1,1,1,4,4,4,4,4)$-packing edge-colorable and $(1,1,2,2,2,2,2)$-packing edge-colorable. We determine sharper results for cubic graphs of bounded oddness and $3$-edge-colorable cubic graphs and we propose many open problems.
\end{abstract}

\section{Introduction}
All the graphs considered in this paper are simple and connected, unless stated otherwise.
A proper {\em edge-coloring} of a graph $G$ is a mapping which associates a color (an integer) to each edge such that adjacent edges get distinct colors. In such a coloring, each color class is a matching (also called stable set of edges or 1-packing). 
According to Vizing's famous theorem, every cubic graph needs either 3 or 4 colors for a proper edge-coloring. The bridgeless cubic graphs (often with other restrictions) which are not edge-colorable with three colors are called {\em snarks}~\cite{BGHM,Luk}. 

As an extension, a $d$-strong edge-coloring of $G$ is a proper coloring such that edges at distance at most $d$ have distinct colors, i.e., a partition of $E(G)$ into sets of edges at pairwise distance at least $d+1$, also called $d$-packings. A $2$-strong edge-coloring is simply called a strong edge-coloring and a $2$-packing of edges is an induced matching.
Strong edge-colorings of cubic graphs retain a lot of attention since decades~\cite{And,Fou,Hoc}.

The aim of this paper is to study a mixing of these two types of edge-colorings, i.e., colorings of (sub)cubic graphs in which some color classes are 1-packings while other are $d$-packings, $d\ge 2$. More formally, given a non-decreasing sequence $S=(s_1,s_2, \ldots, s_k)$ of positive integers, an {\em $S$-packing edge-coloring} of a graph $G$ is a partition of the edge set of $G$ into $k$ subsets $\{X_1,X_2, \ldots, X_k\}$ such that each $X_i$ is an $s_i$-packing, $1\le i\le k$.

The vertex analogous of $S$-packing edge-coloring has been first studied by Goddard and Xu~\cite{GX1,GX2} and then recently on cubic graphs~\cite{Bal17,BresKlav,BKRW,Gasto,GHT}. The particular case of $(1,2,\ldots,k)$-packing coloring has been the subject of many papers (see~\cite{FiCo,Fin,nico}) since its introduction by Goddard et al.~\cite{God}.

For an edge-coloring, a color for which the color class is an $r$-packing is said to be a color of {\em radius} $r$.
In order to avoid long subsequences of the same integer in sequences of colors, we sometimes use the exponent to denote repetitions of an integer, e.g., $(1^2,2^5)=(1,1,2,2,2,2,2)$.
 Also, to simplify, an $S$-packing edge-coloring will be simply called an {\em $S$-coloring} in the remainder of the paper. A $(1,1,1,2)$-coloring and a $(1,1,2,2,2)$-coloring of the Petersen graph are illustrated in Figure \ref{figmp} (one can check that the Petersen graph is not $(1,1,2,2)$-colorable).

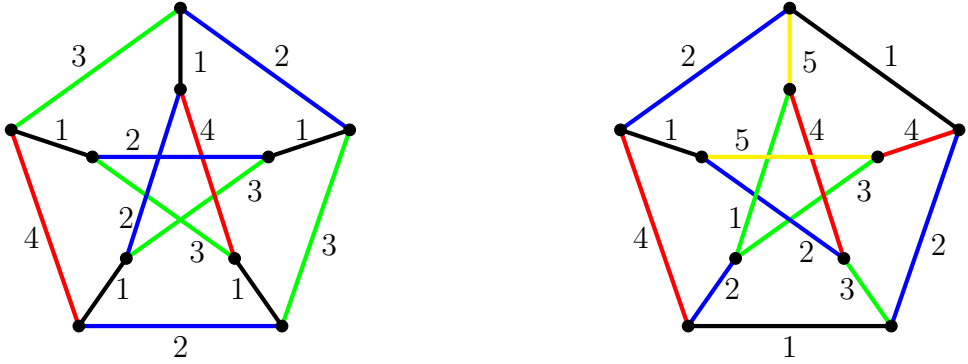
\begin{figure}[t]
\begin{center}
\begin{tikzpicture}[scale=1.35]
\draw[ultra thick,color=blue] (0,-0.2) -- (3*2/3,-0.2);
\draw[ultra thick, color=red] (0,-0.2) -- (-1*2/3,2.6*2/3);
\draw[ultra thick,color=green] (3*2/3,-0.2) -- (4*2/3,2.6*2/3);
\draw[ultra thick,color=green] (-1*2/3,2.6*2/3) -- (1.5*2/3,4.4*2/3);
\draw[ultra thick,color=blue] (4*2/3, 2.6*2/3) -- (1.5*2/3,4.4*2/3);
\draw[ultra thick,color=black] (1.5*2/3,3.2*2/3) -- (1.5*2/3,4.4*2/3);
\draw[ultra thick,color=black] (-1*2/3,2.6*2/3) -- (0.2*2/3, 2.2*2/3);
\draw[ultra thick,color=black] (4*2/3, 2.6*2/3) -- (2.8*2/3, 2.2*2/3);
\draw[ultra thick,color=black] (0,-0.2) -- (0.7*2/3,0.7*2/3);
\draw[ultra thick,color=black] (3*2/3,-0.2) -- (2.3*2/3,0.7*2/3);
\draw[ultra thick,color=green]  (0.7*2/3,0.7*2/3) -- (2.8*2/3,2.2*2/3);
\draw[ultra thick,color=green]   (2.3*2/3,0.7*2/3) -- (0.2*2/3,2.2*2/3);
\draw[ultra thick,color=blue] (0.7*2/3,0.7*2/3) -- (1.5*2/3,3.2*2/3);
\draw[ultra thick, color=red]  (2.3*2/3,0.7*2/3) -- (1.5*2/3,3.2*2/3);
\draw[ultra thick,color=blue]  (0.2*2/3,2.2*2/3) -- (2.8*2/3,2.2*2/3);
\node at (0,-0.2) [ circle,fill=black,draw=black,scale=0.4] {};
\node at (-1*2/3,2.6*2/3) [ circle,fill=black,draw=black,scale=0.4] {};
\node at (3*2/3,-0.2) [ circle,fill=black,draw=black,scale=0.4] {};
\node at (4*2/3,2.6*2/3)[ circle,fill=black,draw=black,scale=0.4] {};
\node at (1.5*2/3,4.4*2/3) [ circle,fill=black,draw=black,scale=0.4] {};
\node at (1.5*2/3,3.2*2/3)[ circle,fill=black,draw=black,scale=0.4] {};
\node at (0.7*2/3,0.7*2/3) [ circle,fill=black,draw=black,scale=0.4] {};
\node at (2.8*2/3,2.2*2/3) [ circle,fill=black,draw=black,scale=0.4] {};
\node at (0.2*2/3,2.2*2/3) [ circle,fill=black,draw=black,scale=0.4] {};
\node at (2.3*2/3,0.7*2/3)[ circle,fill=black,draw=black,scale=0.4] {};
\node at (1.5*2/3,-0.4){$2$};
\node at (3.7*2/3,0.9*2/3){$3$};
\node at (-0.7*2/3,1*2/3){$4$};
\node at (0*2/3,3.7*2/3){$3$};
\node at (3*2/3,3.7*2/3){$2$};
\node at (0.65*2/3,0.25*2/3){$1$};
\node at (2.35*2/3,0.25*2/3){$1$};
\node at (-0.25*2/3,2.6*2/3){$1$};
\node at (3.3*2/3,2.6*2/3){$1$};
\node at (1.8*2/3,3.6*2/3){$1$};
\node at (.8*2/3,2.45*2/3){$2$};
\node at (1.9*2/3,2.6*2/3){$4$};
\node at (0.7*2/3,1.3*2/3){$2$};
\node at (1.75*2/3,.8*2/3){$3$};
\node at (2.6*2/3,1.7*2/3){$3$};

\draw[ultra thick,color=black] (0+6,-0.2) -- (3*2/3+6,-0.2);
\draw[ultra thick, color=red] (0+6,-0.2) -- (-1*2/3+6,2.6*2/3);
\draw[ultra thick,color=blue] (3*2/3+6,-0.2) -- (4*2/3+6,2.6*2/3);
\draw[ultra thick,color=blue] (-1*2/3+6,2.6*2/3) -- (1.5*2/3+6,4.4*2/3);
\draw[ultra thick,color=black] (4*2/3+6, 2.6*2/3) -- (1.5*2/3+6,4.4*2/3);
\draw[ultra thick,color=yellow] (1.5*2/3+6,3.2*2/3) -- (1.5*2/3+6,4.4*2/3);
\draw[ultra thick,color=black] (-1*2/3+6,2.6*2/3) -- (0.2*2/3+6, 2.2*2/3);
\draw[ultra thick,color=red] (4*2/3+6, 2.6*2/3) -- (2.8*2/3+6, 2.2*2/3);
\draw[ultra thick,color=blue] (6,-0.2) -- (0.7*2/3+6,0.7*2/3);
\draw[ultra thick,color=green] (3*2/3+6,-0.2) -- (2.3*2/3+6,0.7*2/3);
\draw[ultra thick,color=green]  (0.7*2/3+6,0.7*2/3) -- (2.8*2/3+6,2.2*2/3);
\draw[ultra thick,color=blue]   (2.3*2/3+6,0.7*2/3) -- (0.2*2/3+6,2.2*2/3);
\draw[ultra thick,color=green] (0.7*2/3+6,0.7*2/3) -- (1.5*2/3+6,3.2*2/3);
\draw[ultra thick, color=red]  (2.3*2/3+6,0.7*2/3) -- (1.5*2/3+6,3.2*2/3);
\draw[ultra thick,color=yellow]  (0.2*2/3+6,2.2*2/3) -- (2.8*2/3+6,2.2*2/3);
\node at (6,-0.2) [ circle,fill=black,draw=black,scale=0.4] {};
\node at (-1*2/3+6,2.6*2/3) [ circle,fill=black,draw=black,scale=0.4] {};
\node at (3*2/3+6,-0.2) [ circle,fill=black,draw=black,scale=0.4] {};
\node at (4*2/3+6,2.6*2/3)[ circle,fill=black,draw=black,scale=0.4] {};
\node at (1.5*2/3+6,4.4*2/3) [ circle,fill=black,draw=black,scale=0.4] {};
\node at (1.5*2/3+6,3.2*2/3)[ circle,fill=black,draw=black,scale=0.4] {};
\node at (0.7*2/3+6,0.7*2/3) [ circle,fill=black,draw=black,scale=0.4] {};
\node at (2.8*2/3+6,2.2*2/3) [ circle,fill=black,draw=black,scale=0.4] {};
\node at (0.2*2/3+6,2.2*2/3) [ circle,fill=black,draw=black,scale=0.4] {};
\node at (2.3*2/3+6,0.7*2/3)[ circle,fill=black,draw=black,scale=0.4] {};
\node at (1.5*2/3+6,-0.4){$1$};
\node at (3.7*2/3+6,0.9*2/3){$2$};
\node at (-0.7*2/3+6,1*2/3){$4$};
\node at (0*2/3+6,3.7*2/3){$2$};
\node at (3*2/3+6,3.7*2/3){$1$};
\node at (0.65*2/3+6,0.25*2/3){$2$};
\node at (2.35*2/3+6,0.25*2/3){$3$};
\node at (-0.25*2/3+6,2.6*2/3){$1$};
\node at (3.3*2/3+6,2.6*2/3){$4$};
\node at (1.8*2/3+6,3.6*2/3){$5$};
\node at (.8*2/3+6,2.45*2/3){$5$};
\node at (1.9*2/3+6,2.6*2/3){$4$};
\node at (0.7*2/3+6,1.3*2/3){$1$};
\node at (1.75*2/3+6,.8*2/3){$2$};
\node at (2.6*2/3+6,1.7*2/3){$3$};

\end{tikzpicture}
\end{center}
\caption{A $(1,1,1,2)$-coloring (on the left, with colors 1, 2 and 3 of radius 1 and color 4 of radius $2$) and a $(1,1,2,2,2)$-coloring (on the right, with colors 1 and 2 of radius 1 and colors 4 and 5 of radius $2$) of the Petersen graph.}
\label{figmp}
\end{figure}

Let $G$ be a graph and $A\subseteq E(G)$. By $G^{k}[A]$, we denote the graph with vertex set $A$ and edge set $\{ee'\in E(G) |\ e\in A,\ e'\in A,\ d_G(e,e')\le k\}$, where $d_G(e,e')$ is the usual distance between the two edges $e$ and $e'$ in $G$.
We recall that a {\em $2$-factor} of $G$ is a spanning subgraph of $G$ that consists in a disjoint union of cycles.
For a cubic graph $G$ having a $2$-factor, the {\em oddness} of $G$ is the minimum number of odd cycle among all $2$-factors of $G$. According to Petersen's theorem, every bridgeless cubic graph has a $2$-factor.

\begin{de}
For a graph $G$ with a $2$-factor $\mathcal{F}$ and a set $A\subseteq E(\mathcal{F})$, we use the following notation:
\begin{enumerate}
\item[i)] $A$ is of type I if it contains exactly one edge per odd cycle of $\mathcal{F}$ and no edge of any even cycle of $\mathcal{F}$;
\item[ii)] $A$ is of type II  if no two edges of $A$ are adjacent in $G$ and if $A$ contains $\lfloor n/2 \rfloor$ edges in every cycle of length $n$ from $\mathcal{F}$, $n\ge 3$.
\end{enumerate}
\end{de}
These two definitions will be used several times in the paper in order to describe the edges that remain to be colored in a cubic graph in which a maximum number of edges are colored with one, two or three colors of radius $1$. Notice that a set of type I is also called an odd cycle (edge) transversal of $\mathcal{F}$.

\begin{table}[h]
   \caption{\label{sum} The minimum integer $n=\ell+m$ in order that all cubic graphs (and all $3$-edge-colorable cubic graphs) having a $2$-factor are $(1^\ell,k^{m})$-colorable (the bold numbers represent the exact values of $n$ and a pair of two integers $a$-$b$ represents a lower bound and an upper bound on $n$).}
\begin{center}
\begin{tabular}{|c|c|c|c|c|c|c|}
  \hline
\ Class \ & \multicolumn{3}{c|}{cubic graphs}  &  \multicolumn{3}{c|}{3-edge-colorable}  \\
  & \multicolumn{3}{c|}{}  &  \multicolumn{3}{c|}{cubic graphs}  \\
\hline
  $k \backslash \ell$ & \ \ \ \ \ \ 1 \ \ \ \ \ \ & \ \ \ \ \ \ 2 \ \ \ \ \ \ & \ \ \ \ \ \ 3 \ \ \ \ \ \ &  \ \ \ \ \ \ 1 \ \ \ \ \ \ & \ \ \ \ \ \ 2 \ \ \ \ \ \ & \ \ \ \ \ \ 3 \ \ \ \ \ \ \\
  \hline
  2 & 8-10 & 6-7 & \ \ \ \ \ \textbf{4}~\cite{Fou2}& 6-9 & 5-6 & \textbf{3}  \\
  3 & 15-21 & 9-13 & \textbf{5} & 15-19 & 9-11 & \textbf{3}  \\
  4 & 31-48  & 17-28 & 5-8 & 31-43 & 17-23 & \textbf{3}  \\
  \hline
\end{tabular}
\end{center}
\end{table}
As any subcubic graph $H$ is the subgraph of a cubic graph $G$ and as $d_H(e,e')\ge d_G(e,e')$ for any two edges $e,e'\in E(H)$, then any $S$-coloring of $G$ is also an $S$-coloring of $H$. Therefore, the results of this paper that are not concerned with oddness can be easily extended to subcubic graphs.

Table \ref{sum} summarizes the main results proven in this paper. Note that the lower bounds, except for the sequence $(1,1,1,2)$ and the sequences $(1,1,1,\ldots)$ for $3$-edge-colorable graphs, have been determined by computer.

The paper is organized as follows. In section 2, we begin by presenting structural results about sets of type I and the relation between $S$-coloring and sets of type I and II.
We prove in Section 3 that cubic graphs having a $2$-factor are $(1,1,1,3,3)$-colorable and conjecture that all cubic graphs are $(1,1,1,3)$-colorable, except the Petersen and Tietze graphs. This conjecture is proven for some restricted classes of snarks. Similar results are given for sequences of type $(1,1,1,4,\ldots,4)$. In Section 4 we study $(1,1,k,\ldots, k)$-colorings and prove that all cubics graphs having a $2$-factor are $(1^2,2^5)$-colorable and also colorable with two colors of radius one and a finite number of colors of radius $k$, for any $k\ge 2$. In Section 5 we prove that for a fixed integer $k$, every cubic graph having a $2$-factor is $(1,k,\ldots, k)$-colorable with a finite number of occurrences of $k$ in the sequence. Finally, in Section 6 we prove that for every positive integer $k$, there exists a subcubic graph which is not $(1,2,\ldots,k)$-colorable.

\section{Sets of type I and II}
\begin{figure}[t]
\begin{center}
\begin{tikzpicture}[scale=1.15]
\draw[ultra thick, color=blue] (0,-0.2) -- (3*2/3,-0.2);
\draw[ultra thick,color=blue] (0,-0.2) -- (-1*2/3,2.6*2/3);
\draw[ultra thick, dashed,color=blue] (3*2/3,-0.2) -- (4*2/3,2.6*2/3);
\draw[ultra thick,color=blue] (-1*2/3,2.6*2/3) -- (1.5*2/3,4.4*2/3);
\draw[ultra thick, color=blue] (4*2/3, 2.6*2/3) -- (1.5*2/3,4.4*2/3);
\draw[color=black] (1.5*2/3,3.2*2/3) -- (1.5*2/3,4.4*2/3);
\draw[color=black] (-1*2/3,2.6*2/3) -- (0.2*2/3, 2.2*2/3);
\draw[color=black] (0,-0.2) -- (0.7*2/3,0.7*2/3);
\draw[ultra thick,color=blue]  (0.7*2/3,0.7*2/3) -- (2.8*2/3,2.2*2/3);
\draw[ultra thick,color=blue]   (2.3*2/3,0.7*2/3) -- (0.2*2/3,2.2*2/3);
\draw[ultra thick,color=blue] (0.7*2/3,0.7*2/3) -- (1.5*2/3,3.2*2/3);
\draw[ultra thick,color=blue]  (2.3*2/3,0.7*2/3) -- (1.5*2/3,3.2*2/3);
\draw[ultra thick,color=blue]  (0.2*2/3,2.2*2/3) -- (2.8*2/3,2.2*2/3);
\node at (0,-0.2) [ circle,fill=black,draw=black,scale=0.4] {};
\node at (-1*2/3,2.6*2/3) [ circle,fill=black,draw=black,scale=0.4] {};
\node at (3*2/3,-0.2) [ circle,fill=black,draw=black,scale=0.4] {};
\node at (4*2/3,2.6*2/3)[ circle,fill=black,draw=black,scale=0.4] {};
\node at (1.5*2/3,4.4*2/3) [ circle,fill=black,draw=black,scale=0.4] {};
\node at (1.5*2/3,3.2*2/3)[ circle,fill=black,draw=black,scale=0.4] {};
\node at (0.7*2/3,0.7*2/3) [ circle,fill=black,draw=black,scale=0.4] {};
\node at (2.8*2/3,2.2*2/3) [ circle,fill=black,draw=black,scale=0.4] {};
\node at (0.2*2/3,2.2*2/3) [ circle,fill=black,draw=black,scale=0.4] {};
\node at (2.3*2/3,0.7*2/3)[ circle,fill=black,draw=black,scale=0.4] {};

\draw[ultra thick,color=blue] (0+5,-0.2) -- (3*2/3+5,-0.2);
\draw[ultra thick, color=blue] (0+5,-0.2) -- (-1*2/3+5,2.6*2/3);
\draw[ultra thick,color=blue] (3*2/3+5,-0.2) -- (4*2/3+5,2.6*2/3);
\draw[ultra thick,dashed,color=blue] (-1*2/3+5,2.6*2/3) -- (1.5*2/3+5,4.4*2/3);
\draw[ultra thick,color=blue] (4*2/3+5, 2.6*2/3) -- (1.5*2/3+5,4.4*2/3);
\draw[color=black] (1.5*2/3+5,3.2*2/3) -- (1.5*2/3+5,4.4*2/3);
\draw[color=black] (4*2/3+5, 2.6*2/3) -- (2.8*2/3+5, 2.2*2/3);
\draw[color=black] (3*2/3+5,-0.2) -- (2.3*2/3+5,0.7*2/3);
\draw[ultra thick,color=blue]  (0.7*2/3+5,0.7*2/3) -- (2.8*2/3+5,2.2*2/3);
\draw[ultra thick,color=blue]   (2.3*2/3+5,0.7*2/3) -- (0.2*2/3+5,2.2*2/3);
\draw[ultra thick,dashed,color=blue] (0.7*2/3+5,0.7*2/3) -- (1.5*2/3+5,3.2*2/3);
\draw[ultra thick,color=blue]  (2.3*2/3+5,0.7*2/3) -- (1.5*2/3+5,3.2*2/3);
\draw[ultra thick,color=blue]  (0.2*2/3+5,2.2*2/3) -- (2.8*2/3+5,2.2*2/3);
\node at (0+5,-0.2) [ circle,fill=black,draw=black,scale=0.4] {};
\node at (-1*2/3+5,2.6*2/3) [ circle,fill=black,draw=black,scale=0.4] {};
\node at (3*2/3+5,-0.2) [ circle,fill=black,draw=black,scale=0.4] {};
\node at (4*2/3+5,2.6*2/3)[ circle,fill=black,draw=black,scale=0.4] {};
\node at (1.5*2/3+5,4.4*2/3) [ circle,fill=black,draw=black,scale=0.4] {};
\node at (1.5*2/3+5,3.2*2/3)[ circle,fill=black,draw=black,scale=0.4] {};
\node at (0.7*2/3+5,0.7*2/3) [ circle,fill=black,draw=black,scale=0.4] {};
\node at (2.8*2/3+5,2.2*2/3) [ circle,fill=black,draw=black,scale=0.4] {};
\node at (0.2*2/3+5,2.2*2/3) [ circle,fill=black,draw=black,scale=0.4] {};
\node at (2.3*2/3+5,0.7*2/3)[ circle,fill=black,draw=black,scale=0.4] {};

\draw[dashed,color=black] (4*2/3, 2.6*2/3) -- (-1*2/3+5,2.6*2/3);
\draw[dashed,color=black] (0.7*2/3+5,0.7*2/3) --(3*2/3,-0.2);
\draw[color=black] (0.2*2/3+5, 2.2*2/3) --(2.8*2/3, 2.2*2/3);
\draw[color=black] (0+5,-0.2) --(2.3*2/3,0.7*2/3);
\node at (3.9*2/3,1.2*2/3) {$e_{1}$};
\node at (0.1*2/3+5,3.8*2/3) {$e_{2}$};
\node at (0.9*2/3+5,2.4*2/3) {$e_{3}$};

\end{tikzpicture}
\end{center}
\caption{A 2-factor containing three edges $e_{1}$, $e_2$ and $e_3$ being in different odd cycles which induce a subgraph containing a path of length $5$ (dashed lines: edges in the path of length $5$, thick lines: edges from the 2-factor).}
\label{figmp}
\end{figure}
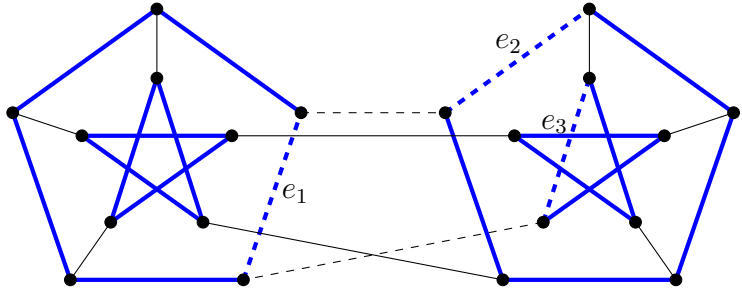
The following result from Fouquet and Vanherpe is a structural property about non 3-edge-colorable cubic graphs \cite{Fou2}. It will be used several times in the proofs of Section 3.

\begin{prop}[\cite{Fou2}]\label{fouqq}
Let $G$ be a cubic graph having a $2$-factor. Let $\mathcal{F}$ be a $2$-factor of $G$ containing a minimum number of odd cycles and let $\mathcal{F'}$ be the set of odd cycles from $\mathcal{F}$. Then, no three edges being in different cycles of $\mathcal{F}'$ induce in $G$ a subgraph containing a path of length $5$.
\end{prop}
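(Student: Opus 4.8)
The plan is to argue by contradiction: if such a configuration occurred in a $2$-factor $\mathcal{F}$ of minimum oddness, I would exhibit another $2$-factor with strictly fewer odd cycles. Write $M=E(G)\setminus E(\mathcal{F})$, which is a perfect matching of $G$ since $G$ is cubic and $\mathcal{F}$ is a $2$-factor. Suppose $e_i=a_ib_i\in C_i$ for $i\in\{1,2,3\}$, with $C_1,C_2,C_3$ pairwise distinct odd cycles of $\mathcal{F}$, and suppose the induced subgraph on $\{a_1,b_1,a_2,b_2,a_3,b_3\}$ contains a path on these six vertices (they are distinct since $C_1,C_2,C_3$ are vertex-disjoint). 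The only edges of $G$ among these six vertices are $e_1,e_2,e_3$ together with some edges of $M$; since no two of $e_1,e_2,e_3$ are adjacent and no two edges of $M$ are adjacent, consecutive edges of such a path must have different ``types'', so the five path-edges alternate between $\{e_1,e_2,e_3\}$ and $M$. Hence, after relabelling, the path reads $w_0w_1w_2w_3w_4w_5$ with $w_0w_1=e_1\in C_1$, $w_4w_5=e_3\in C_3$, $w_2w_3=e_2\in C_2$ (the ``middle'' cycle edge), and $f:=w_1w_2,\ g:=w_3w_4\in M$.

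The easy case is when there is a further $\mathcal{F}/M$-alternating cycle readily available. If there is a third edge of $M$ among $\{w_0,\dots,w_5\}$, it creates a short $\mathcal{F}/M$-alternating cycle (a $4$-cycle using one such $M$-edge, $e_2$, $f$ or $g$, and one $e_i$, or the $\mathcal{F}/M$-alternating $6$-cycle on all six vertices); flipping $\mathcal{F}$ along it merges two (or all three) of $C_1,C_2,C_3$ into a single cycle of length the sum of their lengths, which has the ``wrong'' parity, so the oddness drops by two — a contradiction. More generally: deleting $e_1,e_2,e_3$ from $\mathcal{F}$ and adding $f,g$ turns $C_1\cup C_2\cup C_3$ into a single path $R=(C_1-e_1)+f+(C_2-e_2)+g+(C_3-e_3)$ from $w_0$ to $w_5$ of even length (each $C_i-e_i$ being a path of even length), the rest of $\mathcal{F}$ untouched; and any $M$-edge joining an endpoint other than $b_1$ of a $C_1$-edge at $w_1$ to an endpoint other than $b_3$ of a $C_3$-edge at $w_4$ closes (a variant of) $R$ into a cycle of odd length $|C_1|+|C_2|+|C_3|$, again giving oddness two smaller. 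So the proof is done unless no such closing $M$-edge exists, i.e. unless the path $w_0w_1w_2w_3w_4w_5$ is induced.

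The hard part is exactly this last case. First I would record that $C_1\cup C_2\cup C_3\cup\{f,g\}$ has $f$ and $g$ as bridges, hence its only $2$-factor on $V(C_1)\cup V(C_2)\cup V(C_3)$ is $C_1\cup C_2\cup C_3$ itself; so any improvement must use $M$-edges leaving $C_1\cup C_2\cup C_3$. I would then prolong $R$ into a maximal $\mathcal{F}/M$-alternating path $\Pi$ of $G$ by extending past $w_0$ and past $w_5$ through their matching partners and then alternately along $\mathcal{F}$- and $M$-edges, and use maximality together with the finiteness of $G$ to extract an $\mathcal{F}/M$-alternating cycle $Z$ still containing $f$, $e_2$ and $g$. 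Flipping $\mathcal{F}$ along $Z$ merges $C_1,C_2,C_3$ (and possibly further cycles met by $\Pi$) into strictly fewer cycles; the delicate point — the heart of the proof — is to choose $Z$ so that the parity bookkeeping yields a net decrease in the number of odd cycles, exploiting that each $C_i-e_i$ has even length and that every $2$-factor of a cubic graph has an even number of odd cycles. This contradicts the minimality of $\mathcal{F}$ and finishes the argument. I expect the main obstacle to be precisely guaranteeing that the alternating structure ``closes up'' into a cycle whose flip is parity-improving rather than parity-neutral.
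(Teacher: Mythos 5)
Your reduction of the configuration to an alternating path $w_0w_1w_2w_3w_4w_5$ with $w_0w_1=e_1$, $w_2w_3=e_2$, $w_4w_5=e_3$ on the cycles and $f=w_1w_2$, $g=w_3w_4$ in the complementary perfect matching $M$ is correct, and your ``easy case'' is fine: since $w_1,w_2,w_3,w_4$ are already saturated by $f$ and $g$, the only possible further edge inside the six vertices is $w_0w_5\in M$ (so the alternating $4$-cycles you mention cannot occur), and flipping the resulting alternating $6$-cycle (or its variant through other cycle-neighbours of $w_1$ and $w_4$) merges $C_1,C_2,C_3$ into one odd cycle and lowers the oddness by two. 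The problem is that this is not where the content of the statement lies, and your treatment of the remaining case is not a proof: you only say you would extend $R$ to a maximal $\mathcal{F}/M$-alternating path, ``extract'' an alternating cycle $Z$ through $f$, $e_2$, $g$, and hope the parity bookkeeping works out, and you explicitly flag that you do not know how to guarantee this. That hope is unfounded as stated: a maximal alternating path need not close into any alternating cycle through $f,e_2,g$ at all (indeed $f$ and $g$ need not lie in any $2$-factor of $G$, since every $2$-factor is of the form $\mathcal{F}\,\triangle\, Z$ for a union of alternating cycles, and no such cycle through $f$ need exist), and even when such a $Z$ exists, $\mathcal{F}\,\triangle\, Z$ may meet other cycles of $\mathcal{F}$ and can leave the number of odd cycles unchanged or even increase it (an alternating cycle can split an even cycle into two odd ones). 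So the local ``merge $C_1,C_2,C_3$ via $f$ and $g$'' strategy has no reason to succeed, and the heart of the proposition remains unproven.

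For comparison, the paper does not prove this statement either: it is imported from Fouquet and Vanherpe, obtained by combining Properties 3 and 6 of Theorem 4 of~\cite{Fou2}, whose proofs appear in~\cite{Fou3}. Those properties are finer structural facts about $2$-factors minimizing the number of odd cycles, and their proofs are genuinely more involved than an alternating-cycle flip confined to $C_1\cup C_2\cup C_3\cup\{f,g\}$; reproducing that argument (or citing it) is exactly what is missing from your attempt.
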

Figure \ref{figmp} illustrates three edges in the forbidden configuration described in Proposition~\ref{fouqq} in a fixed 2-factor of a cubic graph (by Proposition~\ref{fouqq}, it means that there exists a $2$-factor containing at most two odd cycles in the graph from Figure \ref{figmp}).
Note that the previous proposition was not explicitly presented in the paper of Fouquet and Vanherpe but can be easily obtained by combining Properties 3 and 6 of~\cite[Theorem 4]{Fou2}. Note also that the proof of Theorem 4 is written in another paper~\cite{Fou3} from the same authors. 

\begin{lem}[\cite{Fou2}]\label{fouqqut}
Let $G$ be a cubic graph having a $2$-factor. Let $\mathcal{F}$ be a $2$-factor of $G$ containing a minimum number of odd cycles and let $\mathcal{F'}$ be the set of odd cycles from $\mathcal{F}$. There exists a set $A$ of type I in $\mathcal{F'}$ such that $G^{2}[A]$ is an empty graph, i.e., $\chi(G^{2}[A])\le 1$.
\end{lem}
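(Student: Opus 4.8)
The plan is to fix a $2$-factor $\mathcal F$ of $G$ of minimum oddness, write $M=E(G)\setminus E(\mathcal F)$ for the complementary perfect matching and $C_1,\dots,C_t$ for the odd cycles of $\mathcal F$, and note first that a set $A$ of type~I is nothing but a choice of one edge $f_i\in E(C_i)$ per odd cycle. Since the $C_i$ are pairwise vertex-disjoint, two such edges $f_i,f_j$ ($i\ne j$) satisfy $d_G(f_i,f_j)\ge 3$ exactly when no edge of $M$ joins $V(f_i)$ to $V(f_j)$; hence $G^2[A]$ is empty if and only if no $M$-edge has both endpoints in $\bigcup_i V(f_i)$, and this is the condition I would aim to satisfy. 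I would call a vertex $v$ of an odd cycle \emph{external} if its $M$-partner lies on a different odd cycle, and let $D_i$ be the set of external vertices of $C_i$.

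The engine of the proof is the following consequence of Proposition~\ref{fouqq}: if $uv\in E(C_i)$ with $u,v\in D_i$, then the $M$-partners of $u$ and $v$ lie on the \emph{same} odd cycle. Otherwise those partners lie on distinct odd cycles $C_a,C_b$ with $a,b,i$ distinct, and choosing a cycle-edge $e_a$ of $C_a$ at the partner of $u$ and a cycle-edge $e_b$ of $C_b$ at the partner of $v$, the three edges $e_a,uv,e_b$ lie on three different cycles of $\mathcal F'$ while, together with the two $M$-edges at $u$ and $v$, they induce a subgraph containing a path of length $5$ — contradicting Proposition~\ref{fouqq}. An immediate corollary is that, for \emph{any} choice of representatives, $f_i$ can be at distance $\le 2$ from the edges of at most one odd cycle other than $C_i$; equivalently, the conflict graph on $\{C_1,\dots,C_t\}$ (edges being the pairs $C_iC_j$ with $d_G(f_i,f_j)\le 2$) always has maximum degree at most $1$. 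This is what makes the construction feasible.

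I would then build $A$ by treating each odd cycle according to its structure. If some edge of $C_i$ has no external endpoint, pick it: such an edge is automatically at distance $\ge 3$ from every edge of every other odd cycle. Otherwise the non-external vertices of $C_i$ form an independent set in $C_i$; if $D_i\ne V(C_i)$, I would pick $f_i$ with exactly one external endpoint $r_i$ — there are at least two eligible choices, namely the vertices of $D_i$ having a non-external neighbour in $C_i$ — so that $f_i$ ``commits'' to the unique odd cycle $C_{\gamma(i)}$ containing the $M$-partner of $r_i$, and can be close to the edges of no other cycle. If $D_i=V(C_i)$, the key fact forces every vertex of $C_i$ to be matched into one odd cycle $C_a$; when $C_a$ is in turn entirely matched into $C_i$, connectedness makes $G$ itself a pair of odd cycles joined by a perfect matching, where the minimality of $\mathcal F$ excludes the Hamiltonian triangular prism, so both cycles have length $\ge 5$ and choosing $f_i$ arbitrarily and $f_a$ avoiding the two $M$-images of the ends of $f_i$ works (a cycle of length $\ge 5$ still contains an edge after deleting two vertices); the remaining subcases ($C_a$ good, or bad with $D_a\ne V(C_a)$) are dealt with in concert with the rule above. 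After this pass, every surviving conflict is a pair of ``mutually committed'' cycles $C_i,C_j$ whose chosen external endpoints happen to be matched; such pairs form a matching, and in each pair at least one cycle still has two admissible external endpoints pointing at the other (in particular this holds when the cycle is ``locked'' onto its partner, since then both boundary vertices of $D_i$ point there), so re-choosing that endpoint destroys the conflict without touching any other pair.

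The step I expect to be the genuine obstacle is precisely the last one — carrying out the commitments and the final re-choices \emph{simultaneously} and verifying that no configuration forces an unresolvable pair (i.e.\ that a hard cycle able to point at several partners can always be steered so that whichever pair remains is resolvable). Handling this cleanly seems to require invoking the minimality of $\mathcal F$ more than once (to rule out small prism-like components) together with a finer reading of Proposition~\ref{fouqq} — essentially that, for an external $u\in C_i$ with $M$-partner $w\in C_a$, the two neighbours of $w$ in $C_a$ are themselves non-external or matched back into $C_i$ — which confines the locked cycles enough that the commitments can always be arranged to leave only resolvable conflicts.
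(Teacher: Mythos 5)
Your reduction of the lemma to a selection problem (one edge $f_i$ per odd cycle such that no edge of the complementary perfect matching $M$ joins $V(f_i)$ to $V(f_j)$), and your ``engine'' extracted from Proposition~\ref{fouqq} (two adjacent external vertices of an odd cycle have their $M$-partners on the same odd cycle, hence the conflict graph of any choice of representatives is a matching) are correct; this is exactly the labelling argument the paper itself uses in the proof of Theorem~\ref{2*3} to get $\Delta(G^{3}[A])\le 1$. Note, however, that the paper does not prove Lemma~\ref{fouqqut} at all: it is quoted from Fouquet--Vanherpe, where it occurs implicitly in the proof of their Theorem~8, so the only standard to judge your text by is whether it is a complete proof on its own — and it is not.

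The genuine gap is the one you yourself flag: passing from ``the conflict graph is a matching'' to ``there is a choice with no conflicts at all,'' which is precisely the content of the lemma and precisely what distinguishes it from Theorem~\ref{2*3}. Your endgame — commit each bad cycle to one target, then locally re-choose inside each surviving mutually committed pair — is asserted, not proved, and as stated it is doubtful. First, the claim that in every surviving pair at least one cycle has \emph{two} admissible external endpoints pointing at its partner is unsubstantiated: a committed cycle may have a unique external vertex whose partner lies on the target cycle and which covers the only eligible edges (both eligible edges at that vertex contain it), so no re-choice within the same commitment can break the conflict; one must instead re-choose on the partner cycle or switch the commitment to a different target, and the latter can create a new conflict with a third cycle, so the ``without touching any other pair'' assertion fails and you would need an ordering, potential, or case analysis showing the process terminates. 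Second, the interaction with locked cycles ($D_i=V(C_i)$) and with pairs committed to each other from both sides is dispatched with ``dealt with in concert with the rule above,'' which is exactly where the hard configurations live (e.g.\ a short cycle fully matched into a longer bad cycle whose own committed endpoint is forced onto the image of every admissible $f_i$). Until these simultaneous choices are carried out and shown always to be consistent — which is the substance of Fouquet--Vanherpe's argument — the proposal is a plausible plan with its central step missing, not a proof.
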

As the previous proposition, Lemma~\ref{fouqqut} has not been explicitly presented in the paper of Fouquet and Vanherpe but is an intermediate step to prove a Theorem~\cite[Theorem 8]{Fou2}.

The following lemma will allow in Sections 3, 4 and 5, to reduce the problem of finding an $S$-coloring to the one of finding a set $A$ of type I and/or II such that $G^k[A]$ has small chromatic number (where $2\le k\le 4$ and $k$ appears in $S$).

\begin{lem}\label{lem1111}
Let $G$ be a cubic graph having a $2$-factor $\mathcal{F}$ and let $\ell$, $\ell'$ and $\ell''$ be positive integers. Let $A\subseteq E(\mathcal{F})$, $B\subseteq E(\mathcal{F})$ and $C\subseteq E(\mathcal{F})$ be sets such that $A$, $B$ and $C$ form a partition of $E(\mathcal{F})$, $A$ being of type I, $B$ and $C$ being of type II. The following properties hold:
\begin{enumerate}
\item[i)] if $\chi(G^{k}[A])\le \ell$, then $G$ is $(1,1,1,k^{\ell})$-colorable;
\item[ii)] if $\chi(G^{k}[A])\le \ell$ and $\chi(G^{k}[B])\le \ell'$, then $G$ is $(1,1,k^{\ell+\ell'})$-colorable;
\item[iii)] if $\chi(G^{k}[A])\le \ell$, $\chi(G^{k}[B])\le \ell'$ and $\chi(G^{k}[C])\le \ell''$, then $G$ is $(1,k^{\ell+\ell'+\ell''})$-colorable.
\end{enumerate}
\end{lem}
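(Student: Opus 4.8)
The plan is to exploit the fact that, since $G$ is cubic and $\mathcal{F}$ is a spanning $2$-regular subgraph, the set $M:=E(G)\setminus E(\mathcal{F})$ is a perfect matching of $G$, hence a $1$-packing. Combined with the given partition $E(\mathcal{F})=A\cup B\cup C$, this writes $E(G)$ as the disjoint union $M\cup A\cup B\cup C$, and each of $B$ and $C$ is itself a matching (no two of its edges are adjacent in $G$ — part of the definition of type~II), hence also a $1$-packing. So we already have, at no cost, up to three $1$-packings among $M$, $B$, $C$ that may serve as colors of radius $1$, and it only remains to handle the leftover type-I/type-II pieces with colors of radius $k$, using the chromatic hypotheses. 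Throughout, the translation to remember is: a proper $\ell$-coloring of $G^{k}[A]$ partitions $A$ into $\ell$ classes no two edges of which are at distance $\le k$ in $G$, equivalently at distance $\ge k+1$, i.e.\ into $\ell$ $k$-packings; and a matching is exactly a $1$-packing.

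For i), I would color $M$, $B$, $C$ with the three colors of radius $1$, and then color the remaining edges, which form exactly $A$: since $\chi(G^{k}[A])\le\ell$, partition $A$ into $\ell$ $k$-packings and assign them the $\ell$ colors of radius $k$. As $\mathcal{F}$ is spanning, $M\cup A\cup B\cup C=E(G)$, so every edge is colored, and we obtain a $(1,1,1,k^{\ell})$-coloring.

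For ii), I would instead spend only two colors of radius $1$, on $M$ and $C$, and then use $\chi(G^{k}[A])\le\ell$ and $\chi(G^{k}[B])\le\ell'$ to split $A$ into $\ell$ and $B$ into $\ell'$ $k$-packings, using $\ell+\ell'$ colors of radius $k$; since $M\cup C\cup A\cup B=E(G)$ this is a $(1,1,k^{\ell+\ell'})$-coloring. For iii), I would spend a single color of radius $1$ on $M$ and split each of $A$, $B$, $C$ into $k$-packings using the three chromatic bounds, for a total of $\ell+\ell'+\ell''$ colors of radius $k$, giving a $(1,k^{\ell+\ell'+\ell''})$-coloring.

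There is essentially no obstacle once this viewpoint is fixed; the only points needing care are routine: that $M$ is a perfect matching (immediate from $3$-regularity of $G$ and $2$-regularity of $\mathcal{F}$), that $B$ and $C$ are matchings (immediate from type~II) and that a matching is a $1$-packing, and the small distance bookkeeping turning "independent set in $G^{k}[A]$" into "$k$-packing in $G$". If anything could be called the crux, it is merely recognizing at the outset that the perfect matching $M$ together with the matchings coming from the type-II parts supply the radius-$1$ color classes for free, leaving the chromatic hypotheses to take care of exactly the remaining type-I (and, for ii) and iii), type-II) subsets of $E(\mathcal{F})$.
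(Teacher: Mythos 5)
Your proposal is correct and follows essentially the same argument as the paper: color the perfect matching $E(G)\setminus E(\mathcal{F})$ and the type-II matchings with colors of radius $1$, and turn proper colorings of $G^{k}[A]$ (and $G^{k}[B]$, $G^{k}[C]$) into the required $k$-packings.
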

\begin{proof}
Note that $G-\mathcal{F}$ is a matching and thus can be colored with a color of radius $1$. Note also that $B$ and $C$ are matchings and thus can be also colored each with a color of radius $1$.\newline
i) We color $E(G-\mathcal{F})$, $B$ and $C$ with three colors of radius $1$.
Then, any $\ell$-coloring of $G^{k}[A]$ induces a coloring of the edges of $A$ with the remaining colors, i.e., the $\ell$ colors of radius $k$.\newline
ii) We color $E(G-\mathcal{F})$ and $C$ with two colors of radius $1$. Then, any $\ell$-coloring of $G^{k}[A]$ ($\ell'$-coloring of $G^{k}[B]$, respectively) induces a coloring of the edges of $A$ ($B$, respectively) with $\ell$ ($\ell'$, respectively) colors of radius $k$.\newline
iii) We color $E(G-\mathcal{F})$ with one color of radius $1$. Then, any $\ell$-coloring of $G^{k}[A]$ ($\ell'$-coloring of $G^{k}[B]$, $\ell''$-coloring of $G^k[C]$, respectively) induces a coloring of the edges of $A$ ($B$, $C$, respectively) with $\ell$ ($\ell'$, $\ell''$, respectively) colors of radius $k$.
\end{proof}
\section{$(1,1,1,k,\ldots,k)$-coloring}

In this section, we first prove a general upper bound on the required number of colors of radius $k$ in order that all cubic graphs having a $2$-factor are $(1,1,1,k,\ldots,k)$-colorable. We also prove that all cubic graphs having a $2$-factor are $(1,1,1,3,3)$-colorable and that some cubic graphs having a $2$-factor are $(1,1,1,3)$-colorable.

For the case $k=2$, Payan~\cite{Pay} has shown that one color of radius two is sufficient. Another proof of this result has been given by Fouquet and Vanherpe~\cite{Fou2} (Lemma~\ref{fouqqut} is an intermediate step of the proof of this result). 
\begin{theo}[\cite{Fou2, Pay}]\label{theo1112}
Every cubic graph is $(1,1,1,2)$-colorable.
\end{theo}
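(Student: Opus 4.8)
The plan is to reduce the problem to finding a set of type I with a well-behaved square graph and then invoke Lemma~\ref{lem1111}(i) with $k=2$ and $\ell=1$. Concretely, let $G$ be a cubic graph; since we only need the case when $G$ has a $2$-factor (and by Petersen's theorem every bridgeless cubic graph has one, while any cubic graph with a bridge can be handled separately or we may simply assume a $2$-factor exists as in the rest of the section), fix a $2$-factor $\mathcal{F}$ of $G$ that minimizes the number of odd cycles. By Lemma~\ref{fouqqut}, there is a set $A \subseteq E(\mathcal{F})$ of type I such that $G^{2}[A]$ is an empty graph, i.e.\ $\chi(G^{2}[A]) \le 1$. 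Applying Lemma~\ref{lem1111}(i) with $k=2$ and $\ell=1$ immediately yields that $G$ is $(1,1,1,2)$-colorable: the three colors of radius $1$ go to $E(G-\mathcal{F})$ and to the two type-II pieces $B,C$ of $E(\mathcal{F})\setminus A$ (each a matching in $G$), and the single color of radius $2$ receives all of $A$, which is legitimate precisely because no two edges of $A$ lie at distance $\le 2$.

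First I would make explicit the partition of $E(\mathcal{F})$ into $A$, $B$, $C$: having chosen the type-I set $A$ (one edge per odd cycle, none on even cycles), each cycle $D$ of $\mathcal{F}$ becomes a path after deleting its (zero or one) $A$-edge, and a path on an even number of edges can be properly $2$-edge-colored so that each color class is a matching of $D$ that is also induced-distance-respecting only in the trivial $s_i=1$ sense; grouping these into two global classes $B$ and $C$ gives two type-II sets, exactly as in the hypothesis of Lemma~\ref{lem1111}. This is the routine bookkeeping that makes the lemma applicable. The only genuine content is Lemma~\ref{fouqqut}, which is quoted from Fouquet--Vanherpe and which we are entitled to use.

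I do not expect a serious obstacle here, since the theorem is essentially a corollary of Lemma~\ref{fouqqut} and Lemma~\ref{lem1111}(i); the ``hard part'' has been exported to the cited structural result of Fouquet--Vanherpe (and, independently, to Payan's argument). The one point that deserves a sentence of care is the hypothesis: Lemma~\ref{fouqqut} and Lemma~\ref{lem1111} require $G$ to have a $2$-factor, whereas the statement of Theorem~\ref{theo1112} asserts $(1,1,1,2)$-colorability of \emph{every} cubic graph. I would resolve this by noting that a connected cubic graph either is bridgeless --- hence has a $2$-factor by Petersen's theorem --- or contains bridges, in which case one decomposes along the bridges and colors the (sub)cubic blocks, using that a $(1,1,1,2)$-coloring of a supergraph restricts to one of any subgraph (as observed in the excerpt); alternatively one simply cites Payan~\cite{Pay}, whose proof does not go through a $2$-factor, for the bridge case. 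Either way the result follows, and I would keep the write-up short, presenting it as the announced consequence of the two preceding lemmas.
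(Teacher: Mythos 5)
The paper itself gives no proof of Theorem~\ref{theo1112}: it is quoted from Payan~\cite{Pay} and Fouquet--Vanherpe~\cite{Fou2}, with the remark that Lemma~\ref{fouqqut} is an intermediate step of the latter's proof. Your reconstruction for cubic graphs \emph{having a $2$-factor} is correct and is exactly the route the paper points to: take a $2$-factor $\mathcal{F}$ with a minimum number of odd cycles, use Lemma~\ref{fouqqut} to obtain a type-I set $A$ with $G^{2}[A]$ empty, split the remaining edges of each cycle (an even path once its $A$-edge is removed, or an untouched even cycle) alternately into two type-II matchings $B$ and $C$, and apply Lemma~\ref{lem1111}(i) with $k=2$, $\ell=1$. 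That bookkeeping is sound, and the hard content is correctly attributed to the cited structural lemma.

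The genuine weak point is the reduction of ``every cubic graph'' to this case. A cubic graph with bridges need not have a $2$-factor at all (equivalently, need not have a perfect matching), so Lemmas~\ref{fouqqut} and~\ref{lem1111} simply do not apply to it, and your first fallback --- decompose along bridges and color the blocks, invoking the subgraph-restriction property --- does not work as stated. The restriction property only lets you pass from a cubic supergraph to a subgraph; a connected cubic graph is never a proper subgraph of a connected cubic graph, so it cannot be applied to the whole bridged graph, and if instead you color the (subcubic) blocks independently there is no reason the pieces can be reassembled: the properness condition at the two endpoints of a bridge and the distance-at-most-$2$ condition for the radius-$2$ class both cross the bridge, and the bridge edges themselves remain uncolored. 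Your second fallback, citing Payan (whose argument does not go through a $2$-factor), is in effect what the paper does, so the statement is safe --- but be clear that your own argument establishes the theorem only for cubic graphs having a $2$-factor, and that the bridged case genuinely rests on the cited proofs rather than on the sketched decomposition.
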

Notice that this result is tight since the Petersen graph is not $(1,1,1,3)$-colorable (as it is not 3-edge-colorable and has diameter 2).
In order to find similar results for $k\ge 3$, we consider the sequence of integers $(a_k)_{k\ge2}$ defined by: $a_2=2$, $a_3=4$ and $a_k= a_{k-1}+2 a_{k-2}+2$, for $k\ge 4$. Note that this sequence is contained in Sloane Online Encyclopedia of Integers Sequences (A026644 and A167030) and that $a_k=\frac{2^{k+1}-(-1)^{k+1}-3}{3}$ for $k\ge 2$.

\begin{lem}\label{lem1}
Let  $G$ be cubic graph having a $2$-factor, let $\mathcal{F}$ be any $2$-factor of $G$ and let $A\subseteq E(\mathcal{F})$ be of type I.
For $k\ge 2$, the graph $G^{k}[A]$, satisfies $\Delta (G^{k}[A])\le a_k$.
\end{lem}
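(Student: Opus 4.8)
The plan is to bound the degree of a fixed edge $e\in A$ in $G^k[A]$ by counting, level by level, how many edges of $\mathcal F$ lie within distance $k$ of $e$ in $G$ and could belong to a type-I set $A$. Recall that $A$ is an odd cycle transversal of $\mathcal F$: it contains exactly one edge from each odd cycle of $\mathcal F$ and nothing from the even cycles. The key structural fact we exploit is that $\mathcal F$ is a disjoint union of cycles, so every vertex has exactly two incident edges in $\mathcal F$ and exactly one incident edge not in $\mathcal F$ (a chord of the 2-factor). Thus, starting from $e=uv\in A$, I would grow a BFS-type ball of radius $k$ around $e$ and track the positions where chords leave the current cycle, since those chords are the only way the ball reaches a new cycle of $\mathcal F$; within one cycle, $A$ contributes at most one edge.

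Concretely, I would set up the recursion as follows. Let $f_k$ denote the maximum number of edges of $E(\mathcal F)$ (ignoring the type-I restriction for a moment) that lie at distance at most $k$ from a fixed edge $e\in E(\mathcal F)$, counted with a bookkeeping that distinguishes "new" cycles reached for the first time. The edge $e$ sits on a cycle $C$ of $\mathcal F$; the two neighbours of $e$ along $C$ are at distance $1$, their further $\mathcal F$-neighbours at distance $2$, and so on — this contributes the edges of $C$ near $e$. Meanwhile, from each of the (at most) $2$ endpoints of $e$ there is one chord, landing on some vertex $w$; that vertex is at distance $2$ from $e$ (through its endpoints) wait — more carefully, the chord itself is at distance $1$ from $e$, and the two $\mathcal F$-edges at $w$ are at distance $2$, so from each endpoint of $e$ we recurse with a "budget" reduced by $2$. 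Similarly each new vertex reached along $C$ at distance $j$ emits a chord reaching a fresh vertex at distance $j+1$, whose incident $\mathcal F$-edges start a sub-ball of radius $k-(j+1)$. Carrying out this accounting and using the type-I property (at most one $A$-edge per odd cycle, zero per even cycle) to collapse the contribution of each visited cycle, I expect the count to satisfy exactly the recurrence $a_k=a_{k-1}+2a_{k-2}+2$: the "$a_{k-1}$" term comes from continuing along the current cycle/one chord direction with budget $k-1$, the "$2a_{k-2}$" from the two chord branches with budget $k-2$, and the "$+2$" absorbs the boundary edges themselves. The base cases $a_2=2$ and $a_3=4$ should be checked directly by small case analysis (for $k=2$, an edge $e$ of $\mathcal F$ in a type-I set has at most $2$ other type-I edges within distance $2$; for $k=3$, at most $4$), and then induction on $k$ closes the argument, giving the closed form $a_k=\frac{2^{k+1}-(-1)^{k+1}-3}{3}$ for free.

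The main obstacle I anticipate is making the level-by-level counting rigorous without overcounting: the balls of radius $k-2$ hanging off the two chords at $e$, and the balls hanging off chords further along $C$, can in principle overlap with each other and with the part of the ball lying on $C$ itself, since $G$ is an arbitrary cubic graph (short cycles, chords returning near $e$, etc.). To handle this I would phrase the bound as "the number of edges of $A$ within distance $k$ of $e$ is at most [the recursive count]" where the recursive count is an upper bound obtained by pessimistically assuming every branch is disjoint and every visited cycle is odd (so contributes a full $A$-edge) — overlaps only help. A second, smaller subtlety is that $e$ might be a chord-free description is irrelevant since $e\in A\subseteq E(\mathcal F)$, so $e$ always lies on a cycle of $\mathcal F$ and the recursion is uniform; and one must remember that the one chord at each endpoint of $e$ is itself not in $\mathcal F$, hence never in $A$, which is why chords only serve as "bridges" of length $1$ between cycles and do not themselves contribute to the degree. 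With these conventions fixed, the induction is routine; the whole content is the correct set-up of the recurrence and its three base/boundary terms.
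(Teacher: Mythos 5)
Your overall plan follows the same line as the paper's proof: grow a ball of radius $k$ around a fixed edge $e\in A$, use the fact that every vertex of the $2$-factor is incident to exactly one chord to control how the ball spreads to other cycles, charge at most one edge of $A$ per visited odd cycle, and argue pessimistically (disjoint branches, all cycles odd) so that overlaps only help. Your base cases $a_2=2$ and $a_3=4$ are also verified correctly. The problem is that the heart of the lemma --- actually deriving the recurrence $a_k=a_{k-1}+2a_{k-2}+2$ from this growth process --- is not established, and the bookkeeping you hint at does not produce it in the stated form. Chords leave the current cycle $C$ at \emph{every} level: a vertex of $C$ at distance $j$ from an endpoint of $e$ emits a chord whose far end lies at distance $j+1$ and seeds a fresh branch with remaining budget roughly $k-j-2$, for every $j\le k-2$. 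A direct unfolding of your scheme therefore yields a sum over all levels $j$ (two branches per level along $C$, plus the branches recursively spawned inside each of them), not the three terms ``one branch of budget $k-1$, two chord branches of budget $k-2$, plus $2$''; moreover both cycle directions from $e$ have budget $k-1$, so even the first term is not accounted for consistently. Since you yourself only say you ``expect'' the count to satisfy the recurrence, the inductive step --- which is precisely where the sequence $(a_k)$ enters --- is a genuine gap.

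The paper closes this gap by choosing a different invariant that compresses the level-by-level growth into a two-step recurrence. For a fixed edge $e$, let $D^k_e$ be the subgraph of the union $\mathcal{F}'$ of odd cycles induced by the vertices at distance at most $k-1$ from an endpoint of $e$, and let $n_k$ bound its number of connected components. Each component is a subpath of a single odd cycle, hence contains at most one edge of $A$ (this is the rigorous form of your ``collapse each visited cycle''), and the component containing $e$ carries $e$ itself, so $\Delta(G^{k}[A])\le n_k-1$. New components of $D^{k+1}_e$ can only be seeded by the chords leaving the two ends of a non-trivial component of $D^{k}_e$, and the number of non-trivial components of $D^{k}_e$ is at most $n_{k-1}$; this gives $n_{k+1}\le n_k+2n_{k-1}$, which together with $n_2\le 3$ and $n_3\le 5$ yields $n_k\le a_k+1$ and hence the lemma. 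If you replace your heuristic three-term split by this invariant (or prove an equivalent statement bounding how many new cycle-segments appear at each radius), your argument goes through; as written, the counting step that carries all the content of the lemma is missing.
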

\begin{proof}
Let $\mathcal{F}'$ be the subgraph of $\mathcal{F}$ only containing the odd cycles of $\mathcal{F}$. Note that $A\subseteq E(\mathcal{F}')$.
Let $e$ be an edge of $\mathcal{F}$ and let $D^k_e$ be the subgraph of $\mathcal{F}'$ induced by the vertices at distance at most $k-1$ from any extremity of $e$. For an integer $k$, let $n_k$ be the maximum number of connected components of $D^k_e$ among all cubic graphs having a $2$-factor, all choice of $2$-factor in these graphs and all choice of edge $e$ in these $2$-factors. 
In order to prove that $\Delta (G^{k}[A])\le a_k$, we are going to show that $n_k\le a_k+1$.
Note that several connected components of $D^k_e$ can be included in the same odd cycle of $\mathcal{F}'$. 

The number of connected components of $D^{k+1}_e$ which are not in $D^{k}_e$ is bounded by two times the number of connected components of $D^k_e$ which are not trivial (the connected components containing only one vertex) which is itself bounded by $2n_{k-1}$.
We can easily prove that $n_2\le3=a_2+1$ and $n_3\le 5=a_3+1$. By induction, we suppose that $n_{k-1}\le a_{k-1}+1$ and $n_k\le a_k+1$.
For $k\ge 2$, we obtain that $n_{k+1}\le n_k+2n_{k-1} \le a_k+1+2 (a_{k-1}+1)\le a_k +2 a_{k-1} +3\le a_{k+1}+1$.

Consequently, since each connected component of $D^{k}_e$ contains either at most one vertex incident with an edge of $A$ or at most two vertices incident with an edge of $A$ in the case this edge lies in $D^k_e$, we have $\Delta (G^{k}[A])\le n_{k}-1\le a_{k}$ (the minus 1 comes from the connected component containing $e$).
\end{proof}

\subsection{$(1,1,1,3,\ldots,3)$-coloring}
In this subsection, we try to minimize the number of required integers 3 in order that all cubic graphs having a $2$-factor are $(1,1,1,3,\ldots,3)$-colorable. We give two results about this problem but we are aware that the result of the following theorem can probably be sharpened since we have not been able to find an infinite family of non $(1,1,1,3)$-colorable cubic graphs. However, at the end of the subsection, we exhibit two non $(1,1,1,3)$-colorable bridgeless cubic graphs.

\begin{theo}\label{2*3}
Every cubic graph having a $2$-factor is $(1,1,1,3,3)$-colorable.
\end{theo}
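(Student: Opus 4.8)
The plan is to apply Lemma~\ref{lem1111}(i) with $k=3$ and $\ell=2$. Given a $2$-factor $\mathcal{F}$ of $G$ and a type-I set $A\subseteq E(\mathcal{F})$, the edges of $E(\mathcal{F})\setminus A$ form vertex-disjoint paths (one per odd cycle, obtained by deleting the chosen edge) together with the even cycles of $\mathcal{F}$, hence split into two matchings $B,C$ which are in fact of type~II, so that the hypotheses of Lemma~\ref{lem1111}(i) are met; applying it, it suffices to produce, for every cubic graph $G$ having a $2$-factor, a $2$-factor $\mathcal{F}$ and a type-I set $A$ with $\chi(G^{3}[A])\le 2$, that is, with $G^{3}[A]$ bipartite. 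Fix a $2$-factor $\mathcal{F}$ with a minimum number of odd cycles and let $\mathcal{F}'$ be the union of these odd cycles; a type-I set $A$ then picks one edge $e_{C}$ from each odd cycle $C$ of $\mathcal{F}$. Since a cubic graph has an even number of vertices, the number of odd cycles is even; when the oddness is $0$ or $2$ we have $|A|\le 2$ and $G^{3}[A]$ is trivially bipartite, so we may assume the oddness is at least $4$.

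Next, I examine $G^{3}[A]$ for a type-I set $A$ in this fixed $\mathcal{F}$. As distinct cycles of $\mathcal{F}$ are vertex-disjoint, two edges of $A$ in distinct odd cycles are non-adjacent in $G$, so $e_{C}\sim e_{C'}$ in $G^{3}[A]$ precisely when $d_{G}(e_{C},e_{C'})\in\{2,3\}$, and by Lemma~\ref{lem1} we have $\Delta(G^{3}[A])\le a_{3}=4$. The distance-$2$ adjacencies are severely restricted by Proposition~\ref{fouqq}: $d_{G}(e_{C},e_{C'})=2$ means some matching edge joins an endpoint of $e_{C}$ to an endpoint of $e_{C'}$, and if $e_{C}$ had two such neighbours they would be attached by two distinct matching edges to the two distinct endpoints of $e_{C}$, so that these three edges of $A$ --- lying in three distinct odd cycles --- would induce a path of length $5$, contradicting Proposition~\ref{fouqq}. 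Hence every vertex of $G^{3}[A]$ has at most one distance-$2$ neighbour; in particular, starting from a set $A$ with $G^{2}[A]$ empty, which exists by Lemma~\ref{fouqqut}, removes the distance-$2$ edges entirely, so that $G^{3}[A]$ keeps only distance-$3$ edges, still of maximum degree at most $4$.

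It remains to choose the representatives $e_{C}$ so that this distance-$3$ graph contains no odd cycle. The key local fact is that the distance-$3$ neighbours of a fixed $e_{C}=xy$ arise only from the at most four odd cycles that contain, respectively, the matching-partner of $x$, the matching-partner of $y$, and the matching-partners of the two neighbours of $e_{C}$ along $C$, and that replacing $e_{C}$ by another edge of $C$ rewires only the (at most four) edges of $G^{3}[A]$ incident to the vertex $e_{C}$, all other adjacencies being unchanged; thus the representatives can be adjusted almost independently, one odd cycle at a time. One then goes through the relevant configurations --- according to the lengths of these ``accessible'' cycles (triangular ones being the rigid case, where an adjacency cannot be avoided), to whether two accessible cycles coincide, and to the $4$-cycles formed by two parallel matching edges --- and argues that, after finitely many such local adjustments, the representatives can be chosen so that $G^{3}[A]$ is bipartite. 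I expect this last step to be the main obstacle: unlike for the distance-$2$ edges, Proposition~\ref{fouqq} gives no information about a cycle of $G^{3}[A]$ all of whose consecutive distances equal $3$, so such cycles must be destroyed by re-selection, and the delicate point is to verify that breaking one bad odd cycle never forces a new one to appear.
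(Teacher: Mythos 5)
Your reduction to Lemma~\ref{lem1111}(i) is sound (the partition of $E(\mathcal{F})\setminus A$ into two type-II matchings works as you say), and your analysis of distance-$2$ adjacencies via Proposition~\ref{fouqq} and of which cycles can host distance-$3$ neighbours of $e_C$ is correct. But the proof has a genuine gap, and you locate it yourself: the entire theorem rests on the final claim that the representatives $e_C$ can be re-selected, one odd cycle at a time, so that the distance-$3$ graph becomes bipartite, and this claim is only announced, not proved. Nothing you cite controls odd cycles of $G^{3}[A]$ whose edges all come from distance-$3$ pairs, and you give no invariant or potential argument showing that breaking one such cycle does not create another; moreover, each local re-selection can also reintroduce distance-$2$ adjacencies that the initial choice from Lemma~\ref{fouqqut} had eliminated, so even the reduction to ``distance-$3$ edges only'' is not stable under your adjustments. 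As it stands, the argument proves only $\Delta(G^{3}[A])\le 4$ (hence $(1,1,1,3^5)$-colorability via Brooks), not the theorem.

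The paper avoids this difficulty by aiming at a much stronger and simpler target: it chooses $A$ so that $\Delta(G^{3}[A])\le 1$, which gives $\chi(G^{3}[A])\le 2$ with no global bipartiteness argument at all. Concretely, vertices of the odd cycles are labeled $+$ or $-$ according to whether they have a neighbour in another odd cycle; since each odd cycle has odd length, it contains two consecutive vertices with the same label, and the edge put into $A$ is taken at (or shifted by one from) such a pair. Proposition~\ref{fouqq} then guarantees that all edges of $\mathcal{F}'$ within distance $3$ of the chosen edge lie in $C\cup C'$ for a single other odd cycle $C'$, so each chosen edge meets at most one other element of $A$ in $G^{3}[A]$. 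If you want to salvage your approach, the lesson is that the right move is not to fight odd cycles in a degree-$4$ auxiliary graph but to pick the representatives so that the auxiliary graph is (almost) a matching from the start.
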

\begin{proof}
Let $G$ be a cubic graph. Let $\mathcal{F}$ be a $2$-factor of $G$ having a minimum number of odd cycles. Let $\mathcal{F}'$ be the set of odd cycles from $\mathcal{F}$. By Lemma \ref{lem1111}.i), if there exists a set $A$ of type I in $\mathcal{F'}$ such that $\Delta(G^{3}[A])\le 1$, then $G$ is $(1,1,1,3,3)$-colorable.

We give labels to the vertices of $\mathcal{F}'$ as follows. If a vertex belonging to an odd cycle $C$ from $\mathcal{F}'$ has a neighbor in a different cycle of $\mathcal{F}'$, we label it by $+$, otherwise we label it by $-$. 
By Proposition \ref{fouqq}, the two end vertices of any edge of any cycle $C$ of $\mathcal{F'}$ have neighbors in only at most one cycle of $\mathcal{F'}$ other than $C$. Thus, if consecutive vertices are labeled by $+$ in an odd cycle $C$ from $\mathcal{F}'$, then there exists an unique cycle $C'$ of $\mathcal{F}'$ such that all these vertices have neighbors only in $C\cup C'$. 

\vspace{0.5cm}
\textbf{Observation 1.} For each edge $e\in E(G-\mathcal{F})$ having an extremity $u$ in a odd cycle $C$ from $\mathcal{F}'$ the following is true:
\begin{enumerate}
\item[i)] if $u$ is labeled by $+$, then all edges in $\mathcal{F}'$ at distance at most $2$ from $e$ are included in $C\cup C'$, for $C'$ an odd cycle of $\mathcal{F}'$;
\item[ii)] if $u$ is labeled by $-$, then all edges in $\mathcal{F}'$ at distance at most $2$ from $e$ are included in $C$.
\end{enumerate}

The previous observation can be easily obtained using the fact that $e$ is only adjacent with edges of $\mathcal{F}$, these edges being themselves adjacent with edges either in the same cycle than they or in $G-\mathcal{F}$.

We will construct $A$, starting from an empty set, as follows. Since each cycle $C$ of $\mathcal{F}'$ has an odd number of vertices there exist two consecutive vertices both labeled either by $+$ or by $-$ in every cycle of $\mathcal{F}'$.
Let $u_1$ and $u_{2}$ be these two adjacent vertices (both labeled either by $+$ or $-$) and suppose that $u_{0}$, $u_{1}$, $u_{2}$, $u_{3}$ and $u_{4}$ are consecutive vertices of the cycle $C$ (if $C$ contains three vertices then $u_{3}=u_{0}$ and $u_4=u_{1}$). For each cycle $C$ of $\mathcal{F}'$, we add to $A$ an edge of $C$ depending on the label of $u_{1}$ and $u_{2}$.

If $u_{1}$ and $u_{2}$ are both labeled by $+$ in $C$, then we add the edge $u_{1}u_{2}$ into $A$. Note that $u_{0}, u_{1}, u_{2}, u_{3}$ are labeled either by $+,+,+,+$, by $-,+,+,+$, by $+,+,+,-$ or by $-,+,+,-$ (the labels are given following the index of $u$).
Consequently, by Observation 1, there exists a cycle $C'$ from $\mathcal{F}'$ such that all edges from $\mathcal{F}'$ at distance at most $3$ from $u_{1} u_{2}$ are in $C\cup C'$.

If $u_{1}$ and $u_{2}$ are both labeled by $-$, then we add the edge $u_{2} u_{3}$ to $A$. Note that $u_{1}, u_{2}, u_{3}, u_{4}$ are labeled either by $-,-,+,+$, by $-,-,-,+$, by $-,-,+,-$ or by $-,-,-,-$. Also in this case, by Observation 1, there exists a cycle $C'$ from $\mathcal{F}'$ such that all edges from $\mathcal{F}'$ at distance at most $3$ from $u_{1} u_{2}$ are in $C\cup C'$.

Since there is one edge of $A$ per cycle of $\mathcal{F}'$ we obtain, by construction, that $\Delta(G^{3}[A])\le 1$ and thus $\chi(G^{3}[A])\le 2$.
Finally, by Lemma \ref{lem1111}.i), $G$ is $(1,1,1,3,3)$-colorable.

\end{proof}

\begin{figure}[t]
\begin{center}
\begin{tikzpicture}[scale=1.1]
\draw[ultra thick,color=blue] (0,2) -- (2*0.464723,2*0.88545);
\draw[ultra thick,color=blue] (2*0.464723,2*0.88545) -- (2*0.822983,2*0.56806);
\draw (2*0.822983,2*0.56806) -- (2*0.992708,2*0.12053);
\draw[ultra thick,color=blue] (2*0.992708,2*0.12053) -- (2*0.935016,2*-0.35460);
\draw[ultra thick,color=blue] (2*0.935016,2*-0.35460) -- (2*0.663122,2*-0.74851);
\draw[ultra thick,color=blue] (2*0.663122,2*-0.74851) -- (2*0.239215,2*-0.970941);
\draw[ultra thick,color=blue] (2*0.239215,2*-0.970941) -- (2*-0.239215,2*-0.970941);
\draw[ultra thick,color=blue] (2*-0.239215,2*-0.970941) -- (2*-0.663122,2*-0.74851);
\draw[ultra thick,color=blue] (2*-0.663122,2*-0.74851) -- (2*-0.935016,2*-0.35460);
\draw[ultra thick,color=blue] (2*-0.935016,2*-0.35460) -- (2*-0.992708,2*0.12053);
\draw[ultra thick,color=blue] (2*-0.992708,2*0.12053) -- (2*-0.822983,2*0.56806);
\draw[ultra thick,color=blue] (2*-0.822983,2*0.56806) -- (2*-0.464723,2*0.88545);
\draw[ultra thick,color=blue] (2*-0.464723,2*0.88545) -- (0,2);
\draw (0,0.8) -- (0.8*0.951056,0.8*0.309016);
\draw (0.8*0.951056,0.8*0.309016) -- (0.8*0.587785,0.8*-0.809016);
\draw[ultra thick,dashed,color=red] (0.8*0.587785,0.8*-0.809016) -- (0.8*-0.587785,0.8*-0.809016);
\draw (0.8*-0.587785,0.8*-0.809016) -- (0.8*-0.951056,0.8*0.309016);
\draw (0.8*-0.951056,0.8*0.309016) -- (0,0.8);
\draw (0.8*0.587785,0.8*-0.809016) -- (2*0.663122,2*-0.74851);
\draw (0.8*-0.587785,0.8*-0.809016) -- (2*-0.935016,2*-0.35460);
\draw (0.8*0.951056,0.8*0.309016) -- (2*0.464723,2*0.88545);
\draw (0.8*-0.951056,0.8*0.309016) --  (2*-0.464723,2*0.88545);
\draw (0,2)-- (0,0.8);
\draw (2*0.992708,2*0.12053) --  (2*0.239215,2*-0.970941);
\draw (2*0.822983,2*0.56806) --  (2*0.935016,2*-0.35460);
\draw (2*-0.239215,2*-0.970941) --  (2*-0.992708,2*0.12053);
\draw (2*-0.663122,2*-0.74851) --   (2*-0.822983,2*0.56806);
\node at (0,2) [ circle,fill=black,draw=black,scale=0.4] {};
\node at (2*0.464723,2*0.88545) [ circle,fill=black,draw=black,scale=0.4] {};
\node at (2*0.822983,2*0.56806) [ circle,fill=black,draw=black,scale=0.4] {};
\node at (2*0.992708,2*0.12053) [ circle,fill=black,draw=black,scale=0.4] {};
\node at (2*0.935016,2*-0.35460) [ circle,fill=black,draw=black,scale=0.4] {};
\node at (2*0.663122,2*-0.74851) [ circle,fill=black,draw=black,scale=0.4] {};
\node at (2*0.239215,2*-0.970941) [ circle,fill=black,draw=black,scale=0.4] {};
\node at (2*-0.239215,2*-0.970941) [ circle,fill=black,draw=black,scale=0.4] {};
\node at (2*-0.663122,2*-0.74851) [ circle,fill=black,draw=black,scale=0.4] {};
\node at (2*-0.935016,2*-0.35460) [ circle,fill=black,draw=black,scale=0.4] {};
\node at (2*-0.992708,2*0.12053) [ circle,fill=black,draw=black,scale=0.4] {};
\node at (2*-0.822983,2*0.56806) [ circle,fill=black,draw=black,scale=0.4] {};
\node at (2*-0.464723,2*0.88545) [ circle,fill=black,draw=black,scale=0.4] {};

\node at (0,0.8) [ circle,fill=black,draw=black,scale=0.4] {};
\node at (0.8*0.951056,0.8*0.309016) [ circle,fill=black,draw=black,scale=0.4] {};
\node at (0.8*0.587785,0.8*-0.809016) [ circle,fill=black,draw=black,scale=0.4] {};
\node at (0.8*-0.587785,0.8*-0.809016) [ circle,fill=black,draw=black,scale=0.4] {};
\node at (0.8*-0.951056,0.8*0.309016) [ circle,fill=black,draw=black,scale=0.4] {};

\node at (0,-0.9) {$e$};

\end{tikzpicture}
\end{center}
\caption{An edge $e$ at distance at most $3$ of twelve edges from a 13-cycle (dashed line: $e$; thick lines: edges at distance at most $3$ from $e$ in the 13-cycle).}
\label{fig313}
\end{figure}
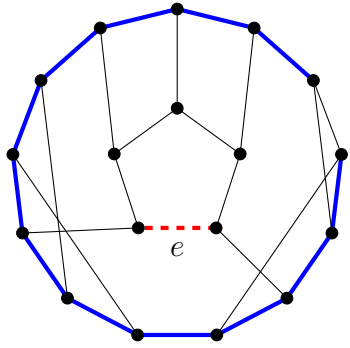

In the following proposition, we prove that cubic graphs of oddness $2$ are $(1,1,1,3)$-colorable in the case there are restrictions on the structure of the odd cycles of a $2$-factor.
Note that Property i) of the following proposition implies that every cubic graph of girth 13 and oddness $2$ is $(1,1,1,3)$-colorable.
\begin{prop}\label{1*3}
Every cubic graph of oddness $2$ having a $2$-factor containing two odd cycles $C_{1}$ and $C_{2}$ is $(1,1,1,3)$-colorable in the following cases:
\begin{itemize}
\item[i)] If $C_{1}$ or $C_{2}$ is a cycle of length at least $13$.
\item[ii)] If $\{C_{1},C_{2}\}$ contains a cycle $C$ of length at least $9$ and if there exists at least one edge with one extremity in $\{C_{1},C_{2}\}\setminus C$ and with the other extremity in $G-C$.
\item[iii)] If $\{C_{1},C_{2}\}$ contains a cycle $C$ of length at least $5$ and if there exists an edge of $\{C_{1},C_{2}\}\setminus C$ with both extremities having no neighbors in $C$.
\end{itemize}
\end{prop}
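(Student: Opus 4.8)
The plan is to reduce, in all three cases, to Lemma~\ref{lem1111}.i) applied with $k=3$. Since $C_1$ and $C_2$ are the only odd cycles of the given $2$-factor $\mathcal{F}$ (there may be even cycles, but a type-I set uses none of their edges), a set $A\subseteq E(\mathcal{F})$ of type I is exactly a pair $\{e_1,e_2\}$ with $e_1\in E(C_1)$ and $e_2\in E(C_2)$. As $G^3[A]$ then has only two vertices, $\chi(G^3[A])\le 1$ holds if and only if $G^3[A]$ has no edge, i.e. $d_G(e_1,e_2)\ge 4$. So in each of i), ii), iii) it suffices to find an edge of $C_1$ and an edge of $C_2$ at distance at least $4$ in $G$, after which Lemma~\ref{lem1111}.i) produces the $(1,1,1,3)$-coloring. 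Throughout I write $M=E(G)\setminus E(\mathcal{F})$ for the perfect matching and $v^M$ for the $M$-partner of a vertex $v$.

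First I would establish a counting claim that drives all three cases. Fix two (vertex-disjoint) cycles $C,C'$ of $\mathcal{F}$ and an edge $e_2=xy\in E(C')$, and let $x'$ (resp. $y'$) be the neighbour of $x$ (resp. $y$) on $C'$ other than $y$ (resp. $x$). An edge $uv\in E(C)$ lies within distance $3$ of $e_2$ in $G$ if and only if $u$ or $v$ lies within distance $2$ of $\{x,y\}$. Using only $3$-regularity and the fact that the two $\mathcal{F}$-edges at any vertex stay on that vertex's own cycle, one checks that every vertex of $C$ within distance $2$ of $\{x,y\}$ lies in one of four blocks: $x^M$ together with its two $C$-neighbours; $y^M$ together with its two $C$-neighbours; $(x')^M$; and $(y')^M$ — where the first triple is present only if $x^M\in V(C)$ and the second only if $y^M\in V(C)$. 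Hence the edges of $C$ within distance $3$ of $e_2$ are incident with at most these four blocks of consecutive vertices of $C$, meeting at most $4+4+2+2=12$ edges of $C$ in total, and the block around $x^M$ (resp. $y^M$) disappears as soon as $x^M\notin V(C)$ (resp. $y^M\notin V(C)$). The bound $12$ is sharp, which is exactly why $13$ appears in i) (see Figure~\ref{fig313}).

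Then the three cases fall out by choosing $e_2$ well. In i), let $C$ be the odd cycle of length $\ge 13$ and $C'$ the other one; for any $e_2\in E(C')$ the claim shows at most $12<|E(C)|$ edges of $C$ lie within distance $3$ of $e_2$, so some $e_1\in E(C)$ has $d_G(e_1,e_2)\ge 4$. In ii), the extra hypothesis provides a vertex $z\in V(C')$ whose $M$-partner lies outside $V(C)$; choosing $e_2$ to be an edge of $C'$ incident with $z$ and playing the role of $x$, the block around $z^M$ is empty, so at most $4+2+2=8<9\le|E(C)|$ edges of $C$ lie within distance $3$ of $e_2$, and again a suitable $e_1$ exists. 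In iii), take $e_2$ to be the hypothesised edge $f$ of $C'$: since neither endpoint of $f$ has a neighbour in $C$, both the $x^M$- and $y^M$-blocks are empty, leaving at most $2+2=4<5\le|E(C)|$ edges of $C$ within distance $3$ of $e_2$, so a suitable $e_1$ exists. In every case $A=\{e_1,e_2\}$ is of type I with $G^3[A]$ edgeless, and Lemma~\ref{lem1111}.i) finishes.

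The main obstacle is the counting claim: one must check carefully that $3$-regularity together with the ``cycle-edges stay on the cycle'' property really rules out any further vertex of $C$ being within distance $2$ of $x$ or $y$, and that the four blocks meet no more than $4+4+2+2$ edges of $C$ (the two size-$4$ blocks being triples of consecutive vertices, the other two single vertices). A secondary but genuine point is fixing the right reading of hypothesis ii): the relevant edge should be read as one with an endpoint on $C'$ whose $M$-partner avoids $V(C)$ — equivalently, $C'$ is not entirely $M$-matched into $C$ — since otherwise the condition is vacuous and i) would be subsumed; I would record this reading explicitly before invoking it.
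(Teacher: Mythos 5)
Your proof is correct and follows essentially the same route as the paper: pick an edge $e_2$ on the other odd cycle (chosen according to the hypothesis in cases ii) and iii)), bound the number of edges of the long cycle within distance $3$ of $e_2$ by $12$, $8$ and $4$ respectively, and then place the two selected edges in the radius-$3$ class; your block count is exactly the paper's count, and your explicit appeal to Lemma~\ref{lem1111}.i) is just a formalization of the paper's direct description of the coloring. Your remark on the reading of hypothesis ii) (the matching partner of the chosen endpoint must avoid $C$) matches the paper's intended use of that hypothesis.
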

\begin{proof}
i) Suppose that $C_1$ has length $13$. Let $e$ be an edge from $C_2$. The edge $e$ is adjacent with two edges not belonging to $C_2$. Moreover, each of these two edges is possibly at distance at most $2$ of at most four edges of $C_1$. Also, there are possibly two edges at distance $2$ from $e$ which have one extremity in $C_2$ and the other extremity in $C_1$. Each of these two edges are adjacent with at most two edges of $C_1$. Thus, $e$ can be at distance at most $3$ from at most twelve edges of $C_1$. Figure \ref{fig313} illustrates an edge at distance at most $3$ from exactly twelve edges. Hence, we can color $e$ and one edge of $C_1$ at distance at least $4$ from $e$ with the color of radius $3$ and the remaining uncolored edges with the three colors of radius $1$.\newline
ii) Suppose that $C_1$ has length at least $9$ and there exists $xy\in E(G)$ with $x\in V(C_2)$ and $y\in V(G-C_1)$. Let $e$ be an edge of $C_2$ incident with $x$. In contrast with Property i), the edges adjacent with $e$ are at distance at most $2$ of at most four edges of $C_1$ (in Property i), it was eight). Thus, $e$ is possibly at distance at most $3$ from at most eight edges of $C_1$. Hence, we can color $e$ and one edge of $C_1$ at distance at least $4$ from $e$ with the color of radius $3$ and the remaining uncolored edges with the three colors of radius $1$. \newline
iii) Let $C_{1}$ be the cycle of length at least $5$ and let $e$ be an edge of $C_2$ with no extremity having a neighbor in $C_1$. In contrast with Property i), the edges adjacent with $e$ which are not in $C_{2}$ cannot be at distance at most $2$ of an edge of $C_1$ (in Property ii), $e$ was at distance at most $2$ of eight edges of $C_1$). Thus, $e$ is possibly at distance at most $3$ from at most four edges of $C_1$. Hence, we can color $e$ and one edge of $C_1$ at distance at least $4$ from $e$ with the color of radius $3$ and the remaining uncolored edges with the three colors of radius $1$.
\end{proof}

Since the flower snarks of order at least 20 have a 2-factor containing two cycles with one cycle among them of order at least 15, Proposition~\ref{1*3}.i) implies that flower snarks of order at least 20 are $(1,1,1,3)$-colorable.

Similarly, generalized Petersen graphs of order at least 26 have a $2$-factor containing two cycles with one cycle among them of order at least 13. Moreover, generalized Pertersen graphs of order which is a multiple of 4 are $3$-edge-colorable.
Thus, the generalized Petersen graphs of order at least 24 are $(1,1,1,3)$-colorable.

Note that the Pertersen and Tietze graphs are non $(1,1,1,3)$-colorable since they have oddness $2$ and the distance between any two edges in any $2$-factor of these graphs is at most $3$. 
By computer we have verified that every cubic graph of order at most 22, except the Petersen graph and the Tietze graph, is $(1,1,1,3)$-colorable. 

\begin{cor}
The Petersen graph is the only generalized Petersen graph which is not $(1,1,1,3)$-colorable and the Tietze graph is the only flower snark which is not $(1,1,1,3)$-colorable.
\end{cor}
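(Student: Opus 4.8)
The plan is to obtain the corollary by packaging the results already established in this subsection, the only genuinely new work being to locate the two exceptional graphs -- the Petersen graph $P$ and the Tietze graph $T$ -- within the two families under consideration. Recall the three inputs: every cubic graph of order at most $22$ other than $P$ and $T$ is $(1,1,1,3)$-colorable (the computer check); every generalized Petersen graph of order at least $24$ is $(1,1,1,3)$-colorable; and every flower snark of order at least $20$ is $(1,1,1,3)$-colorable (from Proposition~\ref{1*3}.i)). Recall also that $P$ and $T$ are themselves \emph{not} $(1,1,1,3)$-colorable, since both have oddness $2$ and, in every $2$-factor, all pairs of edges lie at distance at most $3$.

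First I would treat the generalized Petersen graphs. Since $GP(n,k)$ has order $2n$, the ones of order at most $22$ are exactly those with $3\le n\le 11$, and by the computer check each of these is $(1,1,1,3)$-colorable except possibly $P=GP(5,2)$ and $T$. But $T$ is not a generalized Petersen graph: the only generalized Petersen graphs of order $12$ are $GP(6,1)$ and $GP(6,2)$, which contain no triangle and two vertex-disjoint triangles respectively, whereas $T$ contains exactly one triangle. Hence the only generalized Petersen graph of order at most $22$ that is not $(1,1,1,3)$-colorable is $P$; combined with the fact that every generalized Petersen graph of order at least $24$ is $(1,1,1,3)$-colorable, and with $P$ being genuinely not $(1,1,1,3)$-colorable, this shows $P$ is the unique generalized Petersen graph that is not $(1,1,1,3)$-colorable.

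Next I would treat the flower snarks. The flower snark $J_m$ has $4m$ vertices (for odd $m\ge 3$), and $J_3$ is precisely the Tietze graph $T$, so the flower snarks of order at least $20$ are exactly the $J_m$ with $m\ge 5$; by the third input all of these are $(1,1,1,3)$-colorable, while $J_3=T$ is not. Hence $T$ is the unique flower snark that is not $(1,1,1,3)$-colorable, and the corollary follows.

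The main obstacle -- modest, but worth pinning down carefully rather than merely asserting -- is the pair of structural identifications: that $J_3$ is the Tietze graph (so that $T$ is indeed a flower snark, making the statement about flower snarks non-vacuous) and that $T$ is not isomorphic to any generalized Petersen graph (so that $T$ does not spuriously join the generalized Petersen family). Both are finite checks -- for instance via triangle counts, girth, or the data already used for the computer verification -- after which the corollary is a direct assembly of the inputs listed above.
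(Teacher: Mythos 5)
Your proposal is correct and follows essentially the same route as the paper: it assembles the computer verification for cubic graphs of order at most $22$, the large-order results for generalized Petersen graphs (order at least $24$) and flower snarks (order at least $20$) derived from Proposition 3.4.i), and the non-colorability of the Petersen and Tietze graphs. The extra identifications you spell out (that $J_3$ is the Tietze graph and that the Tietze graph is not a generalized Petersen graph) are left implicit in the paper but are consistent with its argument.
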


Finally, in the next proposition, we prove that every cubic graph of oddness 4 is also $(1,1,1,3)$-colorable when the odd cycles of the 2-factor are sufficiently large.

\begin{prop}\label{11*3}
Every cubic graph of oddness $4$ having a $2$-factor with four odd cycles and with three cycles among them of length at least $13$ is $(1,1,1,3)$-colorable.
\end{prop}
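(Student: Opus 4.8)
The plan is to apply Lemma~\ref{lem1111}.i) with $k=3$ and $\ell=1$: it suffices to produce a set $A$ of type I with $\chi(G^{3}[A])\le 1$, i.e.\ with $G^{3}[A]$ edgeless. Fixing a $2$-factor $\mathcal F$ realising the oddness and writing $C_1,C_2,C_3,C_4$ for its odd cycles with $|C_1|,|C_2|,|C_3|\ge 13$, this amounts to choosing one edge $e_i$ in each $C_i$ so that $d_G(e_i,e_j)\ge 4$ for all $i\ne j$. Since the oddness of $G$ equals $4$, the $2$-factor $\mathcal F$ has the minimum number of odd cycles, so Proposition~\ref{fouqq} is available.

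I would use two tools. The first is the counting from the proof of Proposition~\ref{1*3}: for any edge $e$ and any cycle $C$ of $\mathcal F$, at most $12$ edges of $C$ are at distance at most $3$ from $e$, and they form at most four arcs of $C$ (two of length at most $4$ coming from the two non-$\mathcal F$ edges incident with $e$, and two of length at most $2$ coming from the two non-$\mathcal F$ edges at distance $2$ from $e$ that reach $C$); call this set the \emph{shadow} of $e$ on $C$. The second tool is a structural consequence of Proposition~\ref{fouqq}: if an edge $uv$ of an odd cycle $C_j$ has the out-neighbour of $u$ in an odd cycle $C_i$ and the out-neighbour of $v$ in an odd cycle $C_k$ with $C_i,C_j,C_k$ pairwise distinct, then these two non-$\mathcal F$ edges, the edge $uv$ and one cycle edge at each of the two out-neighbours induce a path of length $5$ meeting three distinct odd cycles, which is forbidden. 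Equivalently, along each odd cycle the vertices whose out-neighbour lies in another fixed odd cycle occur in runs separated by vertices carrying a chord or an out-neighbour on an even cycle. In particular the two ``long'' arcs of the shadow of an edge $e$ of an odd cycle lie on at most one odd cycle other than the one containing $e$, and by choosing $e$ inside a run that avoids the connections to a prescribed odd cycle $C$ one forces the shadow of $e$ on $C$ to be short (only the two short arcs, hence at most $4$ edges) or empty.

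With these in hand I would pick the edges in the order $e_4,e_3,e_2,e_1$. First take $e_4\in C_4$ (chosen, using the run-structure, to have short shadow on one of the long cycles). Then pick $e_3\in C_3$ avoiding the shadow of $e_4$ on $C_3$ — possible since that shadow has at most $12$ edges while $|C_3|\ge 13$ — and, using the remaining freedom on $C_3$ together with the run-structure, also chosen so that its shadow on one of the two remaining long cycles (say $C_2$ after relabelling) is short. Next pick $e_2\in C_2$ avoiding the union of the shadows of $e_4$ and $e_3$ on $C_2$; this union is kept below $13$ by the steering above, so a legal $e_2$ exists, and one again chooses it so that its shadow on $C_1$ is short. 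Finally pick $e_1\in C_1$ avoiding the union of the shadows of $e_2,e_3,e_4$ on $C_1$: by construction the contributions of $e_2$ and $e_3$ are short, so this union has fewer than $13$ edges and a valid $e_1$ exists. The resulting $A=\{e_1,e_2,e_3,e_4\}$ is of type I with $G^{3}[A]$ edgeless, and Lemma~\ref{lem1111}.i) yields the $(1,1,1,3)$-coloring.

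The main obstacle is exactly the bookkeeping hidden in the last two paragraphs: a blind greedy choice fails, because the union of the shadows of three previously chosen edges on a long cycle can a priori reach $3\cdot 12=36$ edges, far beyond $13$. Making the argument go through requires (i) the Proposition~\ref{fouqq} run-structure, both to confine the ``long'' part of each shadow to a single other odd cycle and to locate, on a long cycle, runs free of connections to a prescribed odd cycle; and (ii) a case analysis according to which pairs among $C_1,C_2,C_3,C_4$ are highly connected and how their runs are arranged, re-selecting an earlier edge whenever a later long cycle would otherwise be over-constrained. This is precisely where all three length hypotheses are used: each of the three successive choices on a long cycle must find a gap of length at least $13$ against a shadow that the structure keeps at most $12$.
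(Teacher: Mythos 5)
You set up the right framework---Lemma~\ref{lem1111}.i) with a type~I set $A=\{e_1,e_2,e_3,e_4\}$ at pairwise distance at least $4$, the bound of at most twelve edges of a cycle lying in the ``shadow'' of a given edge (as in Proposition~\ref{1*3}.i)), and the consequence of Proposition~\ref{fouqq} that the two long arcs of a shadow meet at most one odd cycle other than the one containing the edge---but the proof itself is not carried out. The whole difficulty of the statement is concentrated in the step you defer to ``a case analysis according to which pairs among $C_1,\dots,C_4$ are highly connected \dots re-selecting an earlier edge whenever a later long cycle would otherwise be over-constrained'': nothing in your text shows this steering is always possible. Two concrete holes: (a) a run of vertices of an odd cycle avoiding connections to a prescribed odd cycle $C$ need not exist (for instance, every vertex of the short cycle may send its non-$\mathcal{F}$ edge to the same long cycle), so the claim that one can always force the shadow on a prescribed cycle to be short is unjustified as stated; (b) when you pick $e_3\in C_3$ you impose two conditions simultaneously---avoid the up-to-twelve-edge shadow of $e_4$ on $C_3$ (which, when $|C_3|=13$, can leave a single admissible edge) and have a short shadow on $C_2$---and no argument is given that both can be met beyond the unproved ``re-selection'' clause. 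As written, this is a plan with the decisive combinatorial work missing, not a proof.

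For comparison, the paper's proof avoids the simultaneous-satisfaction problem altogether. It first uses the $+/-$ labelling of Theorem~\ref{2*3} to choose $e_1$ in the \emph{smallest} odd cycle so that its entire shadow within $\mathcal{F}'$ is confined to $C_1\cup C_2$ for a single cycle $C_2$; then $e_2\in C_2$ only has to avoid at most twelve edges, which $|C_2|\ge 13$ permits. It then distinguishes two cases: either the shadow of $e_2$ is likewise confined to $C_2\cup C_3$, and the greedy chain $e_3,e_4$ continues with one twelve-edge avoidance at each step; or the shadow of $e_2$ meets both $C_3$ and $C_4$, in which case the labels around $e_2$ are analysed via Proposition~\ref{fouqq}, $e_3$ is taken next to the neighbour in $C_3$ of a vertex adjacent to an endpoint of $e_2$, and one shows that $e_2$ and $e_3$ each forbid at most six edges of $C_4$, so $|C_4|\ge 13$ leaves room for $e_4$. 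Your sketch would need an explicit argument of this kind covering every configuration before it establishes the proposition.
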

\begin{proof}
Let $G$ be a cubic graph of oddness $4$ and let $\mathcal{F}'$ be the set containing the four odd cycles from the $2$-factor.
Let $C_{1}$ be the odd cycle having the minimum number of vertices in $\mathcal{F}'$.
As in proof of Theorem \ref{2*3}, we give labels to the vertices of $\mathcal{F}'$ as follows. If a vertex belonging to $C$ from $\mathcal{F}'$ has a neighbor in a different cycle of $\mathcal{F}'$, we label it by $+$, otherwise we label it by $-$. Figure \ref{figodd13} illustrates a labeling of the vertices of the odd cycles of a $2$-factor.

Let $e_{1}$ be an edge of $C_{1}$ such that there exists a cycle $C_{2}$ from $\mathcal{F}'$ such that all edges at distance at most $3$ from $e_{1}$ in $\mathcal{F}'$ are in $C_{1}\cup C_{2}$. Such an edge can be obtained by proceeding as in the proof of Theorem \ref{2*3}. As in the proof of Proposition \ref{1*3}.i), $e_{1}$ is at distance at most $3$ of at most twelve edges of $C_{2}$. Let $e_{2}$ be an edge of $C_{2}$ being at distance at least $4$ of $e_{1}$. Such an edge exists since $C_{2}$ has at least thirteen edges.

First, suppose there exists a cycle $C_{3}$ of $\mathcal{F}'$ such that all edges at distance at most $3$ from $e_{2}$ are in $C_{2}\cup C_{3}$. As previously, by Proposition \ref{1*3}.i), $e_{2}$ is at distance at most $3$ of at most twelve edges of of $C_{3}$. Let $e_{3}$ be an edge of $C_{3}$ being at distance at least $4$ of $e_{2}$. Also in this case, $e_{3}$ is at distance at most $3$ of at most twelve edges of $C_{4}$, $C_{4}$ being the odd cycle of $\mathcal{F}'-(C_{1}\cup C_{2}\cup C_{3})$. Consequently, there exists an edge $e_{4}$ of $C_{4}$ such that $e_{4}$ is at distance at least $4$ of $e_{3}$. Finally, $A=\{e_{1},e_{2},e_{3},e_{4}\}$ is a set of type I such that $G^{4}[A]$ is an empty graph.

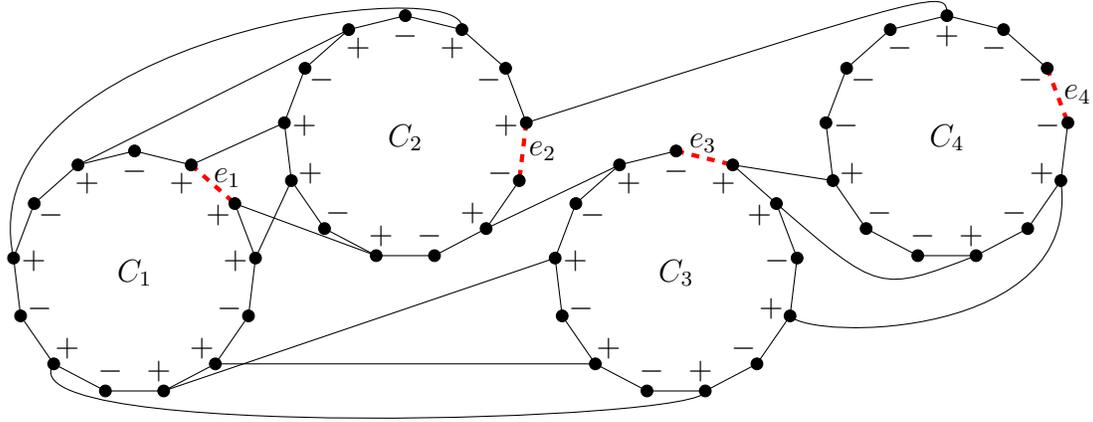
\begin{figure}[t]
\begin{center}
\begin{tikzpicture}[scale=0.9]
\draw (1.8*0.464723,1.8*0.88545) -- (1.8*-0.992708+4,1.8*0.12053+2);
\draw (1.8*0.822983,1.8*0.56806) -- (1.8*-0.239215+4,1.8*-0.970941+2);
\draw (1.8*0.992708,1.8*0.12053)  -- (1.8*-0.935016+4,1.8*-0.35460+2);
\draw (1.8*-0.464723,1.8*0.88545)  -- (1.8*-0.464723+4,1.8*0.88545+2);
\draw (1.8*0.663122+4,1.8*-0.74851+2) -- (1.8*-0.464723+8,1.8*0.88545);
\draw (1.8*0.663122,1.8*-0.74851)  -- (1.8*-0.663122+8,1.8*-0.74851);
\draw (1.8*0.239215,1.8*-0.970941)  -- (1.8*-0.992708+8,1.8*0.12053);
\draw (1.8*-0.992708,1.8*0.12053)  .. controls  (-2.5,3.5) and (5,4.5).. (1.8*0.464723+4,1.8*0.88545+2);
\draw (1.8*-0.663122,1.8*-0.74851)  .. controls  (-2,-2.5) and (8.5,-2.2) .. (1.8*0.239215+8,1.8*-0.970941);
\draw (1.8*0.992708+4,1.8*0.12053+2) .. controls  (12,4.2) .. (12,1.8+2);
\draw (1.8*0.464723+8,1.8*0.88545) -- (1.8*-0.935016+12,1.8*-0.35460+2);
\draw (1.8*0.822983+8,1.8*0.56806) .. controls (11,-0.3).. (1.8*0.239215+12,1.8*-0.970941+2);
\draw  (1.8*0.935016+8,1.8*-0.35460) .. controls (10,-1) and (14,-1).. (1.8*0.935016+12,1.8*-0.35460+2);

\draw (0,1.8) -- (1.8*0.464723,1.8*0.88545);
\draw[ultra thick,dashed,color=red] (1.8*0.464723,1.8*0.88545) -- (1.8*0.822983,1.8*0.56806);
\draw (1.8*0.822983,1.8*0.56806) -- (1.8*0.992708,1.8*0.12053);
\draw (1.8*0.992708,1.8*0.12053) -- (1.8*0.935016,1.8*-0.35460);
\draw (1.8*0.935016,1.8*-0.35460) -- (1.8*0.663122,1.8*-0.74851);
\draw (1.8*0.663122,1.8*-0.74851) -- (1.8*0.239215,1.8*-0.970941);
\draw (1.8*0.239215,1.8*-0.970941) -- (1.8*-0.239215,1.8*-0.970941);
\draw (1.8*-0.239215,1.8*-0.970941) -- (1.8*-0.663122,1.8*-0.74851);
\draw (1.8*-0.663122,1.8*-0.74851) -- (1.8*-0.935016,1.8*-0.35460);
\draw (1.8*-0.935016,1.8*-0.35460) -- (1.8*-0.992708,1.8*0.12053);
\draw (1.8*-0.992708,1.8*0.12053) -- (1.8*-0.822983,1.8*0.56806);
\draw (1.8*-0.822983,1.8*0.56806) -- (1.8*-0.464723,1.8*0.88545);
\draw (1.8*-0.464723,1.8*0.88545) -- (0,1.8);

\node at (0,1.8) [ circle,fill=black,draw=black,scale=0.4] {};
\node at (1.8*0.464723,1.8*0.88545) [ circle,fill=black,draw=black,scale=0.4] {};
\node at (1.8*0.822983,1.8*0.56806) [ circle,fill=black,draw=black,scale=0.4] {};
\node at (1.8*0.992708,1.8*0.12053) [ circle,fill=black,draw=black,scale=0.4] {};
\node at (1.8*0.935016,1.8*-0.35460) [ circle,fill=black,draw=black,scale=0.4] {};
\node at (1.8*0.663122,1.8*-0.74851) [ circle,fill=black,draw=black,scale=0.4] {};
\node at (1.8*0.239215,1.8*-0.970941) [ circle,fill=black,draw=black,scale=0.4] {};
\node at (1.8*-0.239215,1.8*-0.970941) [ circle,fill=black,draw=black,scale=0.4] {};
\node at (1.8*-0.663122,1.8*-0.74851) [ circle,fill=black,draw=black,scale=0.4] {};
\node at (1.8*-0.935016,1.8*-0.35460) [ circle,fill=black,draw=black,scale=0.4] {};
\node at (1.8*-0.992708,1.8*0.12053) [ circle,fill=black,draw=black,scale=0.4] {};
\node at (1.8*-0.822983,1.8*0.56806) [ circle,fill=black,draw=black,scale=0.4] {};
\node at (1.8*-0.464723,1.8*0.88545) [ circle,fill=black,draw=black,scale=0.4] {};

\draw[ultra thick,dashed,color=red] (0+8,1.8) -- (1.8*0.464723+8,1.8*0.88545);
\draw (1.8*0.464723+8,1.8*0.88545) -- (1.8*0.822983+8,1.8*0.56806);
\draw (1.8*0.822983+8,1.8*0.56806) -- (1.8*0.992708+8,1.8*0.12053);
\draw (1.8*0.992708+8,1.8*0.12053) -- (1.8*0.935016+8,1.8*-0.35460);
\draw (1.8*0.935016+8,1.8*-0.35460) -- (1.8*0.663122+8,1.8*-0.74851);
\draw (1.8*0.663122+8,1.8*-0.74851) -- (1.8*0.239215+8,1.8*-0.970941);
\draw (1.8*0.239215+8,1.8*-0.970941) -- (1.8*-0.239215+8,1.8*-0.970941);
\draw (1.8*-0.239215+8,1.8*-0.970941) -- (1.8*-0.663122+8,1.8*-0.74851);
\draw (1.8*-0.663122+8,1.8*-0.74851) -- (1.8*-0.935016+8,1.8*-0.35460);
\draw (1.8*-0.935016+8,1.8*-0.35460) -- (1.8*-0.992708+8,1.8*0.12053);
\draw (1.8*-0.992708+8,1.8*0.12053) -- (1.8*-0.822983+8,1.8*0.56806);
\draw (1.8*-0.822983+8,1.8*0.56806) -- (1.8*-0.464723+8,1.8*0.88545);
\draw (1.8*-0.464723+8,1.8*0.88545) -- (0+8,1.8);

\node at (0+8,1.8) [ circle,fill=black,draw=black,scale=0.4] {};
\node at (1.8*0.464723+8,1.8*0.88545) [ circle,fill=black,draw=black,scale=0.4] {};
\node at (1.8*0.822983+8,1.8*0.56806) [ circle,fill=black,draw=black,scale=0.4] {};
\node at (1.8*0.992708+8,1.8*0.12053) [ circle,fill=black,draw=black,scale=0.4] {};
\node at (1.8*0.935016+8,1.8*-0.35460) [ circle,fill=black,draw=black,scale=0.4] {};
\node at (1.8*0.663122+8,1.8*-0.74851) [ circle,fill=black,draw=black,scale=0.4] {};
\node at (1.8*0.239215+8,1.8*-0.970941) [ circle,fill=black,draw=black,scale=0.4] {};
\node at (1.8*-0.239215+8,1.8*-0.970941) [ circle,fill=black,draw=black,scale=0.4] {};
\node at (1.8*-0.663122+8,1.8*-0.74851) [ circle,fill=black,draw=black,scale=0.4] {};
\node at (1.8*-0.935016+8,1.8*-0.35460) [ circle,fill=black,draw=black,scale=0.4] {};
\node at (1.8*-0.992708+8,1.8*0.12053) [ circle,fill=black,draw=black,scale=0.4] {};
\node at (1.8*-0.822983+8,1.8*0.56806) [ circle,fill=black,draw=black,scale=0.4] {};
\node at (1.8*-0.464723+8,1.8*0.88545) [ circle,fill=black,draw=black,scale=0.4] {};

\draw (0+4,1.8+2) -- (1.8*0.464723+4,1.8*0.88545+2);
\draw (1.8*0.464723+4,1.8*0.88545+2) -- (1.8*0.822983+4,1.8*0.56806+2);
\draw (1.8*0.822983+4,1.8*0.56806+2) -- (1.8*0.992708+4,1.8*0.12053+2);
\draw[ultra thick,dashed,color=red] (1.8*0.992708+4,1.8*0.12053+2) -- (1.8*0.935016+4,1.8*-0.35460+2);
\draw (1.8*0.935016+4,1.8*-0.35460+2) -- (1.8*0.663122+4,1.8*-0.74851+2);
\draw (1.8*0.663122+4,1.8*-0.74851+2) -- (1.8*0.239215+4,1.8*-0.970941+2);
\draw (1.8*0.239215+4,1.8*-0.970941+2) -- (1.8*-0.239215+4,1.8*-0.970941+2);
\draw (1.8*-0.239215+4,1.8*-0.970941+2) -- (1.8*-0.663122+4,1.8*-0.74851+2);
\draw (1.8*-0.663122+4,1.8*-0.74851+2) -- (1.8*-0.935016+4,1.8*-0.35460+2);
\draw (1.8*-0.935016+4,1.8*-0.35460+2) -- (1.8*-0.992708+4,1.8*0.12053+2);
\draw (1.8*-0.992708+4,1.8*0.12053+2) -- (1.8*-0.822983+4,1.8*0.56806+2);
\draw (1.8*-0.822983+4,1.8*0.56806+2) -- (1.8*-0.464723+4,1.8*0.88545+2);
\draw (1.8*-0.464723+4,1.8*0.88545+2) -- (0+4,1.8+2);

\node at (0+4,1.8+2) [ circle,fill=black,draw=black,scale=0.4] {};
\node at (1.8*0.464723+4,1.8*0.88545+2) [ circle,fill=black,draw=black,scale=0.4] {};
\node at (1.8*0.822983+4,1.8*0.56806+2) [ circle,fill=black,draw=black,scale=0.4] {};
\node at (1.8*0.992708+4,1.8*0.12053+2) [ circle,fill=black,draw=black,scale=0.4] {};
\node at (1.8*0.935016+4,1.8*-0.35460+2) [ circle,fill=black,draw=black,scale=0.4] {};
\node at (1.8*0.663122+4,1.8*-0.74851+2) [ circle,fill=black,draw=black,scale=0.4] {};
\node at (1.8*0.239215+4,1.8*-0.970941+2) [ circle,fill=black,draw=black,scale=0.4] {};
\node at (1.8*-0.239215+4,1.8*-0.970941+2) [ circle,fill=black,draw=black,scale=0.4] {};
\node at (1.8*-0.663122+4,1.8*-0.74851+2) [ circle,fill=black,draw=black,scale=0.4] {};
\node at (1.8*-0.935016+4,1.8*-0.35460+2) [ circle,fill=black,draw=black,scale=0.4] {};
\node at (1.8*-0.992708+4,1.8*0.12053+2) [ circle,fill=black,draw=black,scale=0.4] {};
\node at (1.8*-0.822983+4,1.8*0.56806+2) [ circle,fill=black,draw=black,scale=0.4] {};
\node at (1.8*-0.464723+4,1.8*0.88545+2) [ circle,fill=black,draw=black,scale=0.4] {};

\draw (0+12,1.8+2) -- (1.8*0.464723+12,1.8*0.88545+2);
\draw (1.8*0.464723+12,1.8*0.88545+2) -- (1.8*0.822983+12,1.8*0.56806+2);
\draw[ultra thick,dashed,color=red] (1.8*0.822983+12,1.8*0.56806+2) -- (1.8*0.992708+12,1.8*0.12053+2);
\draw (1.8*0.992708+12,1.8*0.12053+2) -- (1.8*0.935016+12,1.8*-0.35460+2);
\draw (1.8*0.935016+12,1.8*-0.35460+2) -- (1.8*0.663122+12,1.8*-0.74851+2);
\draw (1.8*0.663122+12,1.8*-0.74851+2) -- (1.8*0.239215+12,1.8*-0.970941+2);
\draw (1.8*0.239215+12,1.8*-0.970941+2) -- (1.8*-0.239215+12,1.8*-0.970941+2);
\draw (1.8*-0.239215+12,1.8*-0.970941+2) -- (1.8*-0.663122+12,1.8*-0.74851+2);
\draw (1.8*-0.663122+12,1.8*-0.74851+2) -- (1.8*-0.935016+12,1.8*-0.35460+2);
\draw (1.8*-0.935016+12,1.8*-0.35460+2) -- (1.8*-0.992708+12,1.8*0.12053+2);
\draw (1.8*-0.992708+12,1.8*0.12053+2) -- (1.8*-0.822983+12,1.8*0.56806+2);
\draw (1.8*-0.822983+12,1.8*0.56806+2) -- (1.8*-0.464723+12,1.8*0.88545+2);
\draw (1.8*-0.464723+12,1.8*0.88545+2) -- (0+12,1.8+2);

\node at (0+12,1.8+2) [ circle,fill=black,draw=black,scale=0.4] {};
\node at (1.8*0.464723+12,1.8*0.88545+2) [ circle,fill=black,draw=black,scale=0.4] {};
\node at (1.8*0.822983+12,1.8*0.56806+2) [ circle,fill=black,draw=black,scale=0.4] {};
\node at (1.8*0.992708+12,1.8*0.12053+2) [ circle,fill=black,draw=black,scale=0.4] {};
\node at (1.8*0.935016+12,1.8*-0.35460+2) [ circle,fill=black,draw=black,scale=0.4] {};
\node at (1.8*0.663122+12,1.8*-0.74851+2) [ circle,fill=black,draw=black,scale=0.4] {};
\node at (1.8*0.239215+12,1.8*-0.970941+2) [ circle,fill=black,draw=black,scale=0.4] {};
\node at (1.8*-0.239215+12,1.8*-0.970941+2) [ circle,fill=black,draw=black,scale=0.4] {};
\node at (1.8*-0.663122+12,1.8*-0.74851+2) [ circle,fill=black,draw=black,scale=0.4] {};
\node at (1.8*-0.935016+12,1.8*-0.35460+2) [ circle,fill=black,draw=black,scale=0.4] {};
\node at (1.8*-0.992708+12,1.8*0.12053+2) [ circle,fill=black,draw=black,scale=0.4] {};
\node at (1.8*-0.822983+12,1.8*0.56806+2) [ circle,fill=black,draw=black,scale=0.4] {};
\node at (1.8*-0.464723+12,1.8*0.88545+2) [ circle,fill=black,draw=black,scale=0.4] {};

\node at (0,1.5)  {$-$};
\node at (1.5*0.464723,1.5*0.88545)  {$+$};
\node at (2.2*0.622983,2*0.70)  {$e_1$};
\node at (1.5*0.822983,1.5*0.56806)  {$+$};
\node at (1.5*0.992708,1.5*0.12053)  {$+$};
\node at (1.5*0.935016,1.5*-0.35460) {$-$};
\node at (1.5*0.663122,1.5*-0.74851)  {$+$};
\node at (1.5*0.239215,1.5*-0.970941) {$+$};
\node at (1.5*-0.239215,1.5*-0.970941) {$-$};
\node at (1.5*-0.663122,1.5*-0.74851) {$+$};
\node at (1.5*-0.935016,1.5*-0.35460) {$-$};
\node at (1.5*-0.992708,1.5*0.12053) {$+$};
\node at (1.5*-0.822983,1.5*0.56806) {$-$};
\node at (1.5*-0.464723,1.5*0.88545) {$+$};

\node at (0+4,1.5+2)  {$-$};
\node at (1.5*0.464723+4,1.5*0.88545+2)  {$+$};
\node at (1.5*0.822983+4,1.5*0.56806+2)  {$-$};
\node at (1.5*0.992708+4,1.5*0.12053+2)  {$+$};
\node at (2.1*0.965016+4,2*-0.10+2)  {$e_2$};
\node at (1.5*0.935016+4,1.5*-0.35460+2) {$-$};
\node at (1.5*0.663122+4,1.5*-0.74851+2)  {$+$};
\node at (1.5*0.239215+4,1.5*-0.970941+2) {$-$};
\node at (1.5*-0.239215+4,1.5*-0.970941+2) {$+$};
\node at (1.5*-0.663122+4,1.5*-0.74851+2) {$-$};
\node at (1.5*-0.935016+4,1.5*-0.35460+2) {$+$};
\node at (1.5*-0.992708+4,1.5*0.12053+2) {$+$};
\node at (1.5*-0.822983+4,1.5*0.56806+2) {$-$};
\node at (1.5*-0.464723+4,1.5*0.88545+2) {$+$};

\node at (0+8,1.5)  {$-$};
\node at (1.6*0.494723/2+8,1.6*1.2)  {$e_3$};
\node at (1.5*0.464723+8,1.5*0.88545)  {$+$};
\node at (1.5*0.822983+8,1.5*0.56806)  {$+$};
\node at (1.5*0.992708+8,1.5*0.12053)  {$-$};
\node at (1.5*0.935016+8,1.5*-0.35460) {$+$};
\node at (1.5*0.663122+8,1.5*-0.74851)  {$-$};
\node at (1.5*0.239215+8,1.5*-0.970941) {$+$};
\node at (1.5*-0.239215+8,1.5*-0.970941) {$-$};
\node at (1.5*-0.663122+8,1.5*-0.74851) {$+$};
\node at (1.5*-0.935016+8,1.5*-0.35460) {$-$};
\node at (1.5*-0.992708+8,1.5*0.12053) {$+$};
\node at (1.5*-0.822983+8,1.5*0.56806) {$-$};
\node at (1.5*-0.464723+8,1.5*0.88545) {$+$};

\node at (0+12,1.5+2)  {$+$};
\node at (1.5*0.464723+12,1.5*0.88545+2)  {$-$};
\node at (1.5*0.822983+12,1.5*0.56806+2)  {$-$};
\node at (2*0.965016+12,2*0.32+2)  {$e_4$};
\node at (1.5*0.992708+12,1.5*0.12053+2)  {$-$};
\node at (1.5*0.935016+12,1.5*-0.35460+2) {$+$};
\node at (1.5*0.663122+12,1.5*-0.74851+2)  {$-$};
\node at (1.5*0.239215+12,1.5*-0.970941+2) {$+$};
\node at (1.5*-0.239215+12,1.5*-0.970941+2) {$-$};
\node at (1.5*-0.663122+12,1.5*-0.74851+2) {$-$};
\node at (1.5*-0.935016+12,1.5*-0.35460+2) {$+$};
\node at (1.5*-0.992708+12,1.5*0.12053+2) {$-$};
\node at (1.5*-0.822983+12,1.5*0.56806+2) {$-$};
\node at (1.5*-0.464723+12,1.5*0.88545+2) {$-$};

\node at (0,0) {$C_1$};
\node at (4,2) {$C_2$};
\node at (8,0) {$C_3$};
\node at (12,2) {$C_4$};
\end{tikzpicture}
\end{center}
\caption{An illustration of the odd cycles $\{C_{1},C_{2},C_{3},C_{4}\}$ from a 2-factor considered in proof of Proposition \ref{11*3} (dashed lines: edges of $A$; the edge having end vertices labeled by $-$ in $G-\mathcal{F}$ are omitted in order to simplify the figure).}
\label{figodd13}
\end{figure}

Second, by excluding the first case, the two odd cycles $C_{3}$ and $C_4$ of $\mathcal{F}'-(C_{1}\cup C_{2})$ both contain edges at distance at most $3$ from $e_{2}$. Figure~\ref{figodd13} illustrates the edge $e_{2}$ in this case.
Let $u_1$ and $u_{2}$ be the extremities of $e_{2}$ and suppose $u_0$, $u_1$, $u_2$ and $u_{3}$ are consecutive vertices of $C_{2}$. By hypothesis and by Proposition \ref{fouqq}, these vertices are either labeled by $+,-,+,+$, or by $+,-,+,-$, or by $-,+,-,+$ or by $+,+,-,+$. Up to symmetry, we can suppose that these vertices are either labeled by $+,-,+,+$ or by $+,-,+,-$ (as in Figure~\ref{figodd13}). Let $C_{3}$ be the odd cycle of $\mathcal{F}'-(C_{1}\cup C_{2})$ containing a neighbor of $u_{0}$. By excluding the first case, $u_2$ and $u_3$ have no neighbor in $C_3$. Let $v_{0}$ be the neighbor of $u_0$ in $C_{3}$. Suppose $v_{0}$, $v_{1}$, $v_{2}$ and $v_{3}$ are consecutive vertices of $C_{3}$ and let $e_{3}=v_{1}v_{2}$. By Hypothesis $v_{0}$ is labeled by $+$. Note that  $v_{0}$, $v_{1}$, $v_{2}$ and $v_{3}$ are labeled consecutively by $+,-,+,+$ (as in Figure~\ref{figodd13}), or by $+,-,-,+$, or by $+,-,-,+$ or by $+,-,-,-$.
Consequently, there are at most six edges at distance at most $3$ from $e_{2}$ in $C_{4}$ and at most six edges at distance at most $3$ from $e_{3}$ in $C_{4}$. Since $C_{4}$ has length at least thirteen, there remains an edge $e_{4}$ in $C_4$ such that $e_{4}$ is at distance at least $4$ from both $e_{2}$ and $e_{3}$, as Figure~\ref{figodd13} illustrates. Finally, $A=\{e_{1},e_{2},e_{3},e_{4}\}$ is a set of type I such that $G^{4}[A]$ is an empty graph and Lemma~\ref{lem1111}.i) allows to conclude.
\end{proof}

We have performed computations on cubic graphs with small order that lead us to believe that all cubic graphs, except two snarks, are $(1,1,1,3)$-colorable. 
\begin{con}\label{con2}
Every cubic graph different from the Petersen graph and from the Tietze graph is $(1,1,1,3)$-colorable.
\end{con}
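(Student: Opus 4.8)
The plan is to reduce the conjecture to snarks and then to a finite local analysis of short, tightly linked odd cycles, using Lemma~\ref{lem1111}.i) as the engine. First observe that if $G$ is $3$-edge-colorable, then a proper $3$-edge-colouring is already a $(1,1,1)$-colouring, hence a $(1,1,1,3)$-colouring (put $X_4=\emptyset$, or move a single edge into $X_4$); a cubic graph with no $2$-factor has a bridge (a $2$-factor would be the complement of a perfect matching), and such graphs are handled by a reduction at cut edges together with the computer check below. So assume from now on that $G$ is a snark, fix a $2$-factor $\mathcal{F}$ with a minimum number of odd cycles, and let $\mathcal{F}'$ be the set of these odd cycles. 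By Lemma~\ref{lem1111}.i) it suffices to exhibit a set $A$ of type I in $\mathcal{F}'$ with $G^{3}[A]$ edgeless, i.e.\ a transversal of $\mathcal{F}'$ whose edges are pairwise at distance at least $4$ in $G$.

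Second, run the greedy construction from the proof of Theorem~\ref{2*3}: it already produces a type I set $A$ with $\Delta(G^{3}[A])\le 1$, so $G^{3}[A]$ is a (possibly empty) matching. If it is empty we are done; otherwise each of its edges is a conflicting pair $\{f_{C},f_{C'}\}$ with $f_{C}\in C$, $f_{C'}\in C'$, and $C,C'$ two odd cycles at distance at most $3$. By Proposition~\ref{fouqq} every odd cycle is cross-linked to at most one other, so the conflicting pairs involve pairwise disjoint cycles and can be treated independently; moreover, for each such pair, the only edges of $C'$ that can conflict with $f_C$ lie in $C\cup C'$. The edge counts underlying Propositions~\ref{1*3} and~\ref{11*3} then force both $C$ and $C'$ to be \emph{short} (each of length at most $12$, since otherwise a far enough choice of the transversal edge inside the long cycle kills the conflict) and \emph{densely cross-linked}: the edges joining $C$ to $C'$, and those joining $C\cup C'$ to the rest of $G$, realise one of only finitely many patterns once one imposes the $+/-$ labelling constraints of the proof of Theorem~\ref{2*3} together with Proposition~\ref{fouqq}.

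Third, for each such local pattern one shows that at least one of the following repairs works: (a) choosing the transversal edges $f_{C}$ and/or $f_{C'}$ differently within $C$, $C'$ removes the conflict; (b) a $2$-factor exchange supported on $E(C)\cup E(C')$ and the linking edges yields a new $2$-factor, still with a minimum number of odd cycles, on which no conflict survives; or (c) the pattern, combined with $3$-regularity and connectivity, forces $G$ to be the Petersen or the Tietze graph. Performing the independent repairs (a)/(b) simultaneously over all conflicting pairs produces the desired transversal $A$, and Lemma~\ref{lem1111}.i) then gives the $(1,1,1,3)$-colouring. The finitely many irreducible configurations arising in (c), and all cubic graphs of order at most $22$, are settled by the computer verification already reported in this section, which also provides the base cases of the analysis.

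The main obstacle is step three, and precisely the interplay between repair (b) and the definition of oddness: since oddness is a minimum over all $2$-factors, one must verify that a local $2$-factor exchange neither creates new odd cycles elsewhere nor destroys minimality, and this requires global control of $G$, not just of $C\cup C'$. A second, related difficulty is the degenerate regime in which \emph{every} $2$-factor of $G$ has all its odd cycles short and pairwise tightly linked: then no long cycle is available to absorb a relocated transversal edge, and one seems to need either a genuinely global structural theorem about such snarks or a separate reduction decreasing $|V(G)|$. Resolving these two points is exactly what is missing, which is why the statement is only conjectured.
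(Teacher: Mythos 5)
This statement is stated in the paper as a conjecture (Conjecture labelled \texttt{con2}); the paper offers no proof, only supporting evidence (the computer check up to order $22$, Theorem~\ref{2*3}, and Propositions~\ref{1*3} and~\ref{11*3}). Your proposal is likewise not a proof, as you yourself concede in the last paragraph, so the honest assessment is that the gap you name is real and is precisely what keeps this a conjecture. Beyond the two difficulties you identify, there are concrete flaws in the reduction you sketch. First, you misuse Proposition~\ref{fouqq}: it forbids three edges in three distinct odd cycles from inducing a path of length $5$, which (as exploited in Theorem~\ref{2*3}) only constrains the neighbourhoods of the \emph{two endpoints of a single edge} to meet at most one other odd cycle. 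It does not say that an odd cycle is cross-linked to at most one other odd cycle; indeed, in the proof of Proposition~\ref{11*3} the cycle $C_2$ has edges within distance $3$ of both $C_3$ and $C_4$. Hence your claim that the conflicting pairs of $G^{3}[A]$ involve pairwise disjoint cycles and can be repaired independently is unjustified, and the whole "local pattern" analysis collapses for oddness $\ge 4$: a short odd cycle may be simultaneously blocked by several different odd cycles, so relocating one transversal edge can create a new conflict elsewhere.

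Second, the step asserting that any conflict forces both cycles to have length at most $12$ leans on Propositions~\ref{1*3} and~\ref{11*3}, but those results carry extra hypotheses (oddness $2$, or oddness $4$ with three cycles of length at least $13$); they do not give a relocation argument for arbitrary oddness, and the paper itself could not remove these hypotheses. Third, repairs (b) and (c) are pure wishes: no argument is given that a $2$-factor exchange preserves minimality of the number of odd cycles (a global property), nor that the "irreducible" configurations force $G$ to be Petersen or Tietze --- and the computer verification you invoke only covers cubic graphs of order at most $22$, so it cannot settle configurations living in graphs of unbounded order. Finally, cubic graphs with bridges (those possibly lacking a $2$-factor, where Lemma~\ref{lem1111} does not even apply) are waved away by an unspecified "reduction at cut edges". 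In short, the skeleton (reduce to snarks, build a type~I set $A$ with $G^{3}[A]$ empty via Theorem~\ref{2*3}'s labelling, finish with Lemma~\ref{lem1111}.i)) is the natural one and matches the partial results of the paper, but every step that would go beyond what the paper already proves is either missing or rests on an incorrect strengthening of Proposition~\ref{fouqq}.
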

A weaker form of the previous conjecture is to restrict to cubic graphs of oddness 2 or 4.

\subsection{$(1,1,1,4,\ldots,4)$-coloring}
As in the previous subsection, we try in this subsection to minimize the number of required integers 4 in order that all cubic graphs having a $2$-factor are $(1,1,1,4,\ldots,4)$-colorable. We give a first result about this problem, however it is probably not tight since we have not been able to find a non $(1,1,1,4,4)$-colorable cubic graph. 
\begin{theo}\label{5*4}
Every cubic graph having a $2$-factor is $(1,1,1,4^{5})$-colorable.
\end{theo}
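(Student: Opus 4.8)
The plan is to apply Lemma~\ref{lem1111}.i) with $k=4$ and $\ell=5$: it suffices to produce, inside the set $\mathcal{F}'$ of odd cycles of a $2$-factor $\mathcal{F}$ of $G$ of minimum oddness, a set $A$ of type I such that $\chi(G^{4}[A])\le 5$. Note that Lemma~\ref{lem1} only gives $\Delta(G^{4}[A])\le a_4=10$ for an arbitrary type I set, which is far from enough; so the whole point is to mimic the proof of Theorem~\ref{2*3} and choose $A$ cleverly. The natural target is $\Delta(G^{4}[A])\le 4$: once this is achieved, a greedy coloring with the five colors of radius $4$ colors $G^{4}[A]$, and Lemma~\ref{lem1111}.i) then yields the $(1,1,1,4^{5})$-coloring of $G$.

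To build $A$, I would reuse the $+/-$ labeling of the vertices of $\mathcal{F}'$ from the proof of Theorem~\ref{2*3} (a vertex is labeled $+$ if it has a neighbor in an odd cycle different from its own, and $-$ otherwise), together with Observation~1 of that proof and Proposition~\ref{fouqq}. As there, each odd cycle $C$ has odd length, hence contains two consecutive equally labeled vertices, and one selects a single edge $e_C$ of $C$ in that neighborhood, distinguishing the case of two consecutive $+$'s from that of two consecutive $-$'s (in the latter, choosing the shifted edge $u_2u_3$ rather than $u_1u_2$, exactly as in Theorem~\ref{2*3}, so that the chord at $u_1$ is pushed to distance $4$). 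The heart of the argument is then a local analysis of the ball of radius $3$ in $G$ around the endpoints of $e_C$: one must show that, for the chosen $e_C$, at most four odd cycles of $\mathcal{F}'$ other than $C$ contain an edge at distance at most $4$ from $e_C$. Proposition~\ref{fouqq}, in the form ``the two endpoints of an edge of an odd cycle have neighbors in at most one other odd cycle'', is what stops the neighborhood from branching: each additional step along an odd cycle, the absence of a path of length $5$ through three distinct odd cycles forces the chords one meets to return to cycles already visited, except near the boundary of the ball. Since $A$ contains exactly one edge per odd cycle, a bound of $4$ on the number of other odd cycles met translates into $\deg_{G^{4}[A]}(e_C)\le 4$ for every $e_C\in A$, hence $\Delta(G^{4}[A])\le 4$.

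The main obstacle is precisely this case analysis, and in particular the interaction with the even cycles of $\mathcal{F}$: a chord of $C$ entering an even cycle of $\mathcal{F}$ is \emph{not} constrained by Proposition~\ref{fouqq}, so such an even cycle can behave like a short ``bridge'' towards several odd cycles at once. Controlling this requires choosing $e_C$ so that these bridging chords lie just outside the radius-$3$ ball around its endpoints, and treating short odd cycles (where there is little freedom to move $e_C$) as separate subcases; should the uniform bound $\Delta(G^{4}[A])\le 4$ fail in some exceptional configuration, one would fall back on showing that $G^{4}[A]$ is $4$-degenerate or that its vertices of degree $5$ can be handled apart, still yielding $\chi(G^{4}[A])\le 5$. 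After every subcase has been checked to produce at most four other reachable odd cycles, the theorem follows from Lemma~\ref{lem1111}.i).
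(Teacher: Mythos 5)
Your overall plan coincides with the paper's: reduce via Lemma~\ref{lem1111}.i) to exhibiting a type~I set $A$ with $\Delta(G^{4}[A])\le 4$, and build $A$ one edge per odd cycle using the $+/-$ labeling and Proposition~\ref{fouqq}, exactly as in Theorem~\ref{2*3}. However, the step that actually carries the theorem -- the verification that the chosen edges have degree at most $4$ in $G^{4}[A]$ -- is only announced, not performed, and your sketch of how to handle the even cycles of $\mathcal{F}$ would not work as stated. You propose to ``choose $e_C$ so that the bridging chords into even cycles lie just outside the radius-$3$ ball''; but in a cycle all of whose vertices are labeled $-$ (a case that must be treated), \emph{every} matching edge leaving the cycle ends on an even cycle or on the cycle itself, so no choice of $e_C$ can push these chords out of the ball. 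The paper's resolution is different: it does not avoid such bridges but bounds their effect, via an Observation (its Observation~2) stating that from an endpoint labeled $+$ exactly one other odd cycle is reachable within the relevant distance (this is where Proposition~\ref{fouqq} enters), while from an endpoint labeled $-$ at most \emph{two} other odd cycles are reachable (through a chord or an even cycle). With this count, the degree bound follows from a short case analysis on the label patterns around the chosen edge: two consecutive $+$'s give degree at most $3$; an all-$-$ cycle gives at most $2+2=4$; and the remaining pattern $-,-,+,-$ gives at most $1+2$ plus the nearby contributions, again at most $4$.

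So the gap is concrete: you are missing the quantitative local lemma (the per-endpoint bound $1$ or $2$ on reachable odd cycles) and the ensuing case analysis, which together are the substance of the proof; your fallback (``show $G^{4}[A]$ is $4$-degenerate or handle degree-$5$ vertices apart'') is likewise undeveloped and would need its own argument. The reduction itself and the choice ``shift the edge when the two consecutive equal labels are $-,-$'' are correct and match the paper, but as written the proposal is a program for a proof rather than a proof.
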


\begin{proof}

Let $G$ be a cubic graph. Let $\mathcal{F}$ be $2$-factor of $G$ having a minimum number of odd cycles. Let $\mathcal{F}'$ be the set of odd cycles from $\mathcal{F}$. By Lemma~\ref{lem1111}.i) and by Brooks' theorem, it is sufficient to exhibit a set $A$ of type I in $\mathcal{F}$ such that $\Delta(G^{4}[A])\le 4$ in order $G$ to be $(1,1,1,4^{5})$-colorable.

As in the proof of Theorem \ref{2*3}, we give labels to the vertices of $\mathcal{F}'$ as follows. If a vertex belonging to a cycle $C$ from $\mathcal{F}'$ has a neighbor in a different cycle of $\mathcal{F}'$, we label it by $+$, otherwise we label it by $-$. 
By Proposition \ref{fouqq}, the two end vertices of any edge of any cycle $C$ of $\mathcal{F'}$ have neighbors in only at most one cycle of $\mathcal{F'}$ other than $C$. Thus, if there are consecutive vertices labeled by $+$ in an odd cycle $C$ from $\mathcal{F}'$, then there exists an unique cycle $C'$ of $\mathcal{F}'$ such that all these vertices have neighbors only in $C\cup C'$.

Let $u$ be a vertex of a odd cycle $C$ from $\mathcal{F}'$ and suppose that $u_{0}$, $u$, $u_{1}$ are consecutive vertices of $C$. Let $e_{0}$ be the edge incident with $u_{0}$ in $G-\mathcal{F}'$ and let $e_{1}$ be the edge incident with $u_{1}$ in $G-\mathcal{F}'$.
We make the following observation in which $D^{C}_{3}(u)$ denotes the set of odd cycles from $\mathcal{F}'-C$ containing a vertex at distance at most $3$ from $u$ in the subgraph $G-\{e_{0}, e_{1}\}$.

\vspace{0.5cm}
\textbf{Observation 2.} For any vertex $u$ in an odd cycle $C$ from $\mathcal{F}'$ the following is true:
\begin{enumerate}
\item[i)] if $u$ is labeled by $+$, then $|D^{C}_{3}(u)|=1$;
\item[ii)] if $u$ is labeled by $-$, then $|D^{C}_{3}(u)|\le 2$.
\end{enumerate}

Observation 2 can be easily obtained by observing that in the case $u$ is labeled by $+$, the other extremity of the edge incident with $u$ in $G-\mathcal{F}'$ is labeled by $+$ and its neighbors are either labeled by $+$ (and, in this case, are adjacent with vertices of $C$) or by $-$.

We will construct $A$ as follows. Since each cycle of $\mathcal{F}'$ has an odd number of vertices there exists two consecutive vertices labeled by $+$ or by $-$ in every cycle of $\mathcal{F}'$. For each cycle $C$ of $\mathcal{F}'$, we add an edge of $C$ into $A$ depending on the existence of two consecutive vertices labeled by $+$ and the existence of vertices labeled by $+$.

If there exist two consecutive vertices $u_{0}$ and $u_{1}$ which are both labeled by $+$ in $C$, then we add the edge $u_{0}u_{1}$ into $A$. Let $v_{0}$ be a vertex adjacent to $u_{0}$ and let $v_{1}$ be a vertex adjacent to $u_{1}$, $v_{0}$ and $v_{1}$ being in a cycle $C'$ from $\mathcal{F}'-C$. 
Let $e_{0}=u_{0}v_{0}$ and let $e_1=u_{1}v_{1}$. By Observation 2, there are no edges at distance less than $4$ of either $e_{0}$ or $e_{1}$ in $\mathcal{F'}-(C\cup C')$. Finally, suppose $u_{-2}$, $u_{-1}$, $u_{0}$, $u_{1}$, $u_{2}$, $u_{3}$ are consecutive vertices of $C$.
The vertices $u_{-2}$, $u_{-1}$, $u_{0}$, $u_{1}$, $u_{2}$ and $u_{3}$ are labeled in the worst case by $+,-,+,+,-,+$. Effectively, if $u_{-1}$ or $u_{2}$ is labeled by $+$, then, by Proposition~\ref{fouqq}, its neighbors should be in $C\cup C'$.
Thus, in the other cases, there are less vertices among  $u_{-2}$, $u_{-1}$, $u_{2}$ and $u_{3}$ which are labeled by $+$ and without having a neighbor in $C'$.
Therefore, $u_{0}u_{1}$ has degree at most $3$ in $G^{4}[A]$.

Suppose now that every vertex of $C$ is labeled by $-$. Let $e=uv$ be an edge of $C$. As in proof of Theorem \ref{2*3}, no edge of $C$ can be at distance less than 4 of an edge of $\mathcal{F}'-C$. By Observation 2 on the two vertices $u$ and $v$, we obtain that $e$ has degree at most $4$ in $G^{4}[A]$.

By excluding the two previous cases and since $C$ has odd length, there exists four consecutive vertices $v_{-2}$, $v_{-1}$, $v_{0}$, $v_1$ labeled by $-$, $-$, $+$, $-$. Let $v_{2}$ and $v_3$ be the vertices such that $v_{-2}$, $\ldots$, $v_{2}$, $v_{3}$ are consecutive in $C$. In this case, we add $v_{0} v_{1}$ into $A$.
Without loss of generality, we can suppose that, by excluding the first case, there are no two consecutive vertices of $C$ labeled by $+$.
 Thus, $v_{-2}$, $v_{-1}$, $v_{0}$, $v_{1}$, $v_{2}$ and $v_{3}$ are labeled consecutively by $-,-,+,-,+,-$, or by $-,-,+,-,-,+$ or by $-,-,+,-,-,-$. Thus, by Observation 2, $v_{0}v_{1}$ has degree at most 4 in $G^{4}[A]$. 
\end{proof}
We have performed computations on cubic graphs with small order that lead us to believe that all cubic graphs are $(1,1,1,4,4)$-colorable. 
\begin{con}\label{con2}
Every cubic graph is $(1,1,1,4,4)$-colorable.
\end{con}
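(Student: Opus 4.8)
The plan is to follow the route of Theorems~\ref{2*3} and~\ref{5*4}, but to push the structural analysis far enough to control the \emph{chromatic number} of $G^{4}[A]$ and not merely its maximum degree. By Lemma~\ref{lem1111}.i) applied with $k=4$ and $\ell=2$, it suffices to prove that every cubic graph $G$ admits a $2$-factor $\mathcal{F}$, with set of odd cycles $\mathcal{F}'$, together with a set $A\subseteq E(\mathcal{F}')$ of type I such that $\chi(G^{4}[A])\le 2$, i.e.\ $G^{4}[A]$ is bipartite. If $G$ is $3$-edge-colorable we choose $\mathcal{F}$ with $\mathcal{F}'=\emptyset$, so $A=\emptyset$ and there is nothing to do; hence we may assume $G$ has oddness at least $2$. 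For bridgeless $G$ a $2$-factor exists by Petersen's theorem, and the remaining case of cubic graphs with bridges and no $2$-factor (which necessarily are not $3$-edge-colorable) should be treated separately by decomposing $G$ along its bridges; this is expected to be the easy part. Thus the whole problem reduces to constructing a good odd-cycle transversal $A$ of a well-chosen $2$-factor.

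For the construction I would start from a $2$-factor $\mathcal{F}$ with a minimum number of odd cycles, so that Proposition~\ref{fouqq} is available, label the vertices of $\mathcal{F}'$ by $+$ and $-$ as in the proofs of Theorems~\ref{2*3} and~\ref{5*4}, and select one edge of $A$ per odd cycle. Theorem~\ref{5*4} already guarantees a choice with $\Delta(G^{4}[A])\le 4$; the point is to refine it so that, in addition, $G^{4}[A]$ contains no odd cycle. First I would dispose of long odd cycles: if a cycle $C\in\mathcal{F}'$ is long enough (a threshold in the spirit of the constants $9$ and $13$ in Proposition~\ref{1*3}), there is room on $C$ to place its $A$-edge at distance at least $5$ from every $A$-edge already chosen on the cycles it could interfere with, so $C$ contributes only an isolated vertex to $G^{4}[A]$. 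The difficulty concentrates in maximal ``clusters'' of short odd cycles that are pairwise close in $G$. Here I would introduce an auxiliary graph $\mathcal{H}$ on the odd cycles, joining two of them whenever their selected edges could become adjacent in $G^{4}[A]$; using Proposition~\ref{fouqq} (which forbids a path of length $5$ through three distinct odd cycles) one shows that $\mathcal{H}$ has bounded maximum degree and is locally tree-like, and then one makes the edge choices inside each cluster consistently, propagating a $2$-colouring along $\mathcal{H}$ and using the freedom of \emph{which} edge to pick on a short cycle (as in Theorem~\ref{5*4}, where $u_{0}u_{1}$ or $v_{0}v_{1}$ is chosen according to the labels) to destroy any odd closed walk that would otherwise appear in $G^{4}[A]$.

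The hard part will be exactly this last point. Unlike the distance-$3$ situation of Theorem~\ref{2*3}, where one edge per cycle can be forced to conflict with at most one other edge, at distance $4$ the ball around a selected edge may simultaneously reach the selected edges of several neighbouring short cycles, and these proximity relations can close up into odd cycles of $G^{4}[A]$ that no purely local change of a single edge can break. Ruling this out seems to require either a discharging argument on the cluster graph $\mathcal{H}$, bounding the ``odd-girth obstruction'', or an exhaustive analysis of the finitely many small configurations of short odd cycles joined by few edges (consistent with the fact that the authors verify the statement only by computer for small orders). A secondary, more technical obstacle is the treatment of cubic graphs with bridges and no $2$-factor, which falls outside the $2$-factor framework of Lemma~\ref{lem1111} and needs an independent reduction. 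Because of these difficulties the statement is left here as a conjecture.
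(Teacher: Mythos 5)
There is no proof to compare against here: in the paper this statement is Conjecture~\ref{con2}, supported only by computer verification on small cubic graphs; the strongest proven result in this direction is Theorem~\ref{5*4}, namely that every cubic graph having a $2$-factor is $(1,1,1,4^{5})$-colorable, obtained by exhibiting a type I set $A$ with $\Delta(G^{4}[A])\le 4$ and invoking Brooks' theorem through Lemma~\ref{lem1111}.i). Your proposal correctly identifies what would be needed to upgrade this to the conjecture --- a type I transversal $A$ with $G^{4}[A]$ bipartite rather than merely of bounded degree --- but it does not actually construct one, so it is a plan rather than a proof, as you yourself acknowledge.

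The concrete gaps are the ones you flag, and they are real. First, the claim that the auxiliary graph $\mathcal{H}$ on the odd cycles ``has bounded maximum degree and is locally tree-like'' is asserted, not proved; Proposition~\ref{fouqq} forbids one specific configuration (three edges in distinct odd cycles inducing a path of length $5$) and yields the degree bounds used in Theorem~\ref{5*4}, but it does not prevent short odd closed walks among clusters of short odd cycles at mutual distance at most $4$, and the freedom to move a single selected edge within a short cycle (a $5$- or $7$-cycle leaves very few choices at pairwise distance at least $5$) need not suffice to break such a cycle in $G^{4}[A]$; this is precisely why the paper only reaches five colors of radius $4$. Second, the reduction for cubic graphs with bridges and no $2$-factor is waved away as ``the easy part,'' but Lemma~\ref{lem1111} and all the structural machinery (Proposition~\ref{fouqq}, Lemma~\ref{fouqqut}) require a $2$-factor, and even the paper's Theorem~\ref{5*4} is stated only for cubic graphs having a $2$-factor, while the conjecture is for all cubic graphs; a bridge decomposition argument would have to be supplied, not assumed. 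Until the bipartiteness of $G^{4}[A]$ (or some substitute $2$-colorable structure) is actually established, the statement remains open, exactly as in the paper.
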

Trivially, this conjecture holds for cubic graphs of oddness $2$ and a weaker form of the previous conjecture is to restrict the conjecture to cubic graphs of oddness 4.
Note that we have found, performing computations, four bridgeless cubic graphs of order 14 which are $(1,1,1,3)$-colorable but are not $(1,1,1,4)$-colorable.

Also, we state the following open problem which has already been solved for $k\le 2$~\cite{Fou2}.

\begin{quest}
For every integer $k\ge3$, is it true that there exists a cubic graph of arbitrary large order which is not $(1,1,1,k)$-colorable ?
\end{quest}

\section{$(1,1,k,\ldots,k)$-coloring}
As for $(1,1,1,k,\ldots,k)$-colorings, we first show a general result for arbitrary $k$. For this, we consider the following two sequences of integers:
let $(b_k)_{k\ge 2}$ and $(c_k)_{k\ge 1}$ be sequences of integers defined for $k\ge 2$ by $c_1=0$, $c_k=  2^k- c_{k-1}$ and $b_k=\sum_{i=1}^{k} c_i$. The sequence $(b_k)$ is known in OEIS under the number A153772 and has a closed-form formul\ae: $b_k=\frac{2^{k+2}+2(-1)^{k+2}-6}{3}$.
\begin{lem}\label{lem2}
Let $G$ be a cubic graph having a $2$-factor $\mathcal{F}$ and let $B\subseteq E(\mathcal{F})$ be of type II.
For $k\ge 2$, the graph $G^{k}[B]$ satisfies $\Delta (G^{k}[B])\le b_k$.
\end{lem}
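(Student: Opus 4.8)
The plan is to mirror the proof of Lemma~\ref{lem1}, replacing the count of connected components by a count of edges of $B$. Let $G$ be a cubic graph with a $2$-factor $\mathcal{F}$, let $M=E(G)\setminus E(\mathcal{F})$ be the complementary perfect matching, and let $B\subseteq E(\mathcal{F})$ be of type II; the only feature of type II that I would use is that $B$ is a matching of $G$ contained in $E(\mathcal{F})$. Fix $e=uv\in B$. Since $B$ is a matching, a distinct edge $e'\in B$ lies at distance at most $k$ from $e$ if and only if $e'$ has an endpoint in the ball $N_{k-1}(\{u,v\})$; hence $\deg_{G^{k}[B]}(e)+1$ equals the number $\beta_k$ of edges of $B$ meeting $N_{k-1}(\{u,v\})$, and it suffices to prove $\beta_k\le b_k+1$.

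I would first isolate the local facts that do all the work. At every vertex exactly two incident edges lie in $\mathcal{F}$ and one lies in $M$; no two edges of $M$ are adjacent, so a path of $G$ never uses two consecutive edges of $M$; and since $B$ is a matching, at most one of the two $\mathcal{F}$-edges at a vertex belongs to $B$, and a vertex lies in at most one edge of $B$. Considering the BFS layers $L_j=N_j(\{u,v\})\setminus N_{j-1}(\{u,v\})$ and splitting each $L_j$ into the vertices first reached along an edge of $\mathcal{F}$ and those first reached along an edge of $M$, the first two facts give, for the sizes $p_j,q_j$ of these two parts, relations of the shape $p_{j+1}\le p_j+2q_j$ and $q_{j+1}\le p_j$, whence $|L_j|\le 2^{j+1}$; the last two facts say that vertices already occupied by edges of $B$ counted at radius $k-1$ cannot be reused by edges of $B$ newly appearing at radius $k$.

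I would then run the induction on $k$ built into the definition of $b_k$. The base cases $\beta_1=1$, $\beta_2\le 5=b_2+1$ and $\beta_3\le 9=b_3+1$ are verified by direct inspection of the balls of radius $0,1,2$ around $e$: from $\{u,v\}$ there are exactly four incident edges besides $e$ (two of $\mathcal{F}$, two of $M$), each leading to at most one further edge of $B$, which gives $\beta_2\le 5$, and one layer of the same bookkeeping, using that the two edges of $M$ at $\{u,v\}$ cannot be prolonged by edges of $M$, gives $\beta_3\le 9$. For the inductive step, the edges of $B$ counted in $\beta_k$ but not in $\beta_{k-1}$ are exactly those whose endpoint nearer to $e$ lies on $L_{k-1}$; using the layer recursions together with the fact that a vertex of $L_{k-1}$ already lying in an edge of $B$ that was new at radius $k-1$ cannot anchor a new edge of $B$, I would bound their number by $c_k=2^{k}-c_{k-1}$. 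Combined with $\beta_{k-1}\le b_{k-1}+1$ and $b_k=\sum_{i=1}^{k}c_i$ (equivalently $b_k=b_{k-2}+2^{k}$, with $b_2=4$ and $b_3=8$), this yields $\beta_k\le b_k+1$, hence $\Delta(G^{k}[B])\le b_k$.

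The crux is the inductive step, namely turning the informal statement ``there are at most $2^{k}$ potential new edges of $B$, of which $c_{k-1}$ are already blocked'' into a genuine count; this is exactly where one must exploit simultaneously that $B$ is a matching inside the cycles of $\mathcal{F}$ and that $M$ contains no two adjacent edges, and it is what brings the naive cubic growth $3\cdot 2^{k}$ down to $b_k=\tfrac{2^{k+2}+2(-1)^{k+2}-6}{3}$. Unlike in Section~3, Proposition~\ref{fouqq} is not needed here: the structure of the $2$-factor enters only through the alternation between $\mathcal{F}$- and $M$-edges and through the matching property of $B$.
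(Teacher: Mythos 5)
There is a genuine gap, and it sits exactly where you decided to weaken the hypothesis. You announce that the only feature of type II you will use is that $B$ is a matching inside $E(\mathcal{F})$, but the per-level bound you need in the inductive step is false under that weaker assumption. Your step reads: new edges of $B$ at distance $k$ are anchored at $L_{k-1}$, vertices of $L_{k-1}$ already used by edges new at radius $k-1$ are blocked, hence there are at most $c_k=2^k-c_{k-1}$ new edges. For this subtraction you need at least $c_{k-1}$ vertices of $L_{k-1}$ to be blocked, i.e.\ you need the number of edges new at radius $k-1$ to be \emph{at least} $c_{k-1}$; induction only gives you an upper bound, so the inequality points the wrong way. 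Concretely, if $B$ happens to have no edge at distance $2$ from $e$ (perfectly compatible with $B$ being a matching in $\mathcal{F}$), nothing in $L_2$ is blocked and there can be up to $8>c_3=4$ edges of $B$ at distance $3$; the increment $\beta_k-\beta_{k-1}$ simply is not bounded by $c_k$ level by level, so ``$\beta_k\le\beta_{k-1}+c_k\le b_{k-1}+1+c_k$'' does not go through. A second, independent hole: an edge of $B$ with both endpoints in the same layer $L_{k-2}$ is new at radius $k-1$ but blocks no vertex of $L_{k-1}$; even the corrected recursion $\mathrm{new}_k\le |L_{k-1}|-\mathrm{new}_{k-1}$ fails for such ``horizontal'' edges unless you also prove that each of them decreases $|L_{k-1}|$ by two, an accounting step absent from your sketch. (A matching-only proof may in fact be possible via a two-level pairing of the inequalities $\mathrm{new}_j+\mathrm{new}_{j+1}\le|L_j|\le 2^{j+1}$ together with that horizontal-edge bookkeeping, but that is a genuinely different induction from the one you wrote, and you would have to carry it out.)

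The paper avoids all of this precisely by exploiting the density half of type II that you discarded. It first reduces (Claim 2) to the case where $\mathcal{F}$ has only even cycles, by splitting one of the two adjacent non-$B$ edges of each odd cycle into a path of length two, joining the new degree-two vertices by a perfect matching, and enlarging $B$ to a type II set of the new graph; since this only increases distances, the bound transfers back. In that situation $B$ covers every vertex and alternates along each cycle, so every edge outside $B$ has exactly one ``far'' neighbour in $B$ and at most one outside $B$, while edges of $B$ have none in $B$; this yields the recursions $n_B^{k+1}\le n_{E\setminus B}^{k}$ and $n_{E\setminus B}^{k+1}\le 2n_B^{k}+n_{E\setminus B}^{k}$, hence the per-level bounds $c_k$ and the total $b_k$. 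This alternation is what brings the naive count down to $b_k$; without it the recursion for $n_{E\setminus B}$ degrades to $n_{E\setminus B}^{k+1}\le 2n_B^k+2n_{E\setminus B}^k$ and the level-by-level argument collapses. You are right, however, that Proposition~\ref{fouqq} plays no role here; the paper does not use it in this lemma either.
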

\begin{proof}

Let $e\in B$.
The proof consists in showing that there are at most $b_k$ edges at distance at most $k$ from $e$ in $G$. The rest of the proof is divided into three claims.
\begin{description}

\item[Claim 1.] There are at most $2^{k+1}$ edges of $G$ at distance $k$ from $e$.

There are four edges adjacent to $e$ since $G$ is cubic. Suppose now that there are at most $2^{k}$ edges at distance $k$ from $e$. Since each edge at distance $k$ from $e$ is adjacent to at most two edges which are not at distance at most $k$ from $e$, we obtain that there are at most $2^{k+1}$ edges at distance $k+1$ from $e$.

\item[Claim 2.] If any $3$-edge-colorable cubic graph $G'$ with any $2$-factor $\mathcal{F}'$ of $G'$ only containing even cycles, and any set $B'\subseteq E(\mathcal{F}')$ of type II is such that $\Delta (G'^{k}[B'])\le b_k$, then $\Delta (G^{k}[B])\le b_k$.

Remark that each odd cycle of $\mathcal{F}$ contains two adjacent edges of $E(G)\setminus B$.
We create a new graph $G'$ by replacing, in each odd cycle of $\mathcal{F}$, one of these two adjacent edges of $E(G)\setminus B$ by a path of length $2$, i.e., by two adjacent edges. The vertices of degree $2$ created are connected together by a perfect matching (the number of created vertices of degree $2$ is even since the number of odd cycles in $\mathcal{F}$  is even).
We construct $B'$ from $B$ by adding, for each odd cycle of  $\mathcal{F}$, an edge of this cycle which has no neighboring edge of $B$ to $B'$.

We easily remark that $\Delta (G'^{k}[B'])\le b_k$  implies $\Delta (G^{k}[B])\le b_k$.

\item[Claim 3.] Suppose $G$ is a $3$-edge colorable graph and $\mathcal{F}$ only contains even cycles. There are at most $c_k$ edges at distance $k$ from $e$ in $B$ and at most $2^{k+1}-c_k$ edges at distance $k$ from $e$ in $E(G)\setminus B$, for $c_1=0$ and $c_k= 2^k- c_{k-1}$, $k\ge 2$.

Let $n_A^k$ and $n_{E\setminus A}^k$ be the number of edges at distance $k$ from $e$ in $B$ and in $E(G)\setminus B$, respectively.
There are four edges adjacent to $e$, each of these edges not being in $B$. Thus $n_A^1=0$ and $n_{E\setminus A}^1=4$.
Now suppose by induction that $n_A^k \le c^k$ and $n_{E\setminus A}^k\le 2^{k+1}-c_k$.
Observe that any edge from $e'\in B$ has no neighbor in $B$ and at most two neighboring edges in $E(G)\setminus B$ which are at greater distance from $e$ than $e'$. Also, observe that an edge $e''\in E(G)\setminus B$ has at most one neighbor in $B$ which is at greater distance from $e$ than $e''$ and at most one neighbor in $E(G)\setminus B$ which is at greater distance from $e$ than $e''$.
Thus we obtain that $n_A^{k+1}\le n_{E\setminus A}^k\le 2^{k+1}-c_k=c_ {k+1}$ and that $n_{E\setminus A}^{k+1}\le 2 n_A^k+ n_{E\setminus A}^k\le 2 c_k +2^{k+1}-c_k = c_k +2^{k+1}= 2^{k+2}- 2^{k+1} +c_k=2^{k+2} - c_{k+1}$.
\end{description}

These three claims together allow to obtain that there are at most $b_k$ edges of $A$ at distance at most $k$ from $e$.
Therefore, the graph $G^{k}[B]$ satisfies $\Delta (G^{k}[B])\le b_k$.
\end{proof}

Combining this lemma with the ones of Section 2, we obtain the following theorem.
\begin{theo}\label{akbk}
For any $k\ge 2$, every cubic graph having a $2$-factor is $(1,1,k^{a_{k}+b_k+2})$-colorable and every  $3$-edge-colorable cubic graph having a $2$-factor is $(1,1,k^{b_k+1})$-colorable, where $a_k=\frac{2^{k+1}-(-1)^{k+1}-3}{3}$ and $b_k=\frac{2^{k+2}+2(-1)^{k+2}-6}{3}$.
\end{theo}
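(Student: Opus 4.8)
The plan is to derive both statements from Lemma~\ref{lem1111}.ii) together with the maximum-degree bounds of Lemmas~\ref{lem1} and~\ref{lem2}, using only the elementary greedy inequality $\chi(H)\le\Delta(H)+1$. First I would fix a $2$-factor $\mathcal{F}$ of $G$ and produce a partition $E(\mathcal{F})=A\cup B\cup C$ with $A$ of type~I and $B,C$ of type~II: in each even cycle of $\mathcal{F}$ the edge set splits into two perfect matchings of that cycle, one sent to $B$ and one to $C$; in each odd cycle of length $n$ one edge is placed in $A$ and the remaining Hamiltonian path of the cycle (which has $n-1$ edges, an even number) is $2$-colored alternately into two matchings of size $(n-1)/2=\lfloor n/2\rfloor$, one sent to $B$ and one to $C$. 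Because the cycles of $\mathcal{F}$ are pairwise vertex-disjoint and, inside each cycle, the edges assigned to $B$ (resp.\ $C$) are pairwise non-adjacent, $B$ and $C$ are matchings of $G$ meeting every $n$-cycle of $\mathcal{F}$ in exactly $\lfloor n/2\rfloor$ edges, hence are of type~II, while $A$ is plainly of type~I.

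With this partition in hand, Lemma~\ref{lem1} gives $\Delta(G^k[A])\le a_k$, so $\chi(G^k[A])\le a_k+1$, and Lemma~\ref{lem2} gives $\Delta(G^k[B])\le b_k$, so $\chi(G^k[B])\le b_k+1$. Applying Lemma~\ref{lem1111}.ii) with $\ell=a_k+1$ and $\ell'=b_k+1$ then shows that $G$ is $(1,1,k^{(a_k+1)+(b_k+1)})$-colorable, i.e.\ $(1,1,k^{a_k+b_k+2})$-colorable, which is the first assertion; the closed forms for $a_k$ and $b_k$ have already been recorded above.

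For the $3$-edge-colorable case I would be more careful about the choice of $2$-factor: taking the color classes $M_1,M_2,M_3$ of a proper $3$-edge-coloring and setting $\mathcal{F}=M_1\cup M_2$ makes every cycle of $\mathcal{F}$ even and $G-\mathcal{F}=M_3$, so in the partition above the type~I set $A$ is empty (legitimately, since $\mathcal{F}$ has no odd cycle). Coloring the matchings $M_3$ and $C$ with two colors of radius~$1$ and, using Lemma~\ref{lem2} and greedy coloring, coloring $B$ with the $\chi(G^k[B])\le b_k+1$ colors of radius~$k$ — equivalently, running the proof of Lemma~\ref{lem1111}.ii) with $A=\emptyset$ so that no color of radius $k$ is spent on $A$ — yields a $(1,1,k^{b_k+1})$-coloring.

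The whole argument is bookkeeping layered on top of the earlier lemmas, so I do not expect a genuine obstacle; the two points that must be handled with a little attention are (a) verifying that the type~I and type~II parts can be chosen so as to partition $E(\mathcal{F})$ simultaneously, and (b) remembering, in the $3$-edge-colorable case, to start from a $2$-factor of oddness $0$ so that the odd-cycle-transversal color class vanishes and one color of radius~$k$ is saved relative to the general bound.
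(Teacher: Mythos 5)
Your proposal is correct and follows essentially the same route as the paper: partition $E(\mathcal{F})$ into a type~I set $A$ and type~II sets $B,C$, bound $\Delta(G^k[A])\le a_k$ and $\Delta(G^k[B])\le b_k$ via Lemmas~\ref{lem1} and~\ref{lem2}, color greedily, and conclude with Lemma~\ref{lem1111}.ii), taking $A=\emptyset$ for a $2$-factor with only even cycles in the $3$-edge-colorable case. Your explicit construction of the partition and your use of the plain $\chi\le\Delta+1$ bound (where the paper invokes Brooks' theorem) are only cosmetic differences.
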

\begin{proof}
Let $G$ be a cubic graph and $k\ge 2$ be an integer. Let $\mathcal{F}$ be a $2$-factor of $G$.
Let $A\subseteq E(\mathcal{F})$ be of type I. By Lemma \ref{lem1}, $\Delta (G^{k}[A])\le a_k$. Thus, by Brooks' theorem, $G^{k}[A]$ is $(a_{k}+1)$-colorable.
Let $B\subseteq E(\mathcal{F})\setminus A$ be of type II. By Lemma \ref{lem2}, $\Delta (G^{k}[B])\le b_k$. Thus, by Brooks' theorem, $G^{k}[B]$ is $(b_{k}+1)$-colorable.

Therefore, by Lemma~\ref{lem1111}.ii), $G$ is $(1,1,k^{b_{k}+a_k+2})$-colorable and even $(1,1,k^{b_{k}+1})$-colorable in the case $G$ is $3$-edge-colorable (since then, $\mathcal{F}$ can be chosen such that $A$ is empty).
\end{proof}
\subsection{$(1,1,2,\ldots,2)$-coloring}

The following proposition allows to refine our results for sequences of the form $(1,1,2,\ldots,2)$.
\begin{prop}\label{K5}
Let $G$ be a cubic graph of order at least $12$. If there exists a $2$-factor $\mathcal{F}$ in $G$, then for any set $B\subseteq E(\mathcal{F})$ of type II, $G^{2}[B]$ contains no connected component isomorphic to $K_5$.
\end{prop}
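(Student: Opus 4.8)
The plan is to argue by contradiction: I would assume that some connected component of $G^2[B]$ is isomorphic to $K_5$ and show that this forces $|V(G)|=10$, contradicting the hypothesis that $G$ has order at least $12$. (Note that $\Delta(G^2[B])\le b_2=4$ by Lemma~\ref{lem2}, so $K_5$ is precisely the extremal possibility, and order $10$ must genuinely be excluded: for the Petersen graph, the perfect matching $B$ complementary to a $2$-factor consisting of two $5$-cycles satisfies $G^2[B]\cong K_5$.)

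Let $\{e_0,e_1,e_2,e_3,e_4\}$ be the vertex set of such a $K_5$-component. Since $B$ is of type II, no two edges of $B$ are adjacent in $G$, so $e_0,\ldots,e_4$ are pairwise disjoint edges of $G$; put $V'=\bigcup_{i=0}^{4}V(e_i)$, a set of exactly $10$ vertices. Because $G$ is cubic, each of the two endpoints of $e_i$ is incident with exactly two edges other than $e_i$, so the endpoints of $e_i$ are incident with exactly four edges besides $e_i$; one checks these four edges are pairwise distinct (two meet at one endpoint of $e_i$, two at the other, and none of them is $e_i$ itself). The key step is then the following. For each $j\ne i$ we have $d_G(e_i,e_j)\le 2$ and $e_i\cap e_j=\emptyset$, hence there is an edge of $G$ joining an endpoint of $e_i$ to an endpoint of $e_j$. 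Fixing $i$, these four ``connecting'' edges (one for each value $j\ne i$) are all incident with an endpoint of $e_i$, so they lie among the four edges described above; and they are pairwise distinct, because the far endpoint of the connecting edge for the pair $(e_i,e_j)$ lies in $e_j$, and the edges $e_j$ are disjoint. Counting therefore forces these four connecting edges to be exactly the four edges at the endpoints of $e_i$ other than $e_i$. In particular, every edge of $G$ incident with an endpoint of $e_i$ has its other endpoint in $V'$.

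Applying this observation to each $i\in\{0,\ldots,4\}$ shows that no edge of $G$ leaves $V'$, so $V'$ is a union of connected components of $G$; since $G$ is connected and $V'\ne\emptyset$, we conclude $V(G)=V'$ and $|V(G)|=10$, a contradiction. The argument uses only that $B$ is a matching (which is part of the type II hypothesis) and that $G$ is a connected cubic graph, and in fact rules out $K_5$ even as a subgraph of $G^2[B]$ whenever $|V(G)|\neq 10$. I do not expect a real obstacle here; the only point needing care is the distinctness bookkeeping that identifies the four connecting edges at each $e_i$ with the four non-$e_i$ edges at its endpoints, forcing $V'$ to be closed under taking incident edges.
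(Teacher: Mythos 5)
Your proof is correct, but it takes a genuinely different route from the paper's. The paper exploits the $2$-factor structure in two steps: it first shows that the five edges of a $K_5$-component must all lie on a single cycle of $\mathcal{F}$ (an edge on another cycle can be within distance $2$ of at most two of the edges on $C_1$), and then argues that this cycle must have length $10$ with every vertex keeping all its neighbours on the cycle, so $G$ would have order $10$. You bypass the cycle structure entirely and use only that $B$ is a matching in a simple, connected cubic graph: for each of the five pairwise disjoint edges $e_i$, the four connecting edges to the other $e_j$ are pairwise distinct and must exhaust the four non-$e_i$ edges at the endpoints of $e_i$, so no edge of $G$ leaves the ten endpoints, and connectivity forces $|V(G)|=10$. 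Your argument is more elementary and strictly more general: it rules out $K_5$ even as a subgraph of $G^{2}[B]$ for any matching $B$ in a connected cubic graph of order different from $10$, with no reference to the $2$-factor or to type II, whereas the paper's argument is tailored to sets lying on $\mathcal{F}$. One small caveat concerning your parenthetical remark: the Petersen example does not fit the proposition's hypotheses, because there $B$ is the perfect matching complementary to the $2$-factor, hence not a subset of $E(\mathcal{F})$ (a type II set inside the two $5$-cycles has only four edges, so it cannot span a $K_5$); tightness at order $10$ does hold, though, e.g. for the M\"obius ladder on $10$ vertices with $B$ taken as five alternate edges of its Hamiltonian $2$-factor. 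This side remark does not affect the validity of your proof.
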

\begin{proof}
We first prove that if five edges of $B$ form a connected component isomorphic to $K_5$ in $G^{2}[B]$ then these edges lie on the same cycle of $\mathcal{F}$.
Suppose to the contrary, that $B_5$ is a set containing five edges of $B$ forming a $K_5$ in $G^{2}[B]$ and that there are, without loss of generality, two cycles $C_1$ and $C_2$ of $\mathcal{F}$ such that at least three edges of $B_5$ are in $C_1$ and at least one edge of $B_5$ is in $C_2$.
An edge $e=xy$ of $C_2$ cannot be at distance 2 of more than two edges of $B_5\cap C_1$ (the edges of $C_1$ at distance 2 from $e$ are those who have one endpoint adjacent to $x$ or to $y$). Hence, $e$ is not adjacent (in $G^{2}[B]$) to at least one edge of $B_5\cap C_1$, a contradiction.

Now, we show that if $G^{2}[B]$ contains a $K_5$, then $G$ has order 10, which contradicts the hypothesis. Let $C$ be the cycle containing the five edges of $B$ forming a $K_5$.
Note that if an edge of $B\cap C$ has one extremity which is adjacent to a vertex of $G\setminus C$, then this edge is at distance (in the cycle) at most $2$ of at most three other edges from $B\cap C$. Therefore, each vertex of $C$ has its neighbors in $C$ and, consequently, $C$ must have length 10. Since the graph $G$ is connected, $G$ has order 10.
\end{proof}

In any cubic graph $G$ having a $2$-factor, we can find a $2$-factor containing a minimum number of odd cycles and by Lemma~\ref{fouqqut}, there exists, in such $2$-factor, a set $A$ of type I such that $G^{2}[A]$ is an empty graph.
Moreover, by virtue of Proposition~\ref{K5} and by Brooks' theorem, every set $B$ of type II in a $2$-factor of a cubic graph of order at least 12 satisfies $\chi(G^{2}[B])\le 4$.
Consequently, by Lemma~\ref{lem1111}.ii) and by computation on the cubic graphs of order up to 20, we obtain the following result for the sequences containing two times the integer 1 and a bounded number of times the integer 2.
\begin{cor}
Every cubic graph having a 2-factor is $(1,1,2^5)$-colorable. Every $3$-edge-colorable cubic graph having a $2$-factor is $(1,1,2^4)$-colorable. 
\end{cor}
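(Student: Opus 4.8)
The plan is to assemble the corollary from structural facts already in hand, chiefly Lemma~\ref{fouqqut}, Proposition~\ref{K5}, Lemma~\ref{lem2} (for $k=2$) and Lemma~\ref{lem1111}.ii). Since a cubic graph has even order at least $4$, only the orders $4,6,8,10$ fall outside the hypothesis $|V(G)|\ge 12$ of Proposition~\ref{K5}; I would dispose of this finite list by the computer verification of cubic graphs of order up to $20$ mentioned before the statement, which settles both claims for those graphs directly. So assume $|V(G)|\ge 12$.

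For the first claim, choose a $2$-factor $\mathcal{F}$ of $G$ with a minimum number of odd cycles. By Lemma~\ref{fouqqut} there is a set $A\subseteq E(\mathcal{F})$ of type~I with $G^{2}[A]$ empty, so $\chi(G^{2}[A])\le 1$. Next I would split $E(\mathcal{F})\setminus A$ into two type~II sets $B$ and $C$: on each odd cycle of $\mathcal{F}$ the set $A$ uses exactly one edge, so the remaining $n-1$ edges form a path which splits into two matchings of $(n-1)/2=\lfloor n/2\rfloor$ edges, and on each even cycle the $n$ edges split into two alternating matchings of $n/2=\lfloor n/2\rfloor$ edges; putting one part of each cycle into $B$ and the other into $C$ produces two sets of type~II (they are matchings and meet every cycle in the prescribed number of edges). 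Now $B$ is a type~II set in a cubic graph of order at least $12$, so Proposition~\ref{K5} forbids a component of $G^{2}[B]$ isomorphic to $K_5$, while Lemma~\ref{lem2} gives $\Delta(G^{2}[B])\le b_2=4$. Applying Brooks' theorem component-wise (a component of maximum degree at most $3$ is $4$-colorable, and one of maximum degree $4$ is $4$-colorable unless it is $K_5$, which is excluded) yields $\chi(G^{2}[B])\le 4$. Then Lemma~\ref{lem1111}.ii) with $k=2$, $\ell=1$, $\ell'=4$ shows that $G$ is $(1,1,2^{5})$-colorable.

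For the second claim, start from a proper $3$-edge-coloring of $G$ and let $\mathcal{F}$ be the union of two of its color classes; then $\mathcal{F}$ is a $2$-factor with only even cycles, so the type~I part is empty. Splitting each cycle of $\mathcal{F}$ alternately gives a partition $E(\mathcal{F})=B\sqcup C$ into two type~II sets, and as above $\chi(G^{2}[B])\le 4$. Coloring $E(G-\mathcal{F})$ and $C$ with two colors of radius $1$ (both are matchings) and $B$ with at most four colors of radius $2$ — exactly the $A=\emptyset$ situation used in the proof of Theorem~\ref{akbk} — shows that $G$ is $(1,1,2^{4})$-colorable.

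As for the difficulty: there is no single hard step here, the substance having already been extracted in Proposition~\ref{K5} (which identifies the exceptional $K_5$ components and confines them to order $10$) and in the degree bound of Lemma~\ref{lem2}. The only points requiring genuine care are matching these two facts with the correct form of Brooks' theorem, and recognizing that the structural argument leaves open precisely the cubic graphs of order less than $12$, which is why the statement must invoke the finite computer check.
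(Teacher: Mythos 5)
Your proposal is correct and follows essentially the same route as the paper: Lemma~\ref{fouqqut} for the type~I set, Lemma~\ref{lem2} ($b_2=4$) together with Proposition~\ref{K5} and Brooks' theorem for $\chi(G^{2}[B])\le 4$, then Lemma~\ref{lem1111}.ii), with the computer check on small orders covering the cases outside the order-$\ge 12$ hypothesis. You merely spell out details the paper leaves implicit (the explicit splitting of $E(\mathcal{F})\setminus A$ into two type~II matchings and the even-cycle $2$-factor in the $3$-edge-colorable case), so there is nothing substantively different to compare.
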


We now present sharper results for graphs with no short cycles. Note that the graphs considered in the two following results are the subcubic graphs. By Euler's formula, there do not exist finite planar cubic graphs of girth at least 6.
\begin{lem}\label{lem7}
Let $G$ be a subcubic graph having $2$-factor $\mathcal{F}$ and let $B\subseteq E(\mathcal{F})$ be of type II. If $G$ is planar and of girth at least 7, then $G^{2}[B]$ has no triangle and is planar.
\end{lem}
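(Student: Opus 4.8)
The plan is to translate adjacency in $G^{2}[B]$ into a statement about the host graph $G$ and then treat the two assertions separately. First I would note that, being of type II, $B$ is in particular a matching of $G$: every vertex lies on at most one edge of $B$, and any two distinct $e,e'\in B$ are non-adjacent, so $d_G(e,e')\ge 2$. Hence $ee'$ is an edge of $G^{2}[B]$ precisely when $d_G(e,e')=2$, i.e.\ when some edge $f\in E(G)$ joins an endpoint of $e$ to an endpoint of $e'$; call such an $f$ a \emph{link} of $\{e,e'\}$. I would record two facts used repeatedly: a link $f$ lies in neither $e$, $e'$ nor $B$ (it meets the $B$-edge $e$ and $B$ is a matching), and $f$ determines the pair $\{e,e'\}$ since its two endpoints lie in unique $B$-edges (uniqueness of the link itself would follow from girth $\ge 5$, but is not needed).

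For triangle-freeness, assuming $e_1,e_2,e_3\in B$ induce a triangle in $G^{2}[B]$, I would pick links $f_{12},f_{13},f_{23}$; since the $e_i$ are pairwise non-adjacent their six endpoints are distinct, and the three $f_{ij}$ are distinct edges. The goal is then to extract, inside the subgraph of $G$ on $V(e_1)\cup V(e_2)\cup V(e_3)$, a cycle of length at most $6$, contradicting girth $\ge 7$. I would split on whether $f_{12}$ and $f_{13}$ attach to the same endpoint of $e_1$ or to its two distinct endpoints: in the first case, walking out of that common endpoint along $f_{12}$, then possibly along $e_2$, then $f_{23}$, then possibly along $e_3$, then back along $f_{13}$, closes a cycle of length $3$, $4$ or $5$; in the second case the analogous closed walk $f_{12},(\text{part of }e_2),f_{23},(\text{part of }e_3),f_{13},e_1$ has length $4$, $5$ or $6$. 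Coincidences among vertices can occur only within a single $e_i$ and only shorten the cycle, so in every case $G$ has a cycle of length $\le 6<7$, a contradiction.

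For planarity, I would fix a plane embedding of $G$, place the vertex $v_e$ of $G^{2}[B]$ at the midpoint of $e$ for every $e\in B$, and draw each edge $v_ev_{e'}$ of $G^{2}[B]$ by following, inside a thin tube around $G$, the half of $e$ from $v_e$ to the relevant endpoint, then the link $f$, then the half of $e'$ up to $v_{e'}$. Because a link is never in $B$, such an arc pierces no other vertex $v_{e''}$; and because all arcs run in tubes around edges of a crossing-free drawing, two arcs can meet only where their underlying paths share a vertex or an edge of $G$. An arc reaches a vertex $a$ of $G$ only at an endpoint of its link, hence only if $a$ lies on its (unique) incident $B$-edge; so every arc through $a$ enters along the single $B$-edge at $a$ and leaves along one of the at most $\deg_G(a)-1\le 2$ other edges at $a$, whence the arcs through $a$ can be fanned out in the rotation of $a$ without crossing, and along any edge of $G$ the at most two parallel half-edge arcs are separated similarly. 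This produces a plane drawing of $G^{2}[B]$.

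I expect triangle-freeness to be routine once the case analysis is organised so the extracted cycle is always of length $\le 6$. The delicate part is the planar routing: one must be sure the short midpoint--link--midpoint arcs can be made pairwise non-crossing, which relies precisely on $B$ being a matching (a unique $B$-edge through each vertex, so arcs arrive from one side only), on $G$ being subcubic (only two outgoing directions to disentangle at each vertex), and on links avoiding $B$ (so no vertex of $G^{2}[B]$ is pierced). It is worth noting that the girth hypothesis is really used only for the triangle-free part.
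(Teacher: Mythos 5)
Your proposal is correct and follows essentially the same route as the paper: triangle-freeness is obtained by lifting a triangle of $G^{2}[B]$ to a cycle of length at most $6$ in $G$, contradicting the girth hypothesis, and planarity by drawing $G^{2}[B]$ on a plane embedding of $G$, with vertices placed on the $B$-edges and edges routed along the connecting paths. Your write-up is merely more explicit than the paper's (full case analysis for the short cycle instead of the single length-$6$ configuration, and a detailed non-crossing routing argument), but the underlying ideas coincide.
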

\begin{proof}Let $\mathcal{F}$ be any $2$-factor of $G$ and let $B\subseteq E(\mathcal{F})$ be a set of type II.
Suppose that $e_1=x_1y_1,e_2=x_2y_2,e_3=x_3y_3$ are three edges of $B$ that form a triangle in $G^{2}[B]$.  Then $e_1$ is at distance 2 of $e_2$ and hence one extremity of $e_1$, say $x_1$, is adjacent with one extremity of $e_2$, say $x_2$, i.e., $x_1x_2\in E(G)$. Similarly, $y_1$ is adjacent with one extremity of $e_3$, say $y_3$ and $x_3$ is adjacent with $y_2$. But then $x_1,x_2,y_2,x_3,y_3,y_1,x_1$ is a cycle of length six, contradicting the hypothesis on the girth. Consequently, $G^{2}[B]$ has no triangles.
The fact that $G^{2}[B]$ is planar can be seen by drawing it 'on' a cross-free embedding of the graph $G$ on the plane, putting vertices of $G^{2}[B]$ on the corresponding edges of $G$ and drawing edges of $G^{2}[B]$ along the shortest paths between corresponding edges of $G$.
\end{proof}

\begin{prop}\label{g7}
If a subcubic graph $G$ of girth at least $7$ is a subgraph of a planar cubic graph $G'$ having a $2$-factor, then $G$ is $(1,1,2^4)$-colorable. Moreover, if $G'$ is bridgeless, then $G$ is $(1,1,2^{3})$-colorable.
\end{prop}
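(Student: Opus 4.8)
The plan is to apply Lemma~\ref{lem1111}.ii) to a suitable $2$-factor of $G'$, using that a partition of $E(G')$ into packings restricts to a partition of $E(G)$ into packings (since $d_{G}\ge d_{G'}$ on edges of $G$), and to obtain the bound for the ``type II part'' from Gr\"otzsch's theorem together with the triangle-freeness and planarity established in the proof of Lemma~\ref{lem7}. Concretely, fix a $2$-factor $\mathcal{F}$ of $G'$ with a minimum number of odd cycles and set $M=E(G')\setminus E(\mathcal{F})$, a perfect matching of $G'$. By Lemma~\ref{fouqqut} there is a type I set $A\subseteq E(\mathcal{F})$ with $G'^{2}[A]$ edgeless. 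Deleting from each odd cycle of $\mathcal{F}$ its unique edge of $A$ leaves a path with an even number of edges, so $E(\mathcal{F})\setminus A$ decomposes into two matchings; combining these with the two perfect matchings of each even cycle of $\mathcal{F}$ gives a partition $E(\mathcal{F})=A\cup B\cup C$ with $B$ and $C$ of type II. Since $E(G)\subseteq E(G')=A\cup B\cup C\cup M$, the four sets $A\cap E(G)$, $B\cap E(G)$, $C\cap E(G)$, $M\cap E(G)$ partition $E(G)$; moreover $M\cap E(G)$ and $C\cap E(G)$ are matchings of $G$ and each takes a color of radius $1$, and $G^{2}[A\cap E(G)]$ is a subgraph of the edgeless graph $G'^{2}[A]$, so $A\cap E(G)$ is a $2$-packing of $G$ and takes one color of radius $2$.

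It remains to bound $\chi(G^{2}[B\cap E(G)])$. The proof of Lemma~\ref{lem7} uses nothing about $B$ beyond the fact that its edges are pairwise non-adjacent in $G$; here this holds because $B$ is of type II in $G'$, hence its edges are pairwise non-adjacent in $G'$ and a fortiori in $G$. So the same argument shows $G^{2}[B\cap E(G)]$ is triangle-free, and (drawing it along a planar embedding of $G$, which exists since $G\subseteq G'$ and $G'$ is planar, exactly as in Lemma~\ref{lem7}) that it is planar. By Gr\"otzsch's theorem $\chi(G^{2}[B\cap E(G)])\le 3$, so $B\cap E(G)$ splits into three $2$-packings of $G$. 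Altogether $G$ receives two colors of radius $1$ and $1+3=4$ colors of radius $2$, which is the first claim.

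For the ``moreover'' part, a bridgeless planar cubic graph is $3$-edge-colorable (this is Tait's reformulation of the four-color theorem, now a theorem), so we may take $\mathcal{F}$ to be the union of two color classes of a proper $3$-edge-coloring of $G'$; then $\mathcal{F}$ has no odd cycle, the type I set $A$ is empty, and the construction above uses only two colors of radius $1$ (for $M\cap E(G)$ and $C\cap E(G)$) and three colors of radius $2$ (for $B\cap E(G)$), yielding a $(1,1,2^{3})$-coloring.

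The step I expect to need the most care is the transfer of Lemma~\ref{lem7} to $B\cap E(G)$, i.e.\ verifying that a triangle $e_{1}e_{2}e_{3}$ in $G^{2}[B\cap E(G)]$ forces a cycle of length at most $6$ in $G$. The three edges $e_{i}$ are pairwise disjoint, and for each of the three pairs the distance being $2$ yields an edge of $G$ joining an endpoint of one to an endpoint of the other; these three joining edges are distinct from one another and from $e_{1},e_{2},e_{3}$, the six edges together span exactly six vertices, and contracting each $e_{i}$ leaves a triangle. Hence these six edges form a connected graph on six vertices with six edges --- a unicyclic graph --- whose unique cycle has length at most $6$, contradicting girth at least $7$.
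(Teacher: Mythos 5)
Your proof is correct and follows essentially the same route as the paper: a minimum-odd-cycle $2$-factor of $G'$ with the type I set $A$ from Lemma~\ref{fouqqut}, a partition of the remaining $2$-factor edges into two type II sets, the triangle-freeness/planarity argument of Lemma~\ref{lem7} plus Gr\"otzsch's theorem for one of them, and Tait's equivalence with the four color theorem in the bridgeless case. The only difference is that you unpack Lemmas~\ref{lem1111} and~\ref{lem7} (and carefully restrict everything to $E(G)$ using $d_G\ge d_{G'}$) instead of citing them directly, which if anything makes the argument slightly more self-contained.
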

\begin{proof}Let $G$ be a subcubic graph of girth at least $7$ and let $G'$ be a planar cubic graph having a $2$-factor such that $G$ is a subgraph of $G'$. Let $\mathcal{F}$ be a $2$-factor of $G'$ containing a minimum number of odd cycles. By Lemma~\ref{fouqqut}, there exists in $\mathcal{F}$ a set $A$ of type I such that $G'^{2}[A]$ is an empty graph.

Tait~\cite{Tait} has proven that the four color theorem is equivalent to the statement that no snark is planar. Thus, by the four color theorem, $G'$ is $3$-edge-colorable in the case $G'$ is bridgeless and the $2$-factor $\mathcal{F}$ of $G'$ (containing a minimum number of odd cycles) contains no odd cycles.

If $G'$ contains a bridge (contains no bridges, respectively), let $B$ be a set of type II in $\mathcal{F}-A$ (in $\mathcal{F}$, respectively). Let $B'=B\cap E(G)$. Thus, by Lemma~\ref{lem7}, $G^{2}[B']$ has no triangle and is planar. By the famous Grötzsch's theorem, $G^{2}[B']$ is 3-colorable. Finally, by Lemma~\ref{lem1111}.ii) $G$ is $(1,1,2^{4})$-colorable and even $(1,1,2^{3})$-colorable in the case $G'$ is bridgeless.
\end{proof}

We now show necessary conditions for a cubic graph to be $(1,1,2,2)$-colorable. For this, we first show a relation between $(1,1,2,2)$-coloring and a coloring of the vertices with two colors such that the subgraph induced by each color class has degree exactly one. This relation is used to prove Proposition \ref{neccond}.
A graph $G$ is said to be {\em 2-matching-colorable} if there exists a partition of $V(G)$ into two sets $A_1$ and $A_2$ such that both $G[A_1]$ and $G[A_2]$ are graphs of minimum and maximum degree $1$, i.e., matchings.
Note that a $2$-matching colorable graph is also what is called, in the context of defective coloring, a $(2,1)$-colorable graph \cite{CO1986}, where the list $(2,1)$ means here that we can use two colors for which each of the color class induce a subgraph of maximum degree 1. Planar graphs have been especially studied in the context of defective coloring \cite{CO1986,WG1,Poh}. 

\begin{prop}\label{mamere}
Let $G$ be a cubic graph.
The graph $G$ is $2$-matching-colorable if and only if $G$ is $(1,1,2,2)$-colorable.
\end{prop}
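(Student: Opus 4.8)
The plan is to prove the two implications separately; the whole content is a short structural fact, namely that the union of the two radius-$2$ classes of any $(1,1,2,2)$-coloring is forced to be a perfect matching. For the direction ``$2$-matching-colorable $\Rightarrow$ $(1,1,2,2)$-colorable'', take a partition $V(G)=A_1\sqcup A_2$ with each $G[A_i]$ of minimum and maximum degree $1$, and put $M_i=E(G[A_i])$. Each $M_i$ is a perfect matching of $A_i$, and it is an \emph{induced} matching of $G$: an edge of $G$ joining two edges of $M_i$ would give a vertex of $A_i$ with two neighbours in $A_i$. Since every vertex of $A_i$ is incident to exactly one edge of $M_i$, it is incident to exactly $3-1=2$ edges between $A_1$ and $A_2$; hence the bipartite graph of crossing edges is $2$-regular, so it splits into two matchings $N_1,N_2$ of $G$. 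Then $E(G)=N_1\sqcup N_2\sqcup M_1\sqcup M_2$ is a partition into two $1$-packings and two $2$-packings, i.e.\ a $(1,1,2,2)$-coloring.

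For the converse, fix a $(1,1,2,2)$-coloring $(X_1,X_2,X_3,X_4)$ and let $H$ be the spanning subgraph of $G$ with edge set $X_3\cup X_4$. Since $X_1,X_2$ are matchings, every vertex has at least one incident edge outside $X_1\cup X_2$, so $\delta(H)\ge1$; and since $X_3,X_4$ are matchings, $\Delta(H)\le2$. The crux is to rule out $\Delta(H)=2$. Suppose $b$ is incident in $H$ to $ba\in X_3$ and $bc\in X_4$; then its third incident edge $bt$ lies in $X_1\cup X_2$, with $t\notin\{a,b,c\}$, and by $\delta(H)\ge1$ the vertex $t$ has an incident edge $tt^*\in X_3\cup X_4$. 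A short case check gives $\{t,t^*\}\cap\{a,b,c\}=\varnothing$ (each exceptional case contradicts either $(X_1\cup X_2)\cap(X_3\cup X_4)=\varnothing$, or that $X_3,X_4$ are matchings, or, in two cases, the $2$-packing property exactly as below). Now if $tt^*\in X_3$, then $ba$ and $tt^*$ are vertex-disjoint edges of $X_3$ joined by the edge $bt$, so $d_G(ba,tt^*)\le2<3$, contradicting that $X_3$ is a $2$-packing; and if $tt^*\in X_4$, the identical argument on $bc$ and $tt^*$ contradicts that $X_4$ is a $2$-packing. Hence $\Delta(H)\le1$, so $H$ is a perfect matching of $G$.

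To finish, after possibly reassigning one edge from $X_4$ to $X_3$ (which changes neither $H$ nor the validity of the coloring) we may assume $X_3,X_4\neq\varnothing$. Put $A_1=V(X_3)$ and $A_2=V(X_4)$: since $H$ is a perfect matching these sets partition $V(G)$, and since $X_3$ is a $2$-packing we get $G[A_1]=X_3$, because an edge of $G$ inside $V(X_3)$ not lying in $X_3$ would join two distinct edges of $X_3$. Thus $G[A_1]$, and symmetrically $G[A_2]$, has minimum and maximum degree $1$, which is the desired $2$-matching-coloring. The step I expect to be the real obstacle is the one isolated in the second paragraph — proving that $X_3\cup X_4$ has maximum degree $1$; everything else is bookkeeping once one observes that a degree-$2$ vertex $b$ of $X_3\cup X_4$ forces its third edge to reach a vertex $t$ whose own radius-$2$ edge must lie at distance $2$ from the edge at $b$ sharing its class, which the $2$-packing condition forbids.
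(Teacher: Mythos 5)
Your proof is correct and takes essentially the same route as the paper: the forward direction is identical, and your key step that $X_3\cup X_4$ is a perfect matching is a repackaging of the paper's argument that the vertex sets met by the two radius-$2$ classes are disjoint and cover $V(G)$ (same use of the third, radius-$1$ edge at a hypothetical degree-$2$ vertex and the same counting/2-packing conflict at its neighbour). No gaps.
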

\begin{proof}
Suppose $G$ is $2$-matching-colorable in two set $A_1$ and $A_2$. Note that the edges of $G[A_1]$ form a color class of radius $2$ in $G$ and the same goes for $G[A_2]$. Moreover, the edges which are not in $G[A_1]$ or $G[A_2]$ form a disjoint union of even cycles. Thus, we can easily color these edges with two colors of radius $1$.
Therefore, $G$ is $(1,1,2,2)$-colorable.

Now suppose that $G$ is $(1,1,2,2)$-colorable. Let $X_1$ be the set of vertices incident with an edge colored with the first color of radius $2$ and let $X_2$ be the set of vertices incident with an edge colored with the second color of radius $2$. Note that, by definition of color of radius $2$, both $X_1$ and $X_2$ should induce a graph of maximum degree $1$.

We now prove that $X_1\cap X_2= \emptyset$ and, afterward, that $X_1\cup X_2= V(G)$.
First, suppose that there exists a vertex $u$ in $X_1\cap X_2$. Let $e$ be the edge incident with $u$ which does not have an extremity in $X_1$ or $X_2$ and let $v$ be the other extremity of $e$. By hypothesis, $e$ should be colored with a color of radius $1$. The two edges incident with $v$ (different from $e$) can not be both colored (note that we can not use any color of radius 2 for these two edges) and we obtain that $G$ is not $(1,1,2,2)$-colorable.

Second, suppose that there exists a vertex $u$ not in $X_1\cup X_2$. Since $u$ is incident with three edges which do not have colors of radius $2$, we obtain a contradiction with the fact that $G$ is $(1,1,2,2)$-colorable.

Therefore, we obtain that $X_1$ and $X_2$ form a partition of $V(G)$ and that $G$ is $2$-matching colorable.
\end{proof}

\begin{prop}\label{neccond}
Any $(1,1,2,2)$-colorable cubic graph $G$ satisfies the following properties:
\begin{enumerate}
\item[i)] $G$ is $3$-edge colorable;
\item[ii)] $G$ has order divisible by four.
\end{enumerate}
\end{prop}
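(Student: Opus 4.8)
The plan is to use the characterization from Proposition~\ref{mamere}: a cubic graph $G$ is $(1,1,2,2)$-colorable if and only if it is $2$-matching-colorable, i.e., $V(G)$ splits into $A_1 \cup A_2$ with each $G[A_i]$ a perfect matching on $A_i$. So I would assume such a partition exists and derive both properties from it. For part ii), the key observation is that $G[A_i]$ having every vertex of degree exactly $1$ forces $|A_i|$ to be even (it is a disjoint union of edges), so $|A_1|$ and $|A_2|$ are both even. This only gives that $n = |A_1| + |A_2|$ is even, not divisible by $4$, so I need a parity refinement. The idea is to count edges of $G$ between $A_1$ and $A_2$: since $G$ is cubic, each vertex of $A_1$ sends $3 - 1 = 2$ edges to $A_2$ (one edge stays inside $G[A_1]$), so $e(A_1, A_2) = 2|A_1|$; symmetrically $e(A_1,A_2) = 2|A_2|$, forcing $|A_1| = |A_2|$. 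Hence $n = 2|A_1|$ with $|A_1|$ even, so $4 \mid n$.

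For part i), I would exhibit a proper $3$-edge-coloring directly from the partition. The edges inside $G[A_1]$ form a matching $M_1$, the edges inside $G[A_2]$ form a matching $M_2$, and (as noted in the proof of Proposition~\ref{mamere}) the remaining edges — those with one endpoint in each $A_i$ — form a spanning subgraph in which every vertex has degree exactly $3 - 1 = 2$, hence a disjoint union of cycles. Since all these cycles alternate between $A_1$ and $A_2$, they are all even, so this subgraph is $2$-edge-colorable, say with colors $3$ and $4$. That gives a proper $4$-edge-coloring; to get down to $3$ I would instead note that $M_1 \cup M_2$ is itself a matching (vertices of $A_1$ are untouched by $M_2$ and vice versa, and within $A_i$ the matching $M_i$ already covers each vertex once), so I can color all of $M_1 \cup M_2$ with a single color, and the even cycles with two more colors, for a proper $3$-edge-coloring. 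Thus $G$ is $3$-edge-colorable.

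The only mild subtlety — and the step I would be most careful about — is checking that $M_1 \cup M_2$ really is a matching and that the leftover edges genuinely form a $2$-regular bipartite-between-$A_1$-and-$A_2$ subgraph: here I rely on the degree bookkeeping (each vertex of $A_1$ is in exactly one edge of $M_1$, in no edge of $M_2$, and hence in exactly two leftover edges, all going to $A_2$), which is routine but must be stated cleanly. Everything else is immediate once Proposition~\ref{mamere} is invoked, so the proof is short; the essential content is the edge count $e(A_1,A_2) = 2|A_1| = 2|A_2|$ giving $|A_1| = |A_2|$ for the divisibility claim.
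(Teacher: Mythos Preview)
Your proposal is correct and follows essentially the same route as the paper: invoke Proposition~\ref{mamere} to get the partition $A_1\cup A_2$, observe that $M_1\cup M_2$ is a perfect matching while the remaining (cross) edges form a $2$-regular bipartite subgraph and hence a union of even cycles (giving the $3$-edge-coloring), and use $|A_1|=|A_2|$ together with $|A_1|$ even to get $4\mid |V(G)|$. The only difference is cosmetic: the paper asserts $|A_1|=|A_2|$ directly ``since $G$ is cubic,'' whereas you spell out the edge-count $e(A_1,A_2)=2|A_1|=2|A_2|$ that justifies it.
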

\begin{proof}
By Proposition \ref{mamere}, $G$ is $2$-matching-colorable in two set $A_1$ and $A_2$.\newline
i) Note that the edges of $G[A_1]\cup G[A_2]$ form a perfect matching. Note also that the edges with one extremity in $A_1$ and the other extremity in $A_2$ form a disjoint union of even cycles. Consequently, $G$ is $3$-edge-colorable.\newline
ii) Since $G$ is cubic, we have $|A_1|=|A_2|$. Moreover, by definition of $2$-matching colorable, $|A_1|$ is even. Therefore, $G$ has order divisible by four.
\end{proof}

We end this subsection by pointing out that there exist cubic graphs that are not $(1,1,2,2,2)$-colorable. The smallest such graph, illustrated in Figure~\ref{non11222} has order 12 and has been found by exhaustive search using a computer.

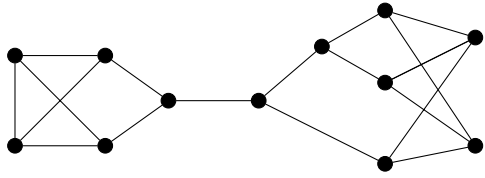
\begin{figure}
\begin{center}

\begin{tikzpicture}[scale=1.2]
\node at (0,0) [circle,draw=black,fill=black,scale=0.5](x1){};
\node at (1,0) [circle,draw=black,fill=black,scale=0.5](x2){};
\node at (1,1) [circle,draw=black,fill=black,scale=0.5](x3){};
\node at (0,1) [circle,draw=black,fill=black,scale=0.5](x4){};
\node at (1.7,0.5) [circle,draw=black,fill=black,scale=0.5](x5){};
\node at (2.7,0.5) [circle,draw=black,fill=black,scale=0.5](y1){};
\node at (4.1,-0.2) [circle,draw=black,fill=black,scale=0.5](y2){};
\node at (3.4,1.1) [circle,draw=black,fill=black,scale=0.5](y3){};
\node at (4.1,0.7) [circle,draw=black,fill=black,scale=0.5](y4){};
\node at (4.1,1.5) [circle,draw=black,fill=black,scale=0.5](y5){};
\node at (5.1,1.2) [circle,draw=black,fill=black,scale=0.5](y6){};
\node at (5.1,0) [circle,draw=black,fill=black,scale=0.5](y7){};
\draw  (x1) -- (x2) -- (x4) -- (x1) -- (x3) -- (x5) -- (x2);
\draw  (x3) -- (x4);
\draw  (x5) -- (y1) -- (y2) -- (y6) -- (y4) -- (y7);
\draw  (y1) -- (y3) -- (y5) -- (y6) -- (y4) -- (y3);
\draw  (y2) -- (y7) -- (y5);
\end{tikzpicture}

\end{center}
\caption{\label{non11222}The smallest non $(1,1,2,2,2)$-colorable and non $(1,2^6)$-colorable cubic graph.}
\end{figure}

It seems that the results of Proposition~\ref{g7} can be extended to the whole class of cubic graphs. We state this as a conjecture:
\begin{con}
 Every cubic graph is $(1,1,2,2,2,2)$-colorable and every 3-edge-colorable cubic graph is $(1,1,2,2,2)$-colorable.
\end{con}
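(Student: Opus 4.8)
We sketch a possible line of attack; since this is a conjecture, the last step remains the essential open difficulty. Both halves reduce, via Lemma~\ref{lem1111}.ii), to controlling $\chi(G^{2}[B])$ for a well-chosen type~II set $B$. For a $3$-edge-colourable cubic graph $G$, choose a $2$-factor $\mathcal{F}$ equal to the union of two colour classes of a proper $3$-edge-colouring, so that all cycles of $\mathcal{F}$ are even and the type~I part is empty; a type~II set $B\subseteq E(\mathcal{F})$ is then precisely a $1$-factor $M$ of $G$ with $M\subseteq E(\mathcal{F})$, and $G^{2}[B]$ is the simple graph underlying the $4$-regular multigraph $G/M$. So the $3$-edge-colourable half becomes: \emph{every $3$-edge-colourable cubic graph has a $1$-factor $M$ with $G-M$ bipartite and $\chi(G/M)\le 3$}, after which the two colours of radius~$1$ absorb $E(G-\mathcal{F})$ and the complementary type~II set $C$. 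For a general cubic graph having a $2$-factor, first apply Lemma~\ref{fouqqut} to fix a $2$-factor $\mathcal{F}$ of minimum oddness together with a type~I set $A$ with $G^{2}[A]$ empty; it then suffices to split $E(\mathcal{F})\setminus A$ into two type~II sets $B,C$ with $\chi(G^{2}[B])\le 3$. Cubic graphs of small order — the paper already settles order up to~$20$ — are dealt with by computer, and cubic graphs without a $2$-factor (which necessarily contain bridges) would be handled separately using their block structure; so the whole problem is the asymptotic claim that some admissible $B$ satisfies $\chi(G^{2}[B])\le 3$.

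The structural facts already in hand are encouraging but not conclusive. By Lemma~\ref{lem2}, $\Delta(G^{2}[B])\le b_2=4$; by Proposition~\ref{K5}, once $|V(G)|\ge 12$ no component of $G^{2}[B]$ is a $K_5$; and the restriction of $G^{2}[B]$ to the $B$-edges of a single cycle of $\mathcal{F}$ is merely a path or a cycle, hence $3$-colourable on its own. Thus Brooks' theorem already yields $\chi(G^{2}[B])\le 4$, and the only obstructions to a $3$-colouring are components that are $4$-chromatic, which with $\Delta\le4$ means a $K_4$ or an odd-wheel-type configuration produced by gluing cycle-pieces along chords of $G-\mathcal{F}$. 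The plan is to destroy these using the freedom in the choice of $B$: in each cycle of $\mathcal{F}$ there are (at least) two alternating edge sets, and for $3$-edge-colourable $G$ one may further vary which two colour classes form $\mathcal{F}$; swapping the alternating set on a cycle shifts every $B$-edge of that cycle by one position (and, in an odd cycle, also moves the type~I edge of $A$), hence reroutes the chords incident to it, and one would argue such local exchanges can be combined — by a potential-function or local-lemma argument, since the choices on far-apart cycles barely interact — so that $G^{2}[B]$ ends up with no $4$-critical subgraph. A parallel route, suggested by Proposition~\ref{g7}, is to choose $B$ so that $G^{2}[B]$ is triangle-free, mimicking the planar girth-$\ge7$ case where Grötzsch's theorem then finishes the job, and to replace Grötzsch by a $3$-colourability statement tailored to the graphs $G^{2}[B]$ that actually arise from cubic graphs, presumably via discharging that exploits the sparsity inherited from $G$.

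The hard part is exactly this final step. Neither $K_5$-freeness with $\Delta\le4$ nor triangle-freeness with $4$-regularity suffices for $3$-colourability in general — the Chvátal graph, for instance, is triangle-free and $4$-regular yet $4$-chromatic — so any correct proof must genuinely use the cubic origin of $G^{2}[B]$ together with the flexibility in choosing $B$. Pinning down precisely which configurations inside $G$ would force a $4$-chromatic component of $G^{2}[B]$ for \emph{every} admissible $B$, and showing that they cannot occur, is where the real work lies. For the $3$-edge-colourable half the obstacle has the clean restatement given above, and it would already be of interest to decide whether one of the three $1$-factors coming from a fixed proper $3$-edge-colouring of $G$ always does the job.
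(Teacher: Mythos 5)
You have not proved this statement, and neither does the paper: it appears there only as a conjecture, supported by the partial results you correctly reassemble. Your reduction is sound and essentially coincides with the paper's framework: by Lemma~\ref{lem1111}.ii) everything hinges on finding a type~II set $B$ with $\chi(G^{2}[B])\le 3$; Lemma~\ref{lem2} gives $\Delta(G^{2}[B])\le b_2=4$, Proposition~\ref{K5} rules out $K_5$-components for order at least $12$, and Brooks' theorem then yields only $\chi(G^{2}[B])\le 4$, i.e.\ the $(1,1,2^5)$ and $(1,1,2^4)$ results actually proven in the paper; the Grötzsch route is exactly Proposition~\ref{g7}, which the paper establishes only for planar girth at least $7$. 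Your observation that, for a $3$-edge-colourable $G$ with $\mathcal{F}$ the union of two colour classes, $G^{2}[B]$ is the simple graph underlying the $4$-regular contraction $G/B$ is a nice clean restatement of the $(1,1,2^3)$ half.

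The genuine gap is precisely the step you flag but do not carry out: no argument is given that the available freedom (swapping the alternating set on each cycle of $\mathcal{F}$, or changing which colour classes form $\mathcal{F}$) can always be combined to avoid a $4$-chromatic component of $G^{2}[B]$. The proposed ``potential-function or local-lemma'' argument is not substantiated: the bad events (appearance of a $4$-critical subgraph) are neither local in an obvious sense nor of controllably small probability, and the chords of $G-\mathcal{F}$ couple the per-cycle choices, so independence-style arguments do not apply off the shelf. Your intermediate claim that $4$-chromaticity with $\Delta\le 4$ forces a $K_4$ or odd-wheel-type configuration is also not correct as stated (your own Chvátal-graph remark contradicts it), so even the catalogue of obstructions to be destroyed is not pinned down. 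Finally, the conjecture is stated for all cubic graphs, while the reduction only covers graphs with a $2$-factor; the bridged case is deferred to an unspecified ``block structure'' argument. As an identification of where the difficulty lies your sketch is accurate and consistent with the paper, but it leaves the conjecture exactly as open as the paper does.
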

 
\subsection{$(1,1,3,\ldots,3)$-coloring and $(1,1,4,\ldots,4)$-coloring}
We finish this section by giving general results about the required number of integers $3$ and $4$ in order that all cubic graph having a $2$-factor are $(1,1,3,\ldots,3)$-colorable and $(1,1,4,\ldots,4)$-colorable.
\begin{prop}\label{2*1+3}
Every cubic graph $G$ having a $2$-factor is $(1,1,3^{11})$-colorable. Moreover, if $G$ is $3$-edge-colorable, then $G$ is $(1,1,3^{9})$-colorable. Also there exists a $3$-edge-colorable cubic graph which is not $(1,1,3^{6})$-colorable
\end{prop}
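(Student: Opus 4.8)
The plan is to apply Lemma~\ref{lem1111} with $k=3$ together with the bound on $\Delta(G^3[A])$ and $\Delta(G^3[B])$ coming from Lemmas~\ref{lem1} and~\ref{lem2}, specialized to $k=3$. First I would compute the relevant constants: from $a_2=2$, $a_3=4$ we get $a_3=4$, and from $c_1=0$, $c_2=2^2-c_1=4$, $c_3=2^3-c_2=4$ we get $b_3=c_1+c_2+c_3=8$. Thus Lemma~\ref{lem1} gives $\Delta(G^3[A])\le 4$ and Lemma~\ref{lem2} gives $\Delta(G^3[B])\le 8$ for a set $A$ of type I and a set $B$ of type II in any $2$-factor $\mathcal{F}$. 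By Brooks' theorem $G^3[A]$ is $5$-colorable (indeed it is not hard to see it has no $K_5$-component, but $5$ colors already suffice) and $G^3[B]$ is $9$-colorable. Hence, by Lemma~\ref{lem1111}.ii), $G$ is $(1,1,3^{5+9})=(1,1,3^{14})$-colorable — which is weaker than claimed, so the first task is to improve the colouring of $G^3[A]$ and $G^3[B]$.

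The improvement for $G^3[B]$ should come from a $K_9$-exclusion argument analogous to Proposition~\ref{K5}: I would show that a connected component of $G^3[B]$ that is a clique on $\Delta+1=9$ vertices forces all nine edges of $B$ to lie on a single cycle of $\mathcal{F}$ (an edge of $B$ on another cycle cannot be within distance $3$ of all of them), and then that such a configuration forces the whole graph to be small, contradicting a size hypothesis; combined with Brooks' theorem this drops the bound to $\chi(G^3[B])\le 8$. Similarly, for $G^3[A]$ (where $\Delta\le 4$) I would argue, using Proposition~\ref{fouqq} and the labeling trick from the proof of Theorem~\ref{2*3}, that the set $A$ of type I can be chosen so that $G^3[A]$ contains no $K_5$ (picking the representative edge of each odd cycle carefully, exactly as in the proof of Theorem~\ref{2*3} but with the distance-$3$ neighbourhood), so Brooks gives $\chi(G^3[A])\le 4$. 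Then Lemma~\ref{lem1111}.ii) yields $(1,1,3^{4+8})=(1,1,3^{12})$; a further small refinement of one of the two colourings (e.g. using the structural restrictions on how components of $G^3[A]$ sit inside the odd cycles, or a direct ad hoc argument reducing $\chi(G^3[A])$ to $3$ or $\chi(G^3[B])$ to $6$) brings the total down to $11$. For the $3$-edge-colorable case one chooses $\mathcal{F}$ with no odd cycle, so $A=\varnothing$, and one partitions $E(\mathcal{F})$ into two type-II sets $B,C$; by Lemma~\ref{lem1111}.iii) (in its two-set analogue) or directly by Lemma~\ref{lem1111}.ii) with $A$ empty, $G$ is $(1,1,3^{\chi(G^3[B])+\chi(G^3[C])})$-colorable, and with $\chi\le 8$ on one part and a better bound on the other — or with a single cleverly chosen type-II set colored in $9$ colours — one reaches $(1,1,3^{9})$.

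For the sharpness statement — there is a $3$-edge-colorable cubic graph not $(1,1,3^{6})$-colorable — I would simply cite the computer search, exactly as the paper does for its other lower bounds (cf.\ the remark after Table~\ref{sum} that the lower bounds were found by computer), exhibiting or pointing to a concrete small $3$-edge-colorable cubic graph on which an exhaustive check shows no $(1,1,3,3,3,3,3,3)$-coloring exists.

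The main obstacle I anticipate is squeezing the upper bound all the way down to $11$ (resp.\ $9$): the crude $a_3+b_3+2=14$ from Theorem~\ref{akbk} is too big, and while the $K_9$-exclusion for $G^3[B]$ and the careful choice of $A$ parallel existing arguments, getting the last unit or two of improvement will require either a sharper combinatorial bound on the distance-$3$ neighbourhood of an edge of $B$ inside a single cycle of $\mathcal{F}$, or an argument that the extremal configurations for $G^3[A]$ and $G^3[B]$ cannot occur simultaneously. I would be prepared for the honest bound obtainable by clean arguments to be slightly above $11$, in which case a short computer-assisted case analysis on the local structure of a single cycle (how many chords land within distance $3$) would close the gap.
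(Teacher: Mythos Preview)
Your proposal misses the key shortcut that makes the proof in the paper almost trivial. You correctly compute $a_3=4$ and $b_3=8$, but then embark on a programme of $K_9$-exclusion for $G^3[B]$ and $K_5$-exclusion for $G^3[A]$ that you yourself admit only gets you to $12$, followed by an unspecified ``further small refinement'' to reach $11$. That last step is precisely where the content lies, and you have not identified it.

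The point you are missing is that Theorem~\ref{2*3} (the $(1,1,1,3,3)$-colorability result) already constructs, in any $2$-factor with a minimum number of odd cycles, a set $A$ of type~I with $\Delta(G^3[A])\le 1$, hence $\chi(G^3[A])\le 2$ --- not $4$ or $3$. With this in hand, Lemma~\ref{lem1111}.ii) gives $(1,1,3^{2+(b_3+1)})=(1,1,3^{11})$ immediately. No $K_9$-exclusion for $B$ is needed: the plain bound $\chi(G^3[B])\le b_3+1=9$ from Lemma~\ref{lem2} and Brooks suffices. Likewise, for the $3$-edge-colorable case you over-engineer things: Theorem~\ref{akbk} already states $(1,1,k^{b_k+1})$-colorability directly, which for $k=3$ is $(1,1,3^9)$; there is no need to split into two type-II sets.

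For the lower bound, the paper does more than cite a computer search: it exhibits a $3$-edge-colorable cubic graph on $14$ vertices whose line graph has diameter~$3$. Since any two edges are then at distance at most $3$, each colour of radius $3$ covers a single edge; two colours of radius $1$ cover at most $2\cdot 7=14$ of the $21$ edges, so at least seven colours of radius $3$ are required. This is a clean counting argument, not just an exhaustive verification of non-colorability.
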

\begin{proof}
By Proposition~\ref{akbk}, every $3$-edge-colorable cubic graph having a $2$-factor is $(1,1,3^{9})$-colorable. By Theorem~\ref{2*3}, we can color a set of type I with two colors of radius $3$. Thus, we obtain that every cubic graph having a $2$-factor is $(1,1,3^{11})$-colorable.

By exhaustive search, we have found a $3$-edge-colorable cubic graph $G$ of $14$ vertices such that the line graph of $G$ has diameter $3$, i.e., every two edges of $G$ are at distance at most $3$. Thus, since a matching of $G$ has size at most $7$, we can color at most $14$ edges with two colors of radius $1$ and we obtain that $G$ is not $(1,1,3^{6})$-colorable since there remain seven uncolored edges.
\end{proof}

\begin{prop}\label{2*1+4}
Every cubic graph $G$ having a $2$-factor is $(1,1,4^{26})$-colorable. Moreover, if $G$ is $3$-edge-colorable, then $G$ is $(1,1,4^{21})$-colorable. Also there exists a $3$-edge-colorable cubic graph which is not $(1,1,4^{14})$-colorable.
\end{prop}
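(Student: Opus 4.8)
The plan is to follow the argument of Proposition~\ref{2*1+3} line by line, with $k=4$ in place of $k=3$ and with Theorem~\ref{5*4} taking over the role of Theorem~\ref{2*3}. Recall that $a_4=\frac{2^{5}+1-3}{3}=10$ and $b_4=\frac{2^{6}+2-6}{3}=20$, so Theorem~\ref{akbk} by itself only delivers $(1,1,4^{a_4+b_4+2})=(1,1,4^{32})$-colorability for a general cubic graph; the whole point of the improvement is to replace the summand $a_4+1=11$ — the number of colours of radius $4$ needed for a type~I set when one appeals to Brooks' theorem — by the much smaller value $5$ that is implicit in the proof of Theorem~\ref{5*4}.

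For a $3$-edge-colorable cubic graph $G$ having a $2$-factor, nothing new is required: this is exactly the second assertion of Theorem~\ref{akbk}, which gives $(1,1,4^{b_4+1})=(1,1,4^{21})$-colorability. For an arbitrary cubic graph $G$ with a $2$-factor, I would fix a $2$-factor $\mathcal{F}$ of $G$ with a minimum number of odd cycles, so that Proposition~\ref{fouqq} is available, and extract from the proof of Theorem~\ref{5*4} a set $A\subseteq E(\mathcal{F})$ of type~I with $\Delta(G^{4}[A])\le 4$, hence $\chi(G^{4}[A])\le 5$. Deleting from each odd cycle of $\mathcal{F}$ its unique edge of $A$ turns that cycle into a path whose edge set splits into two matchings, each containing $\lfloor n/2\rfloor$ edges in a cycle of length $n$; doing the analogous alternation on the even cycles writes $E(\mathcal{F})=A\cup B\cup C$ with $A$ of type~I and $B,C$ of type~II. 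Lemma~\ref{lem2} gives $\Delta(G^{4}[B])\le b_4=20$, hence $\chi(G^{4}[B])\le 21$, and Lemma~\ref{lem1111}.ii) applied with $\ell=5$ and $\ell'=21$ yields that $G$ is $(1,1,4^{26})$-colorable.

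For the negative statement I would, exactly as in Proposition~\ref{2*1+3}, perform a computer search for a $3$-edge-colorable cubic graph $G$ on $n$ vertices whose line graph has diameter $4$ — i.e. any two edges of $G$ are at distance at most $4$ — and with $n$ large enough that two matchings of $G$ cannot leave only $14$ uncoloured edges, that is, $\frac{3n}{2}-2\lfloor n/2\rfloor>14$; since $n$ is even this means $n\ge 30$. If such a $G$ were $(1,1,4^{14})$-colorable, the two colours of radius $1$ would cover at most $2\lfloor n/2\rfloor=n$ edges, leaving at least $15$ edges that are pairwise at distance at most $4$, so each would need its own colour of radius $4$ — impossible with only $14$ such colours. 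Hence $G$ is not $(1,1,4^{14})$-colorable.

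I expect the main difficulty to be the existence half of the negative statement. Contrary to the $k=3$ case, where a $14$-vertex cubic graph of line-graph diameter $3$ does the job, here the graph must have at least thirty vertices, so a fully exhaustive search is costly; a targeted search among highly symmetric cubic graphs (circulants, generalized Petersen graphs, small Cayley graphs) is the realistic route, followed by the routine arithmetic checks that $a_4=10$, $b_4=20$, and that the specific type~I set given by Theorem~\ref{5*4} extends to the partition $(A,B,C)$ demanded by Lemma~\ref{lem1111}.ii). Everything else is direct bookkeeping on top of results already established in the paper.
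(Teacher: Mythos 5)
Your proposal is correct and follows essentially the same route as the paper: the $(1,1,4^{21})$ bound comes straight from Theorem~\ref{akbk}, the $(1,1,4^{26})$ bound from combining the five radius-$4$ colours for the type~I set of Theorem~\ref{5*4} with the $b_4+1=21$ colours for a type~II set via Lemma~\ref{lem1111}.ii), and the negative result from a counting argument on a cubic graph whose line graph has diameter $4$. The only part you leave open — exhibiting such a graph — is handled in the paper exactly as you anticipate, by a computer search that produced a $30$-vertex, girth-$7$, $3$-edge-colorable cubic graph, so your reduction (needing $n\ge 30$ and at least $15$ edges left uncovered by two matchings) matches the paper's argument.
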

\begin{proof}
By Proposition~\ref{akbk}, every $3$-edge-colorable cubic graph having a $2$-factor is $(1,1,4^{21})$-colorable. By Theorem~\ref{5*4}, we can color a set of type I with five colors of radius $4$. Thus, we obtain that every cubic graph having a $2$-factor is $(1,1,4^{26})$-colorable.

By exhaustive search, we have found a $3$-edge-colorable cubic graph $G$ of $30$ vertices ($G$ has girth $7$) such that the line graph of $G$ has diameter $4$, i.e., every two edges of $G$ are at distance at most $4$. Thus, since a matching of $G$ has size at most $15$, we can color at most $30$ edges with two colors of radius $1$ and we obtain that $G$ is not $(1,1,3^{14})$-colorable since there remain fifteen uncolored edges.
\end{proof}
\section{$(1,k,\ldots,k)$-coloring}
In this section, we show that for any cubic graph having a $2$-factor and any fixed integer $k$, there is a coloring with only one color of radius 1 and a finite number of colors of radius $k$.

\begin{theo}\label{ak2bk}
For any $k\ge 2$, every cubic graph having a $2$-factor is $(1,k^{a_{k}+2b_k+3})$-colorable and every $3$-edge-colorable cubic graph having a $2$-factor is $(1,k^{2b_k+2})$-colorable, where $a_k=\frac{2^{k+1}-(-1)^{k+1}-3}{3}$ and $b_k=\frac{2^{k+3}+2(-1)^{k+1}-6}{3}$.
\end{theo}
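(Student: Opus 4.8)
The plan is to follow exactly the scheme of the proof of Theorem~\ref{akbk}, the only new ingredient being that the edge set of the chosen $2$-factor is now split into \emph{one} set of type I and \emph{two} sets of type II, so that part~iii) of Lemma~\ref{lem1111} can be invoked. First I would fix a suitable $2$-factor $\mathcal{F}$. For the first statement any $2$-factor of $G$ works. For a $3$-edge-colorable cubic graph I would start from a proper $3$-edge-coloring with colour classes $M_1,M_2,M_3$ and take $\mathcal{F}=M_1\cup M_2$; then $\mathcal{F}$ has no odd cycle and $E(G)\setminus\mathcal{F}=M_3$ is a perfect matching, which is the situation that will eventually kill one of the three summands.

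Next I would exhibit a partition $E(\mathcal{F})=A\cup B\cup C$ with $A$ of type I and $B,C$ of type II. In each even cycle of $\mathcal{F}$, assign its two alternating perfect matchings to $B$ and to $C$ (contributing nothing to $A$); in each odd cycle $v_1v_2\cdots v_nv_1$ of $\mathcal{F}$, put the edge $v_nv_1$ into $A$ and alternate the remaining path $v_1v_2\cdots v_n$, which has the \emph{even} number $n-1$ of edges, between $B$ and $C$. Because the cycles of $\mathcal{F}$ are pairwise vertex-disjoint and a cycle of $\mathcal{F}$ has no chord lying in $\mathcal{F}$, two edges of $\mathcal{F}$ are adjacent in $G$ if and only if they are consecutive in a common cycle; hence neither $B$ nor $C$ has two adjacent edges, and each of $B,C$ receives $\lfloor n/2\rfloor$ edges from every cycle of length $n$. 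Thus $A$ is of type I and $B,C$ are of type II, and in the $3$-edge-colorable case $A=\emptyset$.

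Finally I would bound the three chromatic numbers and conclude. By Lemma~\ref{lem1}, $\Delta(G^{k}[A])\le a_k$, so Brooks' theorem gives $\chi(G^{k}[A])\le a_k+1$ (and of course $\chi(G^{k}[A])=0$ when $A=\emptyset$). By Lemma~\ref{lem2} applied separately to $B$ and to $C$, $\Delta(G^{k}[B])\le b_k$ and $\Delta(G^{k}[C])\le b_k$, so Brooks' theorem gives $\chi(G^{k}[B])\le b_k+1$ and $\chi(G^{k}[C])\le b_k+1$. Feeding $\ell=a_k+1$ and $\ell'=\ell''=b_k+1$ into Lemma~\ref{lem1111}.iii) shows that $G$ is $(1,k^{a_k+2b_k+3})$-colorable, and feeding $\ell=0$, $\ell'=\ell''=b_k+1$ shows that a $3$-edge-colorable cubic graph is $(1,k^{2b_k+2})$-colorable; the closed-form expressions for $a_k$ and $b_k$ are those already recorded with Lemma~\ref{lem1} and Lemma~\ref{lem2}.

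There is essentially no obstacle here beyond the bookkeeping: the only genuinely new points are the existence of the partition $A\cup B\cup C$ and the verification that the ``leftover'' set $C$ is again of type II, both of which are immediate from the alternation above. Everything else is a direct application of Lemmas~\ref{lem1}, \ref{lem2} and~\ref{lem1111} together with Brooks' theorem, all of the real structural work (the degree estimates for $G^{k}[A]$ and $G^{k}[B]$) having been done already. The most delicate aspect, if any, is simply keeping the index conventions of $a_k$ and $b_k$ consistent with those of Lemmas~\ref{lem1} and~\ref{lem2} when rewriting the constants in the stated closed form.
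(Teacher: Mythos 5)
Your proposal is correct and follows essentially the same route as the paper: any $2$-factor is split into one type-I set and two type-II sets, Lemmas~\ref{lem1} and~\ref{lem2} bound the maximum degrees, Brooks' theorem gives the chromatic numbers, and Lemma~\ref{lem1111}.iii) concludes, with $A=\emptyset$ in the $3$-edge-colorable case. The only difference is that you spell out the alternating construction of the partition $A\cup B\cup C$, which the paper dismisses with ``clearly''.
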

\begin{proof}
Let $G$ be a cubic graph and $k\ge 2$ be an integer. Let $\mathcal{F}$ be any $2$-factor of $G$.
Clearly, $E(\mathcal{F})$ can be partitioned into three sets $A$, $B$ and $C$ such that $A$ is of type I and $B,C$ are of type II.
By Lemma \ref{lem1}, $\Delta (G^{k}[A])\le a_k$. Thus, by Brooks' theorem, $G^{k}[A]$ is $(a_{k}+1)$-colorable.
By Lemma \ref{lem2}, $\Delta (G^{k}[B])\le b_k$, and $\Delta (G^{k}[C])\le b_k$. Thus, by Brooks' theorem, $G^{k}[B]$ and $G^{k}[C]$ are $(b_{k}+1)$-colorable.

Therefore, by Lemma~\ref{lem1111}.iii), $G$ is $(1,k^{2b_{k}+a_k+3})$-colorable and even $(1,k^{2b_{k}+2})$-colorable in the case $G$ is $3$-edge-colorable (by setting $A=\emptyset$).
\end{proof}

We remark that this general result is far from tight at least for small values of $k$: for $k=2$ it gives that cubic graphs having a $2$-factor are $(1,2^{9})$-colorable, but it is known since a long time that such graphs are $(2^{10})$-colorable~\cite{And}. 

For the case $k=2$ with restrictions on the graph, we can prove sharper results.
\begin{prop}
If a subcubic graph $G$ of girth at least $7$ is a subgraph of a planar cubic graph $G'$ having a $2$-factor, then $G$ is $(1,2^7)$-colorable. Moreover, if $G'$ is bridgeless, then $G$ is $(1,2^{6})$-colorable.
\end{prop}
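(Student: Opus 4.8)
The plan is to follow the proof of Proposition~\ref{g7} almost verbatim, but to use the three-set partition of Lemma~\ref{lem1111}.iii) in place of the two-set partition of part~ii), so that a single color of radius~$1$ suffices. Fix a $2$-factor $\mathcal{F}$ of $G'$ containing a minimum number of odd cycles. If $G'$ is bridgeless, then by Tait's theorem together with the four color theorem $G'$ is $3$-edge-colorable, and we may take $\mathcal{F}$ to contain no odd cycle; set $A=\emptyset$ in this case. Otherwise, by Lemma~\ref{fouqqut} there is a set $A$ of type~I among the odd cycles of $\mathcal{F}$ such that $G'^{2}[A]$ is empty.

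Next I would partition $E(\mathcal{F})\setminus A$ into two sets $B$ and $C$, both of type~II: on each even cycle of $\mathcal{F}$ take the two alternating perfect matchings, and on each odd cycle delete the unique edge lying in $A$ to obtain a path of even length and split it into its two alternating matchings. On a cycle of length $n$ each half contributes $\lfloor n/2\rfloor$ edges, no two of which are adjacent in $G'$, so $A,B,C$ is indeed a partition of $E(\mathcal{F})$ into one set of type~I and two sets of type~II. Writing $A'=A\cap E(G)$, $B'=B\cap E(G)$, $C'=C\cap E(G)$, the edges of $A'$ are pairwise at distance at least $3$ in $G'$, hence in $G$, so $\chi(G^{2}[A'])\le 1$ (and $A'=\emptyset$ when $G'$ is bridgeless). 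Since $G$ is planar of girth at least $7$, Lemma~\ref{lem7} applies to $B'$ and to $C'$ (its proof uses only planarity and girth), giving that $G^{2}[B']$ and $G^{2}[C']$ are triangle-free and planar; Grötzsch's theorem then yields $\chi(G^{2}[B'])\le 3$ and $\chi(G^{2}[C'])\le 3$.

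Finally I would run the coloring argument of Lemma~\ref{lem1111}.iii) with $G$ in place of the cubic graph (exactly as the proof of Proposition~\ref{g7} does): color $E(G)\setminus E(\mathcal{F})$, a matching of $G$, with the single color of radius~$1$, and color $A'$, $B'$, $C'$ using proper colorings of $G^{2}[A']$, $G^{2}[B']$, $G^{2}[C']$ with colors of radius~$2$. This uses $1+3+3=7$ colors of radius~$2$ in general and $0+3+3=6$ when $G'$ is bridgeless, so $G$ is $(1,2^{7})$-colorable, and $(1,2^{6})$-colorable when $G'$ is bridgeless.

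The only delicate points are bookkeeping. First, one must check that the alternating construction really produces two sets of type~II; this needs only that every cycle of $\mathcal{F}$ has length at least $3$, and uses that after deleting one edge from an odd cycle one is left with a path of even length. Second, one must observe that Lemma~\ref{lem7} and Lemma~\ref{lem1111}.iii) remain usable when their ``$G$'' is our subcubic $G$ rather than the ambient cubic $G'$: this is legitimate because the conclusion of Lemma~\ref{lem7} depends only on planarity and girth, and the proof of Lemma~\ref{lem1111}.iii) merely converts proper colorings of the auxiliary graphs $G^{k}[\cdot]$ into color classes that are packings in $G$ together with a matching for the remaining edges, all of which is insensitive to passing to a subgraph. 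Note also that, since $G'$ may have small girth, one cannot apply Lemma~\ref{lem1111}.iii) to $G'$ directly — only the $G$-version carries the girth hypothesis — which is exactly why the argument is phrased for $G$.
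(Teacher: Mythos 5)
Your proposal is correct and takes essentially the same route as the paper's own proof: Lemma~\ref{fouqqut} supplies the type-I set $A$, the rest of the $2$-factor is split into two type-II sets, Lemma~\ref{lem7} together with Grötzsch's theorem gives three radius-2 colors for each of $B'$ and $C'$, Tait's theorem plus the four color theorem removes $A$ in the bridgeless case, and Lemma~\ref{lem1111}.iii) assembles the coloring. The extra bookkeeping you do (constructing $B$ and $C$ explicitly and justifying that Lemmas~\ref{lem7} and~\ref{lem1111}.iii) still apply when restricted to the subcubic subgraph $G$) only makes explicit steps the paper leaves implicit.
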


\begin{proof}
Let $G$ be a subcubic graph of girth at least $7$ and let $G'$ be a planar cubic graph having a $2$-factor such that $G$ is a subgraph of $G'$. Let $\mathcal{F}$ be a $2$-factor of $G'$ containing a minimum number of odd cycles. By Lemma~\ref{fouqqut}, there exists in $\mathcal{F}$ a set $A$ of type I such that $G'^{2}[A]$ is an empty graph.

Tait~\cite{Tait} has proven that the four color theorem is equivalent to the statement that no snark is planar. Thus, by the four color theorem, $G$ is $3$-edge-colorable in the case $G'$ is bridgeless and the $2$-factor $\mathcal{F}$ of $G'$ (containing a minimum number of odd cycles) contains no odd cycles.

If $G'$ contains a bridge (contains no bridges, respectively), let $B$ and $C$ be two sets of type II forming a partition of $\mathcal{F}-A$ ($\mathcal{F}$, respectively). Let $B'=B\cap E(G)$ and $C'= C\cap E(G)$.
Thus, by Lemma~\ref{lem7}, both $G^{2}[B']$ and $G^{2}[C']$ have no triangle and are planar. By the famous Grötzsch's theorem, both $G^{2}[B']$ and $G^{2}[C']$ are 3-colorable. Finally, by Lemma~\ref{lem1111}.iii), $G$ is $(1,2^{7})$-colorable and even $(1,2^{6})$-colorable in the case $G'$ is bridgeless.

\end{proof}

As for the previous section, note that there exist non $(1,2^6)$-colorable cubic graphs, the smallest one being the graph on 12 vertices depicted in Figure~\ref{non11222}.

We give the following results for sequences of type $(1,3,\ldots,3)$ or $(1,4,\ldots,4)$.

\begin{prop}
Every cubic graph $G$ having a $2$-factor is $(1,3^{20})$-colorable and $(1,4^{47})$-colorable. Moreover, if $G$ is 3-edge-colorable, then $G$ is $(1,3^{18})$-colorable and $(1,4^{42})$-colorable. Also there exists two 3-edge-colorable cubic graphs $G'$ and $G''$ such that $G'$ is not $(1,3^{13})$-colorable and $G''$ is not $(1,4^{29})$-colorable.
\end{prop}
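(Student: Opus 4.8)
The plan is to follow the same two-part pattern as in Propositions~\ref{2*1+3} and~\ref{2*1+4}: derive the upper bounds by combining Lemma~\ref{lem1111}.iii) with the improved treatment of type I sets from Section~3, and obtain the lower bounds by reusing the two extremal graphs already produced by computer in the proofs of those propositions.

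\textbf{Upper bounds.} First I would fix a $2$-factor $\mathcal{F}$ of $G$ with a minimum number of odd cycles and partition $E(\mathcal{F})=A\cup B\cup C$, where $A$ is of type I and $B,C$ are of type II (the two alternating matchings along each even cycle of $\mathcal{F}$, together with the corresponding matchings of the even paths obtained from the odd cycles after deleting the edge of $A$; that such a split exists is immediate). For $k=3$ I would choose $A$ exactly as in the proof of Theorem~\ref{2*3}, so that $\Delta(G^{3}[A])\le 1$, hence $\chi(G^{3}[A])\le 2$; for $k=4$ I would choose $A$ as in the proof of Theorem~\ref{5*4}, so that $\Delta(G^{4}[A])\le 4$, hence by Brooks' theorem $\chi(G^{4}[A])\le 5$. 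By Lemma~\ref{lem2}, $\Delta(G^{k}[B]),\Delta(G^{k}[C])\le b_{k}$, and since $b_{3}=8$ and $b_{4}=20$, Brooks' theorem gives $\chi(G^{k}[B]),\chi(G^{k}[C])\le b_{k}+1$, i.e. at most $9$ colors when $k=3$ and at most $21$ colors when $k=4$. Feeding these bounds into Lemma~\ref{lem1111}.iii) yields a $(1,3^{2+9+9})=(1,3^{20})$-coloring and a $(1,4^{5+21+21})=(1,4^{47})$-coloring. When $G$ is $3$-edge-colorable I would instead take $\mathcal{F}$ with no odd cycle, so that $A=\emptyset$; the same count then gives $(1,3^{9+9})=(1,3^{18})$ and $(1,4^{21+21})=(1,4^{42})$.

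\textbf{Lower bounds.} Here I would reuse the two graphs already found by exhaustive search: let $G'$ be the $3$-edge-colorable cubic graph on $14$ vertices of Proposition~\ref{2*1+3} whose line graph has diameter $3$, and let $G''$ be the $3$-edge-colorable cubic graph on $30$ vertices of Proposition~\ref{2*1+4} whose line graph has diameter $4$. In $G'$ every two edges are at distance at most $3$, so each color class of radius $3$ consists of a single edge; since a matching of $G'$ has at most $7$ edges and $|E(G')|=21$, any $(1,3^{m})$-coloring forces $m\ge 21-7=14$, so $G'$ is not $(1,3^{13})$-colorable. Likewise, in $G''$ each color class of radius $4$ is a single edge, a matching has at most $15$ edges, and $|E(G'')|=45$, so any $(1,4^{m})$-coloring forces $m\ge 45-15=30$, whence $G''$ is not $(1,4^{29})$-colorable.

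There is no genuine obstacle in this argument: every substantive ingredient — the selection of a type I set $A$ with $\Delta(G^{k}[A])$ small for $k\in\{3,4\}$, the degree bound $\Delta(G^{k}[B])\le b_{k}$ for type II sets, and the construction of the two extremal graphs — has already been established. The only point worth flagging is that the gap between the stated lower and upper bounds (for instance $14$ versus $20$ when $k=3$) is wide, so the statement is almost certainly not tight; closing it would require at least a better bound on $\chi(G^{k}[B])$ for type II sets, presumably through a $K_{b_k+1}$-free argument in the spirit of Proposition~\ref{K5}.
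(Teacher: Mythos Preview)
Your proposal is correct and follows essentially the same approach as the paper: combine Lemma~\ref{lem1111}.iii) with the refined type~I bounds from Theorems~\ref{2*3} and~\ref{5*4} together with the type~II bound $b_k$ from Lemma~\ref{lem2} for the upper bounds, and reuse the two extremal graphs from Propositions~\ref{2*1+3} and~\ref{2*1+4} (now with only one matching color) for the lower bounds. The paper's own proof is terser but invokes precisely the same ingredients and counts.
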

\begin{proof}
We can easily prove that $G$ is $(1,3^{20})$-colorable and even $(1,3^{18})$-colorable in the case $G$ is $3$-edge-colorable using the same arguments than in the proof of Proposition~\ref{2*1+3}. Using the same arguments than in the proof of Proposition~\ref{2*1+4}, we can prove analogous results for sequences of type $(1,4,\ldots,4)$.

The graph $G'$ is the non $(1,1,3^{6})$-colorable cubic graph of Proposition~\ref{2*1+3} and $G''$ is the non $(1,1,4^{14})$-colorable cubic graph of Proposition~\ref{2*1+4}. Since the line graph of $G'$ has diameter $3$ and the line graph of $G''$ has diameter 4 and since the matchings of $G'$ has size at most $7$ and the matching of $G''$ has size at most $15$, we obtain that $G'$ is not $(1,3^{13})$-colorable and that $G''$ is not $(1,4^{29})$-colorable.

\end{proof}
The computations we have made let us think that seven colors of radius 2 are enough in general and less with girth restrictions.
We end this section by stating these two open problems.
\begin{quest}
 Is it true that all cubic graphs are $(1,2^7)$-colorable ?
\end{quest}

\begin{quest}
 Is it true that all cubic graphs of girth at least 5 are $(1,2^5)$-colorable ?
\end{quest}

\section{$(1,2,\ldots,k)$-coloring}
We finish this paper by proving that there is no integer $k$ such that every subcubic graph is $(1,2,\ldots,k)$-colorable, i.e., that the line graphs of subcubic graphs have arbitrary large packing chromatic number. 
We recall that the packing chromatic number of a graph $G$ is the smallest integer $k$ such that there exists a partition of $V(G)$ into $k$ subsets $\{X_{1},\ldots, X_{k}\}$, each $X_i$ being a set of vertices at pairwise distance at least $i+1$. Note that it has already been proven that for every fixed $k$ and $g\ge2k+2$, almost every cubic graph of girth at least $g$ of sufficiently large order has packing chromatic number greater than $k$~\cite{Bal17}.

Adding a leaf on a vertex $u$ of a graph $G$ is an operation that consists in adding a new vertex $v$ and the edge $uv$.

Let $T_1$ be the graph $K_{1,3}$ and let $T'_1$ be the graph $K_{1,3}$ for which we have added two leaves on the same vertex of degree $1$. By induction on $i$, let $T_i$ ($T'_i$, respectively), for $i\ge2$, be the graph constructed from $T_{i-1}$ ($T'_{i-1}$, respectively) by adding two leaves on each vertex $u$ of degree $1$.

\begin{prop}\label{trigraph}
For any integer $k$, there exists an integer $N$ such that $T_N$ is not $(1,2,\ldots,k)$-colorable.
\end{prop}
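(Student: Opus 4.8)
The plan is to show that as $N$ grows, the tree $T_N$ forces arbitrarily many "large-radius" colors near its root, so that no fixed sequence $(1,2,\ldots,k)$ suffices. The key structural feature of $T_N$ is that it is built by iteratively attaching two leaves to every current leaf; consequently it contains, rooted at the center of the original $K_{1,3}$, a complete binary-like tree of depth $N$. The crucial observation is this: in $T_N$, a color of radius $i$ used on an edge $e$ "blocks" that color from being reused on any edge within distance $i$ of $e$, and because the tree branches, the ball of radius $i$ around an edge high in the tree contains exponentially many edges, while the number of edges at each fixed "level" also grows. First I would set up the recursion precisely, defining the levels $L_0, L_1, \ldots, L_N$ of edges of $T_N$ according to their distance from the central vertex, and recording that $|L_j|$ roughly doubles with $j$ (up to the constant coming from the $K_{1,3}$ start).

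Next I would argue by a counting/density argument local to a subtree. Fix $k$ and suppose for contradiction that $T_N$ has a $(1,2,\ldots,k)$-coloring for every $N$; I would derive a contradiction for $N$ large. Consider the subtree hanging below an edge $e$ at level roughly $N/2$: it is again (a copy of) a tree of the same recursive shape of depth about $N/2$, so by choosing $N$ large we may assume this subtree is as deep as we like. Within such a deep subtree, colors $1,\ldots,k$ of radii $1,\ldots,k$ can cover only a bounded-density fraction of the edges: a color of radius $i$ is an $i$-packing, and in a tree of maximum degree $3$ an $i$-packing of edges has at most a $O(1/2^{i/2})$-fraction — more precisely a fraction bounded away from $1$ uniformly — of the edges at each sufficiently deep level, because any two chosen edges must be pairwise far apart while the level sets grow geometrically. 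Summing the $k$ contributions, the total fraction of edges of a deep level that can be colored is at most some constant $c_k < \infty$ times the appropriate decaying factor; once the level is deep enough this is strictly less than the number of edges present, leaving an edge uncolorable. This is the contradiction.

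The step I expect to be the main obstacle is making the "bounded-density of an $i$-packing in the branching tree" estimate rigorous and uniform in the right way — in particular pinning down, for an $i$-packing $X_i$, a sharp enough bound on $|X_i \cap L_j|$ in terms of $|L_j|$ (and on how an edge of $X_i$ at a lower level consumes capacity at level $j$), so that the sum over $i=1,\ldots,k$ of these bounds provably falls below $|L_j|$ for $j$ large. A clean way to handle this is to work not at a single level but with a weighting or potential function on the subtree that is subadditive under the packing constraints, or equivalently to track, for each color class, the number of edges it occupies in a fixed deep "window" of consecutive levels and show each color occupies a vanishing proportion of that window as the window is pushed deeper. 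I would also need the easy but necessary remark that $T_N$ is indeed subcubic (every non-leaf vertex of $T_i$ has degree $3$ by construction), so that the statement is about the intended class. Alternatively — and this is likely the route the paper takes — one can bypass explicit density bounds by an inductive argument: show that if $T_{N}$ admits a $(1,\ldots,k)$-coloring then a suitably chosen sub-copy of $T_{N-f(k)}$ admits a $(1,\ldots,k-1)$-coloring, because near the root the radius-$k$ color is so constrained that some deep branch avoids it entirely; iterating $k$ times reduces to the trivial non-colorability of $T_M$ with the empty palette. The hard part in that version is controlling how much depth is lost at each step (the function $f(k)$) so that finitely many reductions still leave a nontrivial tree, and arguing that some branch genuinely avoids color $k$, which again rests on the branching factor outpacing the reach of a radius-$k$ packing.
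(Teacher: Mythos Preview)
Your first approach --- bounding the density of each $i$-packing and summing --- is the right strategy and is essentially what the paper does, but the paper executes it more directly than the level-by-level analysis you outline. Instead of tracking $|X_i\cap L_j|$ level by level, the paper bounds each $i$-packing globally: for odd $i$ the ball of edges within distance $i$ of a given edge in a deep $T_N$ is a copy of $T_{(i+1)/2}$, whose line graph has diameter $i$ and which has $3(2^{(i+1)/2}-1)$ edges; for even $i$ one introduces an auxiliary tree $T'_{i/2}$ (obtained from $K_{1,3}$ by first attaching two leaves at one leaf, then iterating) with line-graph diameter $i$ and $2^{i/2+2}-3$ edges. Since these balls around distinct edges of an $i$-packing are essentially disjoint, any $i$-packing in $T_N$ has size at most (asymptotically) $|E(T_N)|$ divided by the relevant ball size. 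The step you correctly flag as the obstacle then reduces to a single numerical check: the paper sums the reciprocals of these ball sizes over \emph{all} $i\ge 1$ and obtains
\[
\sum_{i\ge 1}\frac{1}{3(2^{i}-1)}+\sum_{i\ge 1}\frac{1}{2^{i+2}-3}<0.8793<1,
\]
so the $k$ color classes together cannot cover $E(T_N)$ once $N$ is large. This concrete convergence-below-$1$ computation is the missing ingredient in your sketch; once you have it, the per-level bookkeeping and windowed potentials are unnecessary.

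Your alternative inductive route --- peel off color $k$ by locating a deep branch that avoids it --- is not what the paper does and faces a real obstacle: a radius-$k$ packing can meet every branch sparsely (one edge every $k$ levels suffices), so there is no reason any sub-copy of $T_{N-f(k)}$ should be entirely free of color $k$.
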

\begin{proof}
By induction on $i$, we begin by proving that the diameter of the line graph of $T_i$ is $2i-1$.
For $i=1$, the diameter of the line graph of $T_1$ is $1$. Suppose that the diameter of the line graph of $T_i$ is $2i-1$. By definition, we obtain that the diameter of the line graph $T_{i+1}$ is $2i+1$. Similarly, we can prove that the diameter of the line graph of $T'_i$ is $2i$.

Moreover, by induction on $i$, we prove that $|E(T_i)|=3 (2^{i}-1)$.
For $i=1$, $|E(T_i)|=3$. Suppose that $|E(T_i)|=3 (2^{i}-1)$. It is trivial to note that there are $3 (2^{i}-1)- 3 (2^{i-1}-1)=3\ 2^{i-1}$ vertices of degree $1$ in $T_i$. Thus, there are $2 (3\ 2^{i-1})$ new edges in $T_{i+1}$. Consequently, $|E(T_{i+1})|=3 (2^{i}-1) +2(3 \ 2^{i-1})=3(2^{i+1}-1)$.

Also, by induction on $i$, we prove that $|E(T'_i)|=2^{i+2}-3$.
For $i=1$, $|E(T'_i)|=5$. Suppose that $|E(T'_i)|=2^{i+2}-3$. It is trivial to note that there are $2^{i+2}-3 -2^{i+1}+3=2^{i+1}$ vertices of degree $1$ in $T'_i$. Thus, there are $2 (2^{i+1})$ new edges in $T'_{i+1}$. Consequently, $|E(T'_{i+1})|=2^{i+2}-3+ 2( 2^{i+1})=2^{i+3}-3$.

Let $N$ be a sufficiently large integer (compared to $k$). Since the diameter of the line graph of $T_i$ is $2i-1$, an upper bound on the size of an $i$-packing (a set of edges at pairwise distance at least $i$) in $T_{N}$, for $i$ an odd integer, converges towards $|E(T_{N})|/|E(T_{(i+1)/2})|$.
Moreover, since the diameter of the line graph of $T'_i$ is $2i$, an upper bound on the size of an $i$-packing in $T_{N}$, for $i$ an even integer, converges towards $|E(T_{N})|/|E(T'_{i/2})|$.
Thus, if $T_{N}$ is $(1,2,\ldots,k)$-colorable and $\epsilon$ is an arbitrary small constant, then $$\sum^{\lceil k/2 \rceil}_{i=1}(|E(T_{N})|/|E(T_{i})| ) +\sum^{\lfloor k/2 \rfloor}_{i=1}(|E(T_{N})|/|E(T'_{i})| )-\epsilon \ge |E(T_{n})|. $$

However, by calculation, $$\sum^{\lceil k/2 \rceil}_{i=1}(1/|E(T_{i})| ) +\sum^{\lfloor k/2 \rfloor}_{i=1}(1/|E(T'_{i})|)\le \sum_{i=1}^{\infty} 1/(3 (2^{i}-1))+ \sum_{i=1}^{\infty} 1/(2^{i+2}-3)<0.8793<1.$$ Thus, we obtain a contradiction and $T_{N}$ is not $(1,2,\ldots,k)$-colorable.

\end{proof}
 Since $T_N$ is subcubic for any integer $N$ and since for any integer $n>N$, $T_n$ contains $T_N$ as subgraph, we obtain the following corollary.
\begin{cor}
There exist non $(1,2,\ldots,k)$-colorable subcubic graphs of arbitrary large order for every integer $k$.
\end{cor}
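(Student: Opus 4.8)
The plan is to obtain the corollary directly from Proposition~\ref{trigraph}, using only the elementary fact that being non-$S$-colorable is inherited by supergraphs.

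Fix an integer $k$. Proposition~\ref{trigraph} provides an integer $N$ for which $T_N$ is not $(1,2,\ldots,k)$-colorable. First I would check that every $T_i$ is subcubic: $T_1=K_{1,3}$ has maximum degree $3$, and each inductive step in the definition of $T_i$ adds two leaves only on vertices of current degree $1$, turning them into degree-$3$ vertices and leaving all other degrees unchanged; hence $\Delta(T_i)=3$ for all $i\ge 1$. In particular $T_N$ is a subcubic (and connected, being a tree) graph.

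Next I would invoke the reduction recorded in the introduction: if $H$ is a subgraph of $G$ then $d_H(e,e')\ge d_G(e,e')$ for all $e,e'\in E(H)$, so any $S$-coloring of $G$ restricts to an $S$-coloring of $H$; equivalently, if $H$ is not $S$-colorable then $G$ is not $S$-colorable. For every $n>N$, the graph $T_n$ is obtained from $T_N$ by repeatedly adding leaves, so $T_N\subseteq T_n$, and therefore $T_n$ is not $(1,2,\ldots,k)$-colorable.

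Finally, since $|V(T_n)|=|E(T_n)|+1=3(2^n-1)+1$ is unbounded as $n\to\infty$ while each $T_n$ is subcubic, the graphs $T_n$ with $n>N$ form a family of non-$(1,2,\ldots,k)$-colorable subcubic graphs of arbitrarily large order; as $k$ was arbitrary, the corollary follows. I do not expect any real obstacle here: all the difficulty lives in Proposition~\ref{trigraph}, and the corollary is merely the observation that this non-colorability persists along the chain $T_1\subset T_2\subset\cdots$ and that the orders in this chain tend to infinity.
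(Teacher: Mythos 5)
Your proposal is correct and matches the paper's own argument, which is exactly the one-sentence observation preceding the corollary: $T_N$ is subcubic, $T_n\supseteq T_N$ for all $n>N$, and non-$S$-colorability passes to supergraphs by the distance-monotonicity remark from the introduction. Your additional verifications (that each $T_i$ is subcubic and that $|V(T_n)|=3(2^n-1)+1\to\infty$) are accurate and only make explicit what the paper leaves implicit.
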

Note also that there exists non $(1,2,\ldots,k)$-colorable cubic graphs for every integer $k$, since we can easily construct a cubic graph containing $T_{N}$ as subgraph, for every integer $N$.
\section*{Acknowledgments} 
We thank Borut Lu\v{z}ar for pointing out references~\cite{Fou2,Pay}.

\end{document}